\documentclass[openacc]{rsproca_new}
\usepackage{amsthm}
\usepackage{comment}
\usepackage[nameinlink]{cleveref}
\usepackage{bm}
\usepackage{todonotes}
\usepackage[most]{tcolorbox}
\usepackage{mathtools}
\usepackage{pgfplots}
\usepackage{etoolbox}
\makeatletter
\AtBeginDocument{%
  \patchcmd{\@setref}{\bfseries ??}{\bfseries\textcolor{red}{??}}{}{}%
  \patchcmd{\@citex}{\reset@font\bfseries ?}{\reset@font\bfseries\textcolor{red}{?}}{}{}%
}
\makeatother
\newtoggle{showhfe}
\togglefalse{showhfe}
\newcommand{\hfe}[1]{\iftoggle{showhfe}{\begingroup\color{teal}#1\endgroup}{#1}}
\newtoggle{showcla}
\togglefalse{showcla}
\newcommand{\cla}[1]{\iftoggle{showcla}{\begingroup\color{magenta}#1\endgroup}{#1}}
\definecolor{uzcolor}{RGB}{123, 31, 162}
\newtoggle{showuz}
\togglefalse{showuz}
\newcommand{\uz}[1]{\iftoggle{showuz}{\begingroup\color{uzcolor}#1\endgroup}{#1}}

\newtoggle{showuzfix}
\togglefalse{showuzfix}
\newcommand{\uzfix}[1]{\iftoggle{showuzfix}{\begingroup\color{uzcolor}#1\endgroup}{#1}}

\definecolor{oscolor}{RGB}{6, 84, 101}
\newtoggle{showos}
\togglefalse{showos}
\newcommand{\os}[1]{\iftoggle{showos}{\begingroup\color{oscolor}#1\endgroup}{#1}}

\newtoggle{showfix}
\togglefalse{showfix}
\newcommand{\fix}[1]{\iftoggle{showfix}{\begingroup\color{red}#1\endgroup}{#1}}

\newcommand{\dotcross}{\mathbin{\ooalign{
  \hfil$\times$\hfil\cr
  \noalign{\kern-1.0ex}
  \hfil$\cdot$\hfil\cr
}}}

\newtheorem{theorem}{\bf Theorem}[section]

\newtheorem{corollary}{\bf Corollary}[section]
\definecolor{camred}{RGB}{163, 28, 48}
\definecolor{oxblue}{RGB}{0, 33, 71}
\definecolor{oxgold}{RGB}{190, 158, 90}
\definecolor{lightblue}{RGB}{230, 240, 250}
\definecolor{lightred}{RGB}{248, 238, 240} 
\newcommand{\mat}[1]{\underline{\underline{#1}}}
\newcommand{\abs}[1]{\left|#1\right|}
\newcommand{\dd}{\,\mathrm{d}}
\newcommand{\cchevrons}[1]{\left\langle \!\left \langle #1 \right\rangle \! \right\rangle}
\let\vec\bm

\newtcolorbox{oxbox}[1][]{
  colback=lightblue,
  colframe=oxblue,
  colbacktitle=oxblue,
  coltitle=white,
  fonttitle=\bfseries,
  boxrule=0.6pt,
  arc=3pt,
  left=8pt, right=8pt, top=6pt, bottom=6pt,
  title=#1
}
\newtcolorbox{cambox}[1][]{
  colback=lightred,
  colframe=camred!60!black,
  colbacktitle=camred,
  coltitle=white,
  fonttitle=\bfseries,
  boxrule=0.6pt,
  arc=3pt,
  left=8pt, right=8pt, top=6pt, bottom=6pt,
  title=#1
}
\newtheorem{lemma}{Lemma}
\newtheorem{proposition}{Proposition}
\newtheorem{remark}{Remark}

\usepgfplotslibrary{groupplots}
\pgfplotsset{compat=1.18}

\titlehead{Research}

\definecolor{seaborngreen}{rgb}{0.3333333333333333, 0.6588235294117647, 0.40784313725490196}  
\definecolor{seaborncyan}{rgb}{0.39215686274509803, 0.7098039215686275, 0.803921568627451}  
\definecolor{seabornblue}{rgb}{0.2980392156862745, 0.4470588235294118, 0.6901960784313725}  
\definecolor{seabornpurple}{rgb}{0.5058823529411764, 0.4470588235294118, 0.6980392156862745}  
\definecolor{seabornred}{rgb}{0.7686274509803922, 0.3058823529411765, 0.3215686274509804}  
\definecolor{seabornorange}{rgb}{0.958, 0.476, 0.206}  
\definecolor{seabornsand}{rgb}{0.8, 0.7254901960784313, 0.4549019607843137}  

\begin{document}

\title{Kinetic derivation of thermal viscous models for nematic liquid crystal dynamics}

\author{
P.~E.~Farrell$^{1,2}$, J.~M\'alek$^{2}$, O.~Sou\v{c}ek$^{2}$ and U.~Zerbinati$^{1}$}

\address{$^{1}$University of Oxford, Mathematical Institute, Andrew Wiles Building, Woodstock Road, Oxford OX2 6GG, UK\\
$^{2}$Mathematical Institute, Faculty of Mathematics and Physics, Charles University, Sokolovsk\'a 83, 186 75 Praha 8, Czech Republic
}
\subject{}

\keywords{nematic liquid crystals; Leslie--Ericksen-type models; kinetic theory; Chapman--Enskog expansion; moment closure; Bhatnagar--Gross--Krook (BGK) operator; maximisation of entropy production; thermal flow; compressible and incompressible viscous fluid}

\corres{Umberto Zerbinati\\
\email{zerbinati@maths.ox.ac.uk}}


\begin{abstract}
We develop a macroscopic thermodynamic theory of nematic liquid crystals starting from a kinetic theory of ordered fluids with a collision operator of Bhatnagar--Gross--Krook (BGK) type. The kinetic description incorporates mean-field alignment interactions through a Vlasov potential and relies on a separation of time scales, with orientational relaxation occurring on a faster time scale than translational momentum relaxation. At the continuum level, we establish the balance equations for mass, linear and angular momentum, energy, and entropy. Using the zeroth- and first-order Chapman--Enskog expansions, we derive a constitutive equation for the Helmholtz free energy and identify the associated structural form of the entropy production rate. We then exploit additional information from the kinetic description to determine a constitutive relation for the entropy production rate itself. Finally, by applying the constrained maximisation procedure of Rajagopal and Srinivasa, we obtain constitutive equations for the Cauchy stress and couple-stress tensors, as well as for the energy and entropy fluxes. \uz{In this way we generalise the recent inviscid kinetic theory of Farrell, Russo, and Zerbinati to account for viscous, thermal, and spin-diffusive effects, using the simplest BGK-type approximation of the collision operator.} Both compressible and incompressible variants of the theory are presented.
\end{abstract}

\begin{fmtext} In memory of our inspiring mentor and teacher, Professor K.~R.~Rajagopal, \textborn~November 24, 1950 -- \textdied~March 20, 2025. \end{fmtext}

\maketitle
\section{Introduction}
In recent work \cite{farrell}, a compressible \uz{\emph{inviscid}} variant of the Leslie--Ericksen equations has been derived from kinetic considerations. The approach in \cite{farrell} builds upon the kinetic theory of non-spherical particles introduced by Curtiss' school \cite{curtissI, curtissII, curtissIII, curtissIV, curtissV}, employing a novel moment closure inspired by selection procedures in rational thermodynamics to account for anisotropy through the Ericksen stress tensor.

A key limitation of that approach is that the closure is performed around an already aligned state, meaning that collective orientational behaviour is effectively imposed rather than emerging from the dynamics. While this assumption is appropriate in the context of polyatomic gases under strong external fields, it is insufficient for liquid crystals, where the isotropic-to-nematic transition is a fundamental emergent phenomenon. Furthermore, the model derived in \cite{farrell} is strictly inviscid.

In this work, we overcome both limitations to derive a compressible \uz{\emph{viscous}} variant of the Leslie--Ericksen equations. To address the physics of emergent alignment, we take a new starting point: instead of the Curtiss theory, we begin from the kinetic theory of ordered fluids proposed in \cite{carrillo}. This framework naturally incorporates mean-field alignment interactions and provides an entropy structure that enables a consistent description of the transition from isotropic to aligned phases within the kinetic formulation.

To address the macroscopic viscous behaviour, we apply the thermodynamically consistent moment closure recently \cla{explored} for the classical Boltzmann equation in \cite{malek}\footnote{Conceptually, the work of \cla{\cite{malek}} falls within the broader programme initiated by Truesdell and Muncaster \cite{Truesdell}, which aims to recast the kinetic theory of Maxwellian molecules as a branch of rational thermodynamics.}. This hybrid Chapman--Enskog--Rajagopal--Srinivasa procedure combines a Chapman--Enskog expansion---used to determine the structure of entropy production---with the constrained maximisation principle of Rajagopal and Srinivasa \cite{rajagopalSrinivasa}.
By combining this kinetic theory on an order-parameter manifold with the Chapman--Enskog--Rajagopal--Srinivasa closure, we generalise the inviscid theory of \cite{farrell} to include viscous, heat-conducting, and spin-diffusive effects. We also show that augmenting the Rajagopal--Srinivasa maximisation with a Lagrange multiplier enforcing the divergence-free constraint delivers, with no further work, the corresponding \emph{incompressible} model.

Our approach does not recover the full Leslie viscous stress tensor with its five independent viscosities, because we use an isotropic collision operator of Bhatnagar--Gross--Krook (BGK) type to make the calculations tractable. Instead, it yields a viscous stress tensor governed by a single viscosity parameter that does not coincide with any single classical Leslie coefficient. We conjecture that recovering the complete Leslie viscous stress tensor is possible with an anisotropic BGK collision operator involving several relaxation timescales. Investigating this is the subject of ongoing work.


This work is organised as follows. In \Cref{sec:kt} we summarise the kinetic theory of ordered fluids proposed in \cite{carrillo} \uz{and, in \Cref{sec:bgk}, the classical single-relaxation BGK collision operator that approximates the full Boltzmann operator throughout the rest of the paper, preserving its conservation and entropy-production structure}.
In \Cref{sec:hyd} we present the classical derivation of the balance equations for mass, momentum, energy, and a generalised angular momentum, starting from the kinetic theory of ordered fluids.
In \Cref{sec:hthm} we discuss the H-theorem for the kinetic theory of ordered fluids and the mechanism behind the emergence of alignment at sufficiently low temperature under the effect of specific types of interaction potentials. In the same section we also introduce the Helmholtz free energy of the kinetic system and prove the kinetic identity relating the integrated Vlasov power to the time derivative of the mean-field interaction energy. We then use the resulting variational structure to derive the Gibbs equilibrium distribution as a constrained minimiser of the free energy.
In \Cref{sec:ce} we use the Chapman--Enskog expansion for the kinetic theory of ordered fluids to compute the entropy production, and in \cla{\Cref{sec:hybrid}} we combine constrained maximisation of the entropy production with ideas from \cite{farrell} to derive the constitutive relations for a compressible \uz{\emph{viscous}} variant of the Leslie--Ericksen equations\uz{. We recast the angular-momentum balance as an equation for the nematic director, summarise the resulting closed system, and derive its incompressible counterpart by constrained maximisation}.

\section{A kinetic theory of ordered fluids}
\label{sec:kt}
Calamitic fluids consist of rod-like, i.e.~calamitic, molecules that typically interact via a hard potential\footnote{We refer to an interaction as hard when the corresponding collision kernel increases with the relative velocity of the particles. In \cite{malek}, the authors discuss how to generalise the BGK approximation to more general Boltzmann-type kinetic models with hard interaction kernels.}. In his seminal work \cite{onsager}, Onsager showed that a system of molecules interacting via a steric potential can undergo a phase transition from an isotropic to a liquid crystalline state \cla{as the density of the system rises above a critical value}\footnote{Onsager's original work was based on a virial expansion of the free energy. Such expansions are a classical tool in the statistical mechanical treatment of gases and are rigorously justified primarily in dilute regimes, similarly to the Boltzmann--Grad limit adopted in \cite{carrillo}. A modern presentation of Onsager's theory can be found in \cite{virgaOnsager}, where the author also extends the range of validity of the theory.}.

The liquid crystalline state is characterised by the emergence of collective behaviour in the form of molecular alignment, as the system tends to maximise its entropy, much like matches in a box tend to align when shaken. The emergence of alignment suggests that this phase possesses orientational order. Accordingly, we refer to a calamitic fluid in this phase as an ordered fluid. In this section, we give a brief introduction to the kinetic theory of ordered fluids proposed in \cite{carrillo}, which describes fluids whose molecular constituents are endowed with a microstructure---here, orientation---that exhibits collective behaviour, such as alignment.

The kinetic theory of ordered fluids proposed in \cite{carrillo} is based on the hypothesis that the molecular interactions are of weak-order nature, see \cite[Section 3.3]{carrillo}, \uzfix{i.e.~that only the interactions between the molecular order parameters are long-ranged,} and that there is a separation of scales between the relaxation time of the microstructural degrees of freedom and the relaxation time of the translational degrees of freedom, i.e.~the microstructural degrees of freedom relax much faster than the translational ones. A detailed discussion of the validity of these assumptions can be found in \cite[Remark 3.7]{carrillo}. Under these assumptions, the resulting kinetic equation takes the form

\begin{equation}
	\label{eq:cfmz}
	\partial_t f + m^{-1} \vec{p} \cdot \nabla_{\vec{x}} f + \mat{B}(\nu)^{-1}\vec{\varsigma} \cdot \nabla_{\nu} f + \vec{\mathcal{V}}\cdot \nabla_{\vec{\varsigma}} f = \mathcal{Q}(f,f),
\end{equation}
where $f = f(\vec{x}, \vec{p}, \nu, \vec{\varsigma}, t)$ is the distribution function of particles at time $t$, position $\vec{x}$, with linear momentum $\vec{p}$, microstructural configuration $\nu$, and conjugate momentum $\vec{\varsigma}$ associated with the microstructural configuration, and where $\mat{B}(\nu)$ is a generalised inertia tensor that may depend on $\nu$. The operator $\mathcal{Q}$ is a Boltzmann-type collision operator encoding the interactions between particles, while $\vec{\mathcal{V}}$ is a mean-field torque (referred to as the Vlasov torque) generated by an interaction potential $\mathcal{W}$ that encodes the effect of the interactions on the microstructural configuration degrees of freedom and is defined as
\begin{equation}
	\label{eq:potential}
	\vec{\mathcal{V}}(\vec{x}, \nu, t) = -\int_{\mathbb{R}^3} \int_{\mathbb{R}^3} \int_{\mathcal{M}} \int_{\mathbb{R}^3} \nabla_{\nu} \mathcal{W}(\abs{\vec{x} - \vec{x}_*}, \nu, \nu_*) f(\vec{x}_*, \vec{p}_*, \nu_*, \vec{\varsigma}_*, t) \, d\vec{x}_* d\vec{p}_* d\nu_* d\vec{\varsigma}_*,
\end{equation}
where $\mathcal{W}$ is a kernel encoding the strength of the interaction between particles at distance $\abs{\vec{x} - \vec{x}_*}$ and with microstructural configurations $\nu$ and $\nu_*$. We require this kernel to be rotationally and translationally invariant, and \emph{symmetric under particle exchange}, i.e.
\begin{equation}
	\label{eq:Wsym}
	\mathcal{W}(\abs{\vec{x}-\vec{x}_*}, \nu, \nu_*) = \mathcal{W}(\abs{\vec{x}-\vec{x}_*}, \nu_*, \nu),
\end{equation}
for all $\vec{x},\vec{x}_*\in\mathbb{R}^3$ and $\nu,\nu_*\in\mathcal{M}$.

In particular, following Capriz's original work \cite{capriz}, the microstructural configuration $\nu$ is assumed to belong to a Riemannian manifold $\mathcal{M}$, which encodes the nature of the microstructure. Together with the manifold $\mathcal{M}$, we also need to study the Lie group action $\mathcal{A}:\mathrm{SO}(3)\times \mathcal{M} \to \mathcal{M}$, which describes the effect of a rotation on the microstructural configuration. We call the tuple $(\mathcal{M}, \mathcal{A})$ the order-parameter manifold. With these tools in place, rotational invariance of the interaction kernel $\mathcal{W}$ means that
\begin{equation}
	\label{eq:rotinv}
	\mathcal{W}(\abs{\vec{x} - \vec{x}_*}, \nu, \nu_*) = \mathcal{W}(\abs{\mat{Q}(\vec{x} - \vec{x}_*)}, \mathcal{A}(\mat{Q}, \nu), \mathcal{A}(\mat{Q}, \nu_*)),
\end{equation}
for any $\mat{Q} \in \mathrm{SO}(3)$, $\vec{x}, \vec{x}_* \in \mathbb{R}^3$, and $\nu, \nu_* \in \mathcal{M}$.
In \cite[Section 2.2]{carrillo} it is proved that the collision operator $\mathcal{Q}$ is such that
\begin{equation}
	\label{eq:collinv}
	\int_{\mathcal{M}} \int_{\mathbb{R}^3} \int_{\mathbb{R}^3} \mathcal{Q}(f,f) \psi \, d\vec{p} d\nu d\vec{\varsigma} = 0, \; \forall \psi \in \left\{m, \vec{p}, \frac{1}{2}\left(\frac{\abs{\vec{p}}^2}{m} + \vec{\varsigma}^T \mat{B}(\nu)^{-1} \vec{\varsigma}\right), A_{\vec{x}}(\mat{\Omega})\cdot \vec{p} + A_{\nu}(\mat{\Omega})\cdot \vec{\varsigma} \right\},
\end{equation}
where $\mat{\Omega} \in \mathfrak{so}(3)$ is an arbitrary element of the Lie algebra, $A_{\vec{x}}$ is the infinitesimal generator of the canonical group action of $\mathrm{SO}(3)$ on $\mathbb{R}^3$, i.e.\ $\mathcal{A}_{\mathrm{canonical}}(\mat{Q}, \vec{x}) = \mat{Q}\vec{x}$, and $A_\nu$ is the infinitesimal generator of the group action $\mathcal{A}$.
Furthermore, in \cite[Section 3.4]{carrillo} a Boltzmann-type inequality leading to an H-theorem is proved under the assumption that the group action $\mathcal{A}$ is transitive, i.e.\ for any $\nu, \nu_* \in \mathcal{M}$ there exists $\mat{Q} \in \mathrm{SO}(3)$ such that $\mathcal{A}(\mat{Q}, \nu) = \nu_*$.\footnote{In the context of the order-parameter manifold given by Euler angles, the H-theorem in \cite{carrillo} is a generalisation of the H-theorem proved in \cite{curtissI, curtissV}. In \cite{curtissI} it is erroneously stated that the H-theorem holds for any choice of molecular shape; this statement is corrected in \cite{curtissV}, where the authors show that their H-theorem holds only for top-symmetric molecules. By contrast, the H-theorem for the kinetic theory of ordered fluids proposed in \cite{carrillo} holds for any molecular shape.}
This type of result is the focus of the next section, where we discuss the H-theorem for the kinetic theory of ordered fluids and the mechanism behind the emergence of alignment.

We focus on the case of a uniaxial nematic liquid crystal, for which the most elegant order-parameter manifold would be the projective plane $\mathbb{R}\mathrm{P}^2$; see \cite[Examples C]{carrillo}. \cla{For simplicity, however, we use the unit sphere $\mathbb{S}^2$ as the order-parameter manifold. The use of $\mathbb{S}^2$ as an order-parameter manifold is common in the literature on polar fluids \cite{onsager, virgaOnsager}, and the additional head-to-tail symmetry of the nematic phase can be imposed through the interaction kernel $\mathcal{W}$.} We therefore work within the setting of \cite[Examples D]{carrillo}, where the order-parameter manifold is $\mathbb{S}^2$ and the group action $\mathcal{A}$ is the canonical group action of $\mathrm{SO}(3)$ on $\mathbb{S}^2$, i.e.\ $\mathcal{A}_{\mathrm{canonical}}(\mat{Q}, \nu) = \mat{Q}\vec{\nu}$, where $\vec{\nu} \in \mathbb{R}^3$ is the embedding of $\nu \in \mathbb{S}^2$ in $\mathbb{R}^3$.

\subsection{Vanishing-girth limit}
\label{sec:vanishing_girth}
Following \cite[Section 5]{farrell}, we model the molecular constituents as rigid rods of length $\ell$ and vanishing girth $\varepsilon$, so that the inertia tensor $\mathbb{I}(\vec{\nu})$ is given by
\begin{equation}
	\label{eq:inertia}
	\mathbb{I}(\vec{\nu}) = \frac{m \ell^2}{12} \left(\mat{I} - \vec{\nu} \otimes \vec{\nu}\right),
\end{equation}
where $m$ is the mass of the molecule and $\mat{I}$ is the identity matrix in $\mathbb{R}^{3\times 3}$.
Comparing the orientational kinetic energy $\frac{1}{2}\dot{\vec{\nu}}\cdot \mat{B}\dot{\vec{\nu}}$ with the physical rotational kinetic energy of a needle $\frac{1}{2}\vec{\omega}\cdot\mathbb{I}\vec{\omega} = m\frac{\ell^2}{24}\abs{\vec{\omega}_\perp}^2 = m\frac{\ell^2}{24}\abs{\dot{\vec{\nu}}}^2$\footnote{\cla{Notice that $\vec{\omega}\cdot \mathbb{I}\vec{\omega}=\frac{m\ell^2}{12}\left(\abs{\vec{\omega}}^2-(\vec{\omega}\cdot\vec{\nu})^2\right)=\frac{m\ell^2}{12}\abs{\vec{\omega}\times\vec{\nu}}^2$. Assuming that the co-rotational time derivative $\overset{\circ}{\vec{\nu}}=0$, this implies that the total rotational kinetic energy of the molecule is given by $\frac{m\ell^2}{24}\,\dot{\vec{\nu}}\cdot\dot{\vec{\nu}}$.}}, and using $\dot{\vec{\nu}} = \vec{\omega}\times\vec{\nu}$ and $\abs{\dot{\vec{\nu}}}^2 = \abs{\vec{\omega}_\perp}^2$ we identify
\begin{equation}\label{eq:B_calamitic}
	\mat{B} = m\frac{\ell^2}{12} \mat{I},
\end{equation}
where $\mat{I}$ is the identity on the tangent space $T_{\vec{\nu}}\mathbb{S}^2$.
The conjugate momentum is
\begin{equation}\label{eq:varsigma_def}
	\vec{\varsigma} = \frac{\partial \mathcal{L}}{\partial\dot{\vec{\nu}}} = \mat{B}\dot{\vec{\nu}} = m\frac{\ell^2}{12}\dot{\vec{\nu}} \in T_{\vec{\nu}}\mathbb{S}^2,
\end{equation}
where $\mathcal{L}$ is the Lagrangian of the system, and thus $\mat{B}^{-1}\vec{\varsigma} = \left(m\frac{\ell^2}{12}\right)^{-1}\vec{\varsigma} = \dot{\vec{\nu}}$. \cla{Strictly speaking the conjugate momentum $\vec{\varsigma}$ is an element of the cotangent space $T^*_{\vec{\nu}}\mathbb{S}^2$, but throughout this paper we implicitly identify $T^*_{\vec{\nu}}\mathbb{S}^2$ with the tangent space $T_{\vec{\nu}}\mathbb{S}^2$, so that $\vec{\varsigma}$ may be treated as a tangent vector at $\vec{\nu}$.} Note that this vector lies in a two-dimensional space, not in $\mathbb{R}^3$. The orientational contribution to the kinetic energy is
\begin{equation}
	\frac{1}{2}\vec{\varsigma}\cdot \mat{B}^{-1}\vec{\varsigma} = \frac{1}{2}\left(m\frac{\ell^2}{12}\right)^{-1}\abs{\vec{\varsigma}}^2 = m\frac{\ell^2}{24}\abs{\dot{\vec{\nu}}}^2.
\end{equation}

The physical angular velocity $\vec{\omega}\in\mathbb{R}^3$ and the conjugate momentum $\vec{\varsigma}\in T_{\vec{\nu}}\mathbb{S}^2$ are related by $\vec{\varsigma} = m\frac{\ell^2}{12}\dot{\vec{\nu}} = m\frac{\ell^2}{12}(\vec{\omega}\times\vec{\nu})$. Conversely, $\vec{\omega}_\perp = -\vec{\nu}\times(\vec{\nu}\times\vec{\omega}) = \vec{\nu}\times\dot{\vec{\nu}}$, and the component along $\vec{\nu}$, i.e.\ the spin about the molecular axis, does not contribute to the kinetic energy of a needle and is not captured by $\vec{\varsigma}$.
This explains why the inertia tensor \eqref{eq:inertia} has a null direction along $\vec{\nu}$ while the generalised inertia tensor $\mat{B} = m\frac{\ell^2}{12} \mat{I}$ acts in $T_{\vec{\nu}}\mathbb{S}^2$ and thus has no null direction because it acts on a vector space of one lower dimension. In this setting the last collision invariant in \eqref{eq:collinv}, i.e.\ $A_{\vec{x}} \mat{Q}\cdot \vec{p} + A_{\vec{\nu}} \mat{Q}\cdot \vec{\varsigma}$, reads $\vec{\nu}\times\vec{\varsigma} + \vec{x}\times\vec{p}$ and is the classical total angular momentum of a needle. Indeed, we can verify that 
\begin{equation}
	\label{eq:total_angular_momentum}
	\vec{\nu}\times\vec{\varsigma} + \vec{x}\times\vec{p} = \vec{\nu}\times(m\frac{\ell^2}{12}\dot{\vec{\nu}}) + \vec{x}\times(m\vec{v}) = \mathbb{I}\vec{\omega} + \vec{x}\times(m\vec{v}).
\end{equation}
Thus the last collision invariant in \eqref{eq:collinv} corresponds to the conservation of the total angular momentum of a needle:
\begin{equation}
	\int_{\mathcal{M}} \int_{\mathbb{R}^3} \int_{\mathbb{R}^3} \mathcal{Q}(f,f) (\vec{\nu}\times\vec{\varsigma} + \vec{x}\times\vec{p}) \, d\vec{p} d\vec{\nu} d\vec{\varsigma} = \int_{\mathcal{M}} \int_{\mathbb{R}^3} \int_{\mathbb{R}^3} \mathcal{Q}(f,f) (\mathbb{I}\vec{\omega} + \vec{x}\times(m\vec{v})) \, d\vec{p} d\vec{\nu} d\vec{\varsigma} = 0.
\end{equation}
From this point on, we write $C_{m,\ell} = m\frac{\ell^2}{12}$, so that
\begin{equation}
	\mat{B} = C_{m,\ell}\mat{I}, \quad \mat{B}^{-1} = C_{m,\ell}^{-1}\mat{I}.
\end{equation}

\subsection{Collision response model}
We assume that the collision operator $\mathcal{Q}$ represents the effect of the microscopic collisions described by the collision response model studied in \cite[Example 2.32]{carrillo}, which is a generalisation of the classical hard-sphere collision model to the case of rigid rods. Thus the binary collision rule reads
\begin{subequations}
	\label{eq:binary_rule}
\begin{alignat}{2}
	\vec{v}' &= \vec{v} - 2\frac{J}{m} \vec{n},\qquad && \vec{v}'_*  = \vec{v}_* + 2\frac{J}{m} \vec{n}, \\
	\fix{(\vec{\nu}\times\vec{\varsigma}')}  &\fix{= (\vec{\nu}\times\vec{\varsigma}) + 2J \mathbb{I}^{\dagger} (\vec{r}\times \vec{n})}, \qquad &&\fix{(\vec{\nu}_*\times\vec{\varsigma}'_*)  = (\vec{\nu}_*\times\vec{\varsigma}_*) - 2J \mathbb{I}^{\dagger}_* (\vec{r}_*\times \vec{n})},
\end{alignat}
where $\mathbb{I}^\dagger$ denotes the Moore--Penrose pseudo-inverse of the singular inertia tensor, $\vec{r}$ and $\vec{r}_*$ are respectively the vectors connecting the centre of mass of each colliding particle to the point of contact, $\vec{n}$ is the unit normal vector to the plane separating the two colliding particles and passing through the point of contact, and $J$ is the collision impulse given by
\end{subequations}

\begin{equation}
\label{eq:collision_impulse}
J = -\dfrac{(\vec{v}-\vec{v}_*) \cdot \vec{n}}{\frac{2}{m} + \left[ \mathbb{I}^{\dagger} (\vec{r}\times \vec{n})\times \vec{r} + \mathbb{I}^{\dagger}_* (\vec{r}_*\times \vec{n})\times \vec{r}_*\right]\cdot \vec{n}}.
\end{equation}
The above collision rule follows an instantaneous elastic collision model, the limitations of which are briefly discussed in \cite[Remark 3.8]{carrillo}. 

In particular, we consider the collision operator
\begin{equation}
	\label{eq:collision}
	\mathcal{Q}(f,f) = \int_{\mathbb{R}^3}\int_{\mathbb{S}^2}\int_{\mathbb{R}^3}\int_0^{2\pi}\int_0^\pi (f^\prime_* f^\prime-f_*f)(\vec{k}\cdot\vec{g})S(\vec{k}(\theta,\varphi))\,d\theta\,d\varphi\,d\vec{p}_*\,d\vec{\nu}_*\,d\vec{\varsigma}_*,
\end{equation}
where $\vec{k}=\vec{k}(\theta,\varphi)$ is the unit collision direction, $S$ is the Jacobian of the transformation from the excluded volume to the spherical coordinates $(\theta, \varphi)$, and $\vec{g}$ is the relative velocity defined as
\begin{equation}
	\label{eq:relative_velocity}
	\fix{\vec{g} = (\vec{v} - \vec{v}_*) - \frac{1}{C_{m,\ell}}\left((\vec{\varsigma}\cdot \vec{r})\vec{\nu} - (\vec{\varsigma}_*\cdot \vec{r}_*)\vec{\nu}_* - (\vec{\nu}\cdot \vec{r})\vec{\varsigma} + (\vec{\nu}_*\cdot \vec{r}_*)\vec{\varsigma}_*\right)}.
\end{equation}

\uz{\subsection{A simplified BGK relaxation model}\label{sec:bgk}
The collision integral \eqref{eq:collision} is intractable for explicit hydrodynamic calculations. To make the subsequent analysis tractable we replace it throughout the remainder of the paper with the classical single-relaxation BGK operator
\begin{equation}\label{eq:BGK}
	\mathcal{Q}[f] \;=\; -\frac{1}{\tau}\bigl(f - f^{(0)}\bigr),
\end{equation}
where $\tau>0$ is the relaxation time and $f^{(0)}$ is the Maxwellian
\begin{equation}
	\label{eq:maxwellian}
	f^{(0)}(\vec{x},\vec{p},\vec{\nu},\vec{\varsigma},t) = \frac{\rho(\vec{x},t)\, \lambda(\vec{\nu},t)}{m(2\pi m k_B \theta(\vec{x},t))^{3/2}\,(2\pi C_{m,\ell}k_B\theta(\vec{x},t))}\exp\left(-\frac{m\abs{\vec{V}}^2}{2k_B\theta(\vec{x},t)} - \frac{\abs{\vec{\Sigma}}^2}{2C_{m,\ell}k_B\theta(\vec{x},t)}\right),
\end{equation}
where $\vec{v}=\vec{p}/m$, $\vec{V}=\vec{v}-\vec{u}$, \fix{$\vec{\Sigma}=\vec{\varsigma}-C_{m,\ell}(\vec{\Omega}\times\vec{\nu})$ is the peculiar conjugate momentum measured from the rigid co-rotation that carries the bulk intrinsic angular momentum, the bulk angular velocity $\vec{\Omega}(\vec{x},t)$ being fixed by requiring $f^{(0)}$ to share the intrinsic angular momentum $\vec{\mu}=\cchevrons{\vec{\nu}\times\vec{\varsigma}}$ with $f$, which gives $\vec{\mu}=C_{m,\ell}(\mat{I}-\mat{S})\vec{\Omega}$ with $\mat{S}=\int_{\mathcal{M}}\lambda(\vec{\nu},t)\,\vec{\nu}\otimes\vec{\nu}\,d\vec{\nu}$ the orientational order tensor,} $n=\rho/m$, and $d\vec{\Xi}=d\vec{p}\,d\vec{\nu}\,d\vec{\varsigma}$. \fix{Centring the rotational Gaussian at the rigid co-rotation $\vec{\Omega}\times\vec{\nu}$, rather than at a uniform shift of $\vec{\varsigma}$, is what keeps $\log f^{(0)}$ affine in the collision invariants \eqref{eq:collinv}, since the conserved rotational quantity is the intrinsic angular momentum $\vec{\nu}\times\vec{\varsigma}$ and not $\vec{\varsigma}$ itself, and it is the equilibrium form derived in \cite[Section 3.4]{carrillo}. Under the micro--macro identification $\vec{\nu}\equiv\widehat{\vec{\nu}}(\vec{x},t)$ adopted in \Cref{sec:hyd}, the rigid-rod constraint $\vec{\varsigma}\cdot\vec{\nu}=0$ forces this drift to coincide with the uniform shift $\vec{\sigma}=\cchevrons{\vec{\varsigma}}$, so the simplified peculiar momentum $\vec{\Sigma}=\vec{\varsigma}-\vec{\sigma}$ used there is recovered with no change to any later result.} This distribution shares $(\rho,\vec u, e, \vec{\mu}, \lambda)$ with $f$, where $\lambda(\vec{\nu},t)$ is a non-negative function representing the distribution of molecular orientations and $\theta(\vec{x},t)$ is the local temperature. The Gaussian translational and rotational structure of \eqref{eq:maxwellian} is fixed kinematically as the unique distribution annihilating the full collision integral \eqref{eq:collision} on its collision invariants \eqref{eq:collinv}, and is derived self-consistently in \Cref{app:maxwellian_derivation}. The orientational profile $\lambda(\vec{\nu},t)$ remains a free parameter at this level and is determined thermodynamically as a constrained minimiser of the Helmholtz free energy in \Cref{sec:gibbs_variational}.

Since $\log f^{(0)}$ is affine in the collision invariants of \eqref{eq:collinv} and $f,f^{(0)}$ share those moments,
\begin{equation}\label{eq:BGK_conservation}
	\int \mathcal{Q}[f]\,\psi\,d\vec{p}\,d\vec{\nu}\,d\vec{\varsigma} = 0\qquad \text{for every collision invariant } \psi,
\end{equation}
so the mass, momentum, energy, and angular-momentum balance laws derived in the next section are insensitive to the replacement.

\begin{lemma}\label{lemma:BGK_entropy}
	Let $f$ be a non-negative distribution function and let $\xi$ be the entropy production associated with the operator \eqref{eq:BGK}. Then
	\begin{equation}\label{eq:BGK_entropy}
		\xi \;=\; -k_B\!\int \mathcal{Q}[f]\,\log(f)\,d\vec{\Xi} \;\geq\; 0,
	\end{equation}
	with equality if and only if $f = f^{(0)}$.
\end{lemma}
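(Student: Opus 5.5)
The plan is the classical Gibbs-inequality argument for BGK. First I rewrite the integrand using the fact that $\log f^{(0)}$ is affine in the collision invariants \eqref{eq:collinv}. From \eqref{eq:maxwellian},
\begin{equation*}
\log f^{(0)} = a + \log\lambda(\vec{\nu},t) + \vec{b}\cdot\vec{p} + c\,\frac{1}{2}\left(\frac{\abs{\vec{p}}^2}{m} + C_{m,\ell}^{-1}\abs{\vec{\varsigma}}^2\right) + \vec{d}\cdot(\vec{\nu}\times\vec{\varsigma}),
\end{equation*}
with coefficients $a,\vec{b},c,\vec{d}$ that depend only on $(\vec{x},t)$. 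The term $\log\lambda$ depends on $\vec{\nu}$ only. Because $f$ and $f^{(0)}$ share $\lambda$, i.e.\ they have the same orientational marginal, we have $\int (f-f^{(0)})\,g(\vec{\nu})\,d\vec{\Xi}=0$ for every function $g$ of $\vec{\nu}$. Combining this with \eqref{eq:BGK_conservation} for the remaining terms gives $\int \mathcal{Q}[f]\log f^{(0)}\,d\vec{\Xi}=0$. Since $\mathcal{Q}[f]=-\tau^{-1}(f-f^{(0)})$, this yields
\begin{equation*}
\xi = -k_B\int \mathcal{Q}[f]\bigl(\log f-\log f^{(0)}\bigr)d\vec{\Xi} = \frac{k_B}{\tau}\int \bigl(f-f^{(0)}\bigr)\bigl(\log f-\log f^{(0)}\bigr)\,d\vec{\Xi}.
\end{equation*}

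Second, I use that the logarithm is strictly increasing. For $s,t>0$ we have $(s-t)(\log s-\log t)\ge 0$, with equality if and only if $s=t$. When $f=0$ at a point, the integrand is $+\infty$ or is interpreted through $s\log s\to 0$; either way it is nonnegative. The integrand is therefore pointwise nonnegative, and since $\tau>0$ and $f^{(0)}>0$ everywhere, it follows that $\xi\ge 0$.

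Third, for the equality case: if $\xi=0$, the nonnegative integrand vanishes almost everywhere, so $f=f^{(0)}$ almost everywhere in $(\vec{p},\vec{\nu},\vec{\varsigma})$. Conversely, if $f=f^{(0)}$ then $\mathcal{Q}[f]=0$ and $\xi=0$ trivially.

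The main obstacle is the first step, namely justifying $\int\mathcal{Q}[f]\log f^{(0)}=0$. This needs two things. One is the affine structure of $\log f^{(0)}$ with the rotational Gaussian centred at the co-rotation $\vec{\Omega}\times\vec{\nu}$. Expanding $\abs{\vec{\varsigma}-C_{m,\ell}\vec{\Omega}\times\vec{\nu}}^2$ gives $\abs{\vec{\varsigma}}^2 - 2C_{m,\ell}\vec{\Omega}\cdot(\vec{\nu}\times\vec{\varsigma}) + C_{m,\ell}^2\abs{\vec{\Omega}\times\vec{\nu}}^2$. The middle term is a collision invariant paired with $\vec{\Omega}$, and the last term depends only on $\vec{\nu}$, so it is absorbed by the shared orientational marginal in the same way as $\log\lambda$. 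The other is the requirement that $f$ has sufficient integrability (finite moments and finite entropy) for all the integrals to make sense. I would state this integrability as a standing assumption rather than prove it.
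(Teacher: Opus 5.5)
Your proposal is correct and follows essentially the same route as the paper. Both arguments first use the affine structure of $\log f^{(0)}$ and the shared moments to get $\int (f-f^{(0)})\log f^{(0)}\,d\vec{\Xi}=0$, reduce $\xi$ to $(k_B/\tau)\int (f-f^{(0)})\log(f/f^{(0)})\,d\vec{\Xi}$, and conclude by pointwise non-negativity; your monotonicity of $\log$ is the same fact as the paper's convexity of $x\log x$, and your treatment of the purely orientational terms ($\log\lambda$ and $C_{m,\ell}\abs{\vec{\Omega}\times\vec{\nu}}^2/(2k_B\theta)$) through the shared orientational marginal makes explicit a point the paper's one-line argument leaves implicit.
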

\begin{proof}
	Since $\log f^{(0)}$ is affine in the collision invariants and $f,f^{(0)}$ share those moments, $\int(f-f^{(0)})\log f^{(0)}\,d\vec{\Xi} = 0$, so $\xi = (k_B/\tau)\int(f-f^{(0)})\log(f/f^{(0)})\,d\vec{\Xi}$. Convexity of $x\log x$ then yields $\xi\geq 0$, with equality iff $f = f^{(0)}$.
\end{proof}

\uz{An anisotropic extension of \eqref{eq:BGK} with director-aligned translational relaxation rates and a translational--rotational cross-coupling, generating the complete Leslie--Ericksen viscous response and Parodi's relation as Onsager reciprocity, is the subject of ongoing work.}}

\section{Balance laws}
\label{sec:hyd}
Multiplying \eqref{eq:cfmz} by a collision invariant $\psi$ and integrating with respect to $\vec{\Xi} = (\vec{p}, \vec{\nu}, \vec{\varsigma})$ we obtain
\begin{equation}
	\label{eq:pre_enskog_moment_equation}
	\int\psi\,\partial_t f\dd\vec{\Xi} + \int\psi\,\frac{\vec{p}}{m}\cdot\nabla_{\vec{x}} f\dd\vec{\Xi} + \int\psi\,\frac{\vec{\varsigma}}{C_{m,\ell}}\cdot\nabla_{\vec{\nu}} f\dd\vec{\Xi} + \int\psi\,\vec{\mathcal{V}}\cdot\nabla_{\vec{\varsigma}} f\dd\vec{\Xi} = 0.
\end{equation}
Following the calculations presented in \cite[Section 3]{malek} for the classical Boltzmann equation and in \cite[Section 2]{farrell} for the Curtiss kinetic theory, we obtain
\begin{equation}
	\label{eq:enskog_moment_equation}
	\partial_t(\rho \cchevrons{\psi}) + \nabla_{\vec{x}}\cdot\bigl(\rho\cchevrons{\vec{v}\psi}\bigr) - \rho \cchevrons{\dot{\vec{\nu}}\cdot\nabla_{\vec{\nu}}\psi} - \rho\cchevrons{\vec{\mathcal{V}}\cdot\nabla_{\vec{\varsigma}}\psi} = 0,
\end{equation}
where $\rho$ is the mass density and $\cchevrons{\psi}$ is the expectation of $\psi$ with respect to the distribution function $f$, defined by
\begin{equation}
	\rho(\vec{x},t) = mn(\vec{x},t) = m\int f \, d\vec{p} d\vec{\nu} d\vec{\varsigma}, \qquad
	\cchevrons{\psi}(\vec{x},t) = \frac{1}{n}\int \psi f \, d\vec{\Xi}.
\end{equation}
The only difference from the derivations in \cite[Section 3]{malek} and \cite[Section 2]{farrell} is the additional Vlasov term; since $\vec{\mathcal{V}}$ is independent of $\vec{\varsigma}$ by \eqref{eq:potential}, this term can be simplified by integration by parts in $\vec{\varsigma}$, as for all other terms.

Furthermore, since the collision invariant $\psi = m$ is independent of $\vec{\nu}$ and $\vec{\varsigma}$, the same calculations as in \cite[Section 3]{malek} give the continuity equation
\begin{equation}\label{eq:continuity_eq}
	\partial_t\rho + \nabla_{\vec{x}}\cdot(\rho\vec{u}) = 0,
\end{equation}
where $\vec{u} = \cchevrons{\vec{v}}$ is the system bulk velocity.
Since the collision invariant $\psi_2 = \vec{p}$ does not depend on $\vec{\nu}$ and $\vec{\varsigma}$, the same calculations as in \cite[Section 3]{malek} give the linear-momentum balance,
\begin{equation}
	\label{eq:linear_momentum_balance_law}
	\rho\left[\partial_t\vec{u} + (\nabla_{\vec{x}}\vec{u})\vec{u}\right] - \nabla_{\vec{x}}\cdot\mathbb{T} = 0,
\end{equation}
where $\mathbb{T}$ is the Cauchy stress tensor defined in kinetic theory as
\begin{equation}
	\label{eq:stress_tensor}
	\mathbb{T} = -\rho \cchevrons{\vec{V}\otimes\vec{V}}, \qquad \vec{V} = \vec{v} - \vec{u}.
\end{equation}

Since both $\vec{p}$ and $\vec{\nu}\times\vec{\varsigma} + \vec{x}\times\vec{p}$ are collision invariants, $\vec{\nu}\times\vec{\varsigma}$ is also annihilated by the collision operator. We may therefore substitute $\psi = \vec{\nu}\times\vec{\varsigma}$ in \eqref{eq:enskog_moment_equation} to obtain a balance law for the bulk intrinsic angular momentum $\vec{\mu}$, defined by
\begin{equation}
	\label{eq:eta_def}
	\vec{\mu} = \cchevrons{\vec{\nu}\times\vec{\varsigma}} = \cchevrons{\mathbb{I}\vec{\omega}}.
\end{equation}
We also define the bulk conjugate momentum to $\vec{\nu}$ and its peculiar part by
\begin{equation}
	\vec{\sigma} = \cchevrons{\vec{\varsigma}}, \qquad \vec{\Sigma}= \vec{\varsigma} - \vec{\sigma}.
\end{equation}
\fix{Throughout the derivation of the angular-momentum and energy balance laws we adopt the \emph{micro--macro identification} of the orientation variable: every molecule carries the orientation of the local director field, $\vec{\nu} = \widehat{\vec{\nu}}(\vec{x},t)$, so that any function of $\vec{\nu}$ alone factors out of the averages $\cchevrons{\cdot}$. This is the same kinematic hypothesis under which the strongly aligned closure of the later sections operates, see \Cref{lemma:erot_oseen_frank}.}

With this notation, \eqref{eq:enskog_moment_equation} with $\psi = \vec{\nu}\times\vec{\varsigma}$ can be expressed in terms of the peculiar velocity $\vec{V}$ and the peculiar conjugate momentum $\vec{\Sigma}$ as
\begin{equation}
	\label{eq:angular_momentum_balance_law_intermediate}
	\partial_t(\rho{\vec{\mu}}) + \nabla_{\vec{x}}\cdot(\rho\vec{u}\otimes\vec{\mu}) + \nabla_{\vec{x}}\cdot(\rho\cchevrons{\vec{V}\otimes(\vec{\nu}\times\vec{\Sigma})}) - \rho\cchevrons{\dot{\vec{\nu}}\cdot\nabla_{\vec{\nu}}(\vec{\nu}\times\vec{\varsigma})} \fix{-} \rho\cchevrons{\vec{\mathcal{V}}\cdot\nabla_{\vec{\varsigma}}(\vec{\nu}\times\vec{\varsigma})} = 0,
\end{equation}
where we used $\cchevrons{\vec{V}} = 0$ \fix{together with $\cchevrons{\vec{V}\otimes(\vec{\nu}\times\vec{\sigma})} = \cchevrons{\vec{V}}\otimes(\widehat{\vec{\nu}}\times\vec{\sigma}) = \vec{0}$, with the orientation factoring out of the average under the micro--macro identification}.
To simplify the previous expression we define the \emph{couple stress tensor} as
\begin{equation}\label{eq:couple_stress}
	\mathbb{M} = -\rho \cchevrons{\vec{V}\otimes(\vec{\nu}\times\vec{\Sigma})},
\end{equation}
and rewrite \eqref{eq:angular_momentum_balance_law_intermediate} as
\begin{equation}
	\label{eq:angular_momentum_balance_law_almost_final}
	\partial_t(\rho{\vec{\mu}}) + \nabla_{\vec{x}}\cdot(\rho\vec{u}\otimes\vec{\mu}) - \nabla_{\vec{x}}\cdot\mathbb{M} - \rho\cchevrons{\dot{\vec{\nu}}\cdot\nabla_{\vec{\nu}}(\vec{\nu}\times\vec{\varsigma})} \fix{-} \rho\cchevrons{\vec{\mathcal{V}}\cdot\nabla_{\vec{\varsigma}}(\vec{\nu}\times\vec{\varsigma})} = 0.
\end{equation}
{
To make the following calculations easier to follow, we overload the cross-product notation when it acts between a vector and a tensor. In tensor notation, $\vec{a}\times\mat{A}$ denotes the tensor whose $(i,j)$-th component is $\varepsilon_{ikl}a_k A_{lj}$, where $\varepsilon_{ikl}$ is the Levi--Civita symbol and we use the Einstein summation convention. Symmetrically, $\mat{A}\times\vec{a}$ denotes the tensor whose $(i,j)$-th component is $\varepsilon_{ikl}A_{kj}a_l$. Both conventions act column by column, crossing $\vec{a}$ into each column of $\mat{A}$ on the appropriate side, so that the Leibniz rule $\nabla_{\vec{\nu}}(\vec{\nu}\times\vec{\varsigma}) = (\nabla_{\vec{\nu}}\vec{\nu})\times\vec{\varsigma} + \vec{\nu}\times(\nabla_{\vec{\nu}}\vec{\varsigma})$ holds componentwise.
With this notation, $\vec{\nu}\times\vec{\varsigma}\in\mathbb{R}^3$ can be differentiated in the ambient space. Explicitly, $\nabla_{\vec{\nu}}(\vec{\nu}\times\vec{\varsigma}) = (\nabla_{\vec{\nu}}\vec{\nu})\times\vec{\varsigma} + \vec{\nu}\times(\nabla_{\vec{\nu}}\vec{\varsigma})$. The first term gives $\dot{\vec{\nu}}\cdot[(\nabla_{\vec{\nu}}\vec{\nu})\times\vec{\varsigma}]$. Although $\nabla_{\vec{\nu}}\vec{\nu} = \mat{I} - \vec{\nu}\otimes\vec{\nu}$ is the projection onto the tangent space $T_{\vec{\nu}}\mathbb{S}^2$ rather than the full identity, the fact that $\dot{\vec{\nu}}\cdot\vec{\nu} = 0$ means it acts as the identity in this contraction, reducing the term to $\dot{\vec{\nu}}\times\vec{\varsigma}$.}

The second term vanishes because in the Hamiltonian setting $\vec{\varsigma}$ is independent of $\vec{\nu}$, so $\nabla_{\vec{\nu}}\vec{\varsigma} = 0$. Thus we can further simplify \eqref{eq:angular_momentum_balance_law_almost_final} as
\begin{equation}
	\label{eq:angular_momentum_balance_law_final}
	\partial_t(\rho{\vec{\mu}}) + \nabla_{\vec{x}}\cdot(\rho\vec{u}\otimes\vec{\mu}) - \nabla_{\vec{x}}\cdot \mathbb{M} \fix{-} \rho\cchevrons{\vec{\mathcal{V}}\cdot\nabla_{\vec{\varsigma}}(\vec{\nu}\times\vec{\varsigma})} = 0,
\end{equation}
where the term $\rho\cchevrons{\dot{\vec{\nu}}\times\vec{\varsigma}}$ has been dropped because $\dot{\vec{\nu}} = \mat{B}^{-1}\vec{\varsigma}= C_{m,\ell}^{-1}\;\vec{\varsigma}$, and hence $\dot{\vec{\nu}}\times\vec{\varsigma} = C_{m,\ell}^{-1}\;\vec{\varsigma}\times\vec{\varsigma} = 0$.
Similarly, since $\vec{\nu}$ is independent of $\vec{\varsigma}$ and $\nabla_{\vec{\varsigma}}\vec{\varsigma} = \mat{I} - \vec{\nu}\otimes\vec{\nu}$ on $T_{\vec{\nu}}\mathbb{S}^2$, we have
\begin{equation}
  \nabla_{\vec{\varsigma}}(\vec{\nu}\times\vec{\varsigma}) = \vec{\nu}\times(\nabla_{\vec{\varsigma}}\vec{\varsigma}) = \vec{\nu}\times(\mat{I} - \vec{\nu}\otimes\vec{\nu}) = \vec{\nu}\times\mat{I}.
\end{equation}
Therefore $\vec{\mathcal{V}}\cdot\nabla_{\vec{\varsigma}}(\vec{\nu}\times\vec{\varsigma}) = \vec{\nu}\times\vec{\mathcal{V}}$, and hence
\begin{equation}\label{eq:Vlasov_torque}
	\rho\cchevrons{\vec{\mathcal{V}}\cdot\nabla_{\vec{\varsigma}}(\vec{\nu}\times\vec{\varsigma})} = \rho\cchevrons{\vec{\nu}\times\vec{\mathcal{V}}}.
\end{equation}
Using \eqref{eq:continuity_eq}, we obtain the following balance law for the bulk intrinsic angular momentum $\vec{\mu}$:
\begin{equation}\label{eq:balance_law_angular_momentum}
	\rho\bigl[\partial_t\vec{\mu} + (\nabla_{\vec{x}}\vec{\mu})\vec{u}\bigr] \fix{-} \nabla_{\vec{x}}\cdot \mathbb{M} = \rho\cchevrons{\vec{\nu}\times\vec{\mathcal{V}}}.
\end{equation}
\fix{The sign of the flux term parallels the linear-momentum balance \eqref{eq:linear_momentum_balance_law}, as it must given the parallel definitions $\mathbb{T} = -\rho\cchevrons{\vec{V}\otimes\vec{V}}$ and $\mathbb{M} = -\rho\cchevrons{\vec{V}\otimes(\vec{\nu}\times\vec{\Sigma})}$.}

\fix{\begin{remark}[The peculiar intrinsic angular momentum]\label{rmk:peculiar_angular_momentum}
	Without the micro--macro identification, the peculiar field conjugate to $\vec{\mu}$ is
	\begin{equation}
		\label{eq:peculiar_angular_momentum}
		\vec{\Lambda} := \vec{\nu}\times\vec{\varsigma} - \vec{\mu},
	\end{equation}
	which differs from $\vec{\nu}\times\vec{\Sigma}$ by $\vec{\nu}\times\vec{\sigma} - \vec{\mu}$, and the exact couple-stress flux is $\cchevrons{\vec{V}\otimes\vec{\Lambda}} = \cchevrons{\vec{V}\otimes(\vec{\nu}\times\vec{\varsigma})}$. Since $\vec{\nu}\times\vec{\varsigma} = \mathbb{I}(\vec{\nu})\vec{\omega}$ depends on the orientation $\vec{\nu}$, which is correlated with the peculiar velocity $\vec{V}$, the orientation does not factor out of this average and $\vec{\Lambda}$, rather than $\vec{\nu}\times\vec{\Sigma}$, is the exact peculiar field. Throughout this paper, however, we work around the aligned state and adopt the micro--macro identification $\vec{\nu} = \widehat{\vec{\nu}}(\vec{x},t)$, so that every statement of this section may be written indifferently in either set of variables. The full theory, in which $\vec{\nu}$ remains distributed and $\vec{\Lambda}$ is the correct peculiar field, is briefly derived in \Cref{app:energy_balance_xi}.
\end{remark}}

\begin{remark}
\label{rmk:vlasov_torque}
The main difference between \eqref{eq:balance_law_angular_momentum} and \cite[eq.~2.22c]{farrell} is the presence of the body torque term $\rho\cchevrons{\vec{\nu}\times\vec{\mathcal{V}}}$ on the right-hand side\footnote{A similar term appears in Curtiss' original work; see \cite[eq.~2.51]{curtissI}, where external torques are considered, unlike in \cite{farrell}, where only internal torques are considered.}, which is the mean-field torque generated by the Vlasov torque $\vec{\mathcal{V}}$. In particular, the Vlasov torque $\vec{\mathcal{V}}$ generates a mean-field torque perpendicular to both the molecular orientation $\vec{\nu}$ and the Vlasov torque itself, suggesting that the molecular constituents rotate their molecular axis until either $\vec{\nu}$ and $\vec{\mathcal{V}}$ are parallel or $\vec{\mathcal{V}}$ vanishes. As discussed in greater detail in \Cref{sec:hthm}, this is the mechanism responsible for the emergence of alignment in the system, which is a key feature of liquid crystals and other complex fluids with orientational order.
\end{remark}

Substituting $\psi = \frac{1}{2m}|\vec{p}|^2 + \frac{1}{2C_{m,\ell}}|\vec{\varsigma}|^2$ into \eqref{eq:enskog_moment_equation}, we obtain
\begin{align}
	\label{eq:energy_balance_law_pre}
	\partial_t\left(\rho\cchevrons{\frac{1}{2m}|\vec{p}|^2 + \frac{1}{2C_{m,\ell}}|\vec{\varsigma}|^2}\right) &+ \nabla_{\vec{x}}\cdot\left(\rho\cchevrons{\vec{v}\left(\frac{1}{2m}|\vec{p}|^2 + \frac{1}{2C_{m,\ell}}|\vec{\varsigma}|^2\right)}\right) \\
	&-\;\;\; \rho\cchevrons{\dot{\vec{\nu}}\cdot\nabla_{\vec{\nu}}\left(\frac{1}{2m}|\vec{p}|^2 + \frac{1}{2C_{m,\ell}}|\vec{\varsigma}|^2\right)} \\
	&-\;\;\; \rho\cchevrons{\vec{\mathcal{V}}\cdot\nabla_{\vec{\varsigma}}\left(\frac{1}{2m}|\vec{p}|^2 + \frac{1}{2C_{m,\ell}}|\vec{\varsigma}|^2\right)} = 0.
\end{align}
\fix{We divide \eqref{eq:energy_balance_law_pre} by $m$ and work with specific (i.e.\ per unit mass) energies throughout, so that the translational and rotational components are measured in the same units.} Following the same calculations presented in \cite[Section 3]{malek} to treat the translational degrees of freedom, we can simplify the above expression as
\begin{subequations}
	\begin{align}
		\label{eq:energy_balance_law_intermediate}
		&\rho\bigl[\partial_t e_{\mathrm{tr}} + \vec{u}\cdot\nabla_{\vec{x}} e_{\mathrm{tr}}\bigr] - \mathbb{T}:\nabla_{\vec{x}}\vec{u} + \nabla_{\vec{x}}\cdot\rho\cchevrons{\frac{1}{2}\abs{\vec{V}}^2\vec{V}} \\
		&+\partial_t \left(\rho\cchevrons{\frac{1}{2\fix{m}C_{m,\ell}}|\vec{\varsigma}|^2}\right) + \nabla_{\vec{x}}\cdot\left(\rho\cchevrons{\vec{v}\frac{1}{2\fix{m}C_{m,\ell}}|\vec{\varsigma}|^2}\right) \label{eq:energy_balance_law_intermediate_ang}\\
		&- \rho\cchevrons{\dot{\vec{\nu}}\cdot\nabla_{\vec{\nu}}\left(\frac{1}{2\fix{m}C_{m,\ell}}|\vec{\varsigma}|^2\right)} - \rho\cchevrons{\vec{\mathcal{V}}\cdot\nabla_{\vec{\varsigma}}\left(\frac{1}{2\fix{m}C_{m,\ell}}|\vec{\varsigma}|^2\right)} = 0,
	\end{align}
\end{subequations}
where the translational internal energy is defined as
\begin{equation}
	e_{\mathrm{tr}} = \frac{1}{2}\cchevrons{|\vec{V}|^2} = \frac{1}{2}\cchevrons{\abs{\vec{v}-\vec{u}}^2}.
\end{equation}

\fix{The orientational degrees of freedom require more care than in \cite[Section 3]{farrell}: the balance \eqref{eq:balance_law_angular_momentum} carries the mean-field torque on its right-hand side, while its counterpart \cite[eq.~2.22c]{farrell} is source-free, so the work exerted by the torque must be tracked through the reduction. The complete computation, together with a term-by-term correspondence with \cite{farrell}, is reported in \Cref{app:energy_balance_xi}. Since $\vec{\varsigma}\perp\vec{\nu}$ we have the pointwise identity $\abs{\vec{\nu}\times\vec{\varsigma}}^2 = \abs{\vec{\varsigma}}^2$, and under the micro--macro identification $\abs{\vec{\mu}} = \abs{\widehat{\vec{\nu}}\times\vec{\sigma}} = \abs{\vec{\sigma}}$, so the rotational kinetic energy splits as
\begin{equation}
	\label{eq:rot_energy_split}
	\frac{1}{2mC_{m,\ell}}\cchevrons{|\vec{\varsigma}|^2}
	= \frac{1}{2mC_{m,\ell}}\abs{\vec{\mu}}^2 + e_{\mathrm{rot}},
	\qquad
	e_{\mathrm{rot}} := \frac{1}{2mC_{m,\ell}}\cchevrons{\abs{\vec{\Sigma}}^2} = \frac{C_{m,\ell}}{2m}\cchevrons{\abs{\dot{\vec{\nu}} - \cchevrons{\dot{\vec{\nu}}}}^2}.
\end{equation}
The bulk part $\abs{\vec{\mu}}^2/(2mC_{m,\ell})$ is a kinetic, not an internal, energy and its budget is obtained by dotting the angular-momentum balance \eqref{eq:balance_law_angular_momentum} with $\vec{\mu}/(mC_{m,\ell})$,
\begin{equation}
	\label{eq:bulk_spin_balance}
	\rho\left[\partial_t \frac{\abs{\vec{\mu}}^2}{2mC_{m,\ell}} + \vec{u}\cdot\nabla_{\vec{x}}\frac{\abs{\vec{\mu}}^2}{2mC_{m,\ell}}\right]
	- \frac{1}{mC_{m,\ell}}\nabla_{\vec{x}}\cdot(\mathbb{M}\vec{\mu})
	+ \frac{1}{mC_{m,\ell}}\mathbb{M}:\nabla_{\vec{x}}\vec{\mu}
	= \frac{\rho}{mC_{m,\ell}}\,\vec{\mu}\cdot\cchevrons{\vec{\nu}\times\vec{\mathcal{V}}},
\end{equation}
where $(\mathbb{M}\vec{\mu})_i = \mathbb{M}_{ij}\mu_j$ and $\mathbb{M}:\nabla_{\vec{x}}\vec{\mu} = \mathbb{M}_{ij}\partial_i\mu_j$. The source on the right-hand side is the work exerted by the mean-field torque on the bulk intrinsic angular momentum.}

It remains to treat the Vlasov term. Since $\nabla_{\vec{\varsigma}}\left(\frac{1}{2C_{m,\ell}}|\vec{\varsigma}|^2\right) = \frac{1}{C_{m,\ell}}\vec{\varsigma} = \dot{\vec{\nu}}$, \fix{since $\vec{\varsigma}\perp\vec{\nu}$ implies the triple-product identity $\vec{\mathcal{V}}\cdot\vec{\varsigma} = (\vec{\nu}\times\vec{\varsigma})\cdot(\vec{\nu}\times\vec{\mathcal{V}})$, and since the orientation factors out of the average,} we have
\begin{equation}
	\label{eq:Vlasov_power}
	\frac{\rho}{\fix{m}}\cchevrons{\vec{\mathcal{V}}\cdot\dot{\vec{\nu}}} \fix{= \frac{\rho}{mC_{m,\ell}}\cchevrons{(\vec{\nu}\times\vec{\varsigma})\cdot(\vec{\nu}\times\vec{\mathcal{V}})}
	= \frac{\rho}{mC_{m,\ell}}\,\vec{\mu}\cdot\cchevrons{\vec{\nu}\times\vec{\mathcal{V}}}}.
\end{equation}
\fix{The full Vlasov power therefore coincides with the source of \eqref{eq:bulk_spin_balance}: the mean field works on the bulk intrinsic angular momentum, not on the peculiar fluctuations. Subtracting \eqref{eq:bulk_spin_balance} from the rotational part of \eqref{eq:energy_balance_law_intermediate} removes the Vlasov power entirely and yields the source-free balance of the rotational internal energy,
\begin{equation}
	\label{eq:energy_balance_law_final}
	\rho\bigl[\partial_t e_{\mathrm{rot}} + \vec{u}\cdot\nabla_{\vec{x}} e_{\mathrm{rot}}\bigr] - \frac{1}{mC_{m,\ell}}\mathbb{M}:\nabla_{\vec{x}}\vec{\mu} + \nabla_{\vec{x}}\cdot \rho\cchevrons{\vec{V} \frac{1}{2mC_{m,\ell}}|\vec{\Sigma}|^2} = 0.
\end{equation}}
Substituting \eqref{eq:Vlasov_power} and \eqref{eq:energy_balance_law_final} into \eqref{eq:energy_balance_law_intermediate}, we obtain the energy balance law
\begin{equation}
	\label{eq:energy_balance_law}
	\rho\bigl[\partial_t \widetilde{e} + \vec{u}\cdot\nabla_{\vec{x}} \widetilde{e}\bigr] - \mathbb{T}:\nabla_{\vec{x}}\vec{u} - \fix{\frac{1}{mC_{m,\ell}}}\mathbb{M}:\nabla_{\vec{x}}\vec{\mu} + \nabla_{\vec{x}}\cdot\vec{Q} = \fix{0},
\end{equation}
where the kinetic internal energy is the sum of its translational and rotational components,
\begin{equation}
	\widetilde{e} = e_{\mathrm{tr}} + e_{\mathrm{rot}} = \frac{1}{2}\cchevrons{|\vec{V}|^2} + \frac{1}{2\fix{m}C_{m,\ell}} \cchevrons{\abs{\vec{\Sigma}}^2},
\end{equation}
and we have defined the heat flux vector as
\begin{equation}
	\vec{Q} = \rho\cchevrons{\vec{V} \frac{1}{2}|\vec{V}|^2} + \rho\cchevrons{\vec{V} \frac{1}{2\fix{m}C_{m,\ell}}|\vec{\Sigma}|^2}.
\end{equation}

\fix{\begin{remark}
	\label{rmk:mu_source_in_energy}
	Contrary to the angular-velocity setting of \cite{farrell}, the balance \eqref{eq:balance_law_angular_momentum} carries the source $\rho\cchevrons{\vec{\nu}\times\vec{\mathcal{V}}}$. Reducing the rotational kinetic energy to its internal part multiplies \eqref{eq:balance_law_angular_momentum} by $\vec{\mu}/(mC_{m,\ell})$, and the resulting power $\rho\,\vec{\mu}\cdot\cchevrons{\vec{\nu}\times\vec{\mathcal{V}}}/(mC_{m,\ell})$ acts as a source for the kinetic energy $\abs{\vec{\mu}}^2/(2mC_{m,\ell})$ of the bulk intrinsic angular momentum. By \eqref{eq:Vlasov_power} this work coincides with the full Vlasov power, which is why no Vlasov source survives in \eqref{eq:energy_balance_law}: the mean field exchanges energy with the bulk intrinsic angular momentum, not with the internal energy. Without the micro--macro identification a peculiar remainder $\rho\cchevrons{\vec{\Lambda}\cdot(\vec{\nu}\times\vec{\mathcal{V}})}/(mC_{m,\ell})$ sources the internal energy, see \Cref{rmk:peculiar_angular_momentum} and \Cref{app:energy_balance_xi}, and it vanishes at every aligned Gibbs equilibrium, where $\vec{\mathcal{V}}\parallel\widehat{\vec{\nu}}$.
\end{remark}}

It remains to compute the balance law for the entropy density. This case differs from the previous ones because $\psi = -k_B \log(f) - 1$ is not a collision invariant\footnote{It does not make sense to evaluate the logarithm of a quantity with units. Thus, every time we write $\log(f)$, it should be understood as $\log(f/f_{\mathrm{ref}})$, where the reference is used only to nondimensionalise the argument of the logarithm. It is often natural to choose $f_{\mathrm{ref}}$ to be the Maxwellian distribution; for this choice, the notion of entropy is equivalent to relative entropy.}. Hence the right-hand side of \eqref{eq:enskog_moment_equation} does not vanish, and we have
\begin{equation}
	\label{eq:entropy_balance_law_pre}
	\partial_t(\rho \psi) + \nabla_{\vec{x}}\cdot\bigl(\rho\cchevrons{\vec{v}\psi}\bigr) - \rho \cchevrons{\dot{\vec{\nu}}\cdot\nabla_{\vec{\nu}}\psi} - \rho\cchevrons{\vec{\mathcal{V}}\cdot\nabla_{\vec{\varsigma}}\psi} = - k_B\int \mathcal{Q}(f,f)\log(f)\,d\vec{\Xi}.
\end{equation}
Following the same calculations presented in \cite[Section 3]{malek} to treat the translational degrees of freedom, and defining the macroscopic entropy density $\eta$ by
\begin{equation}
	\rho\eta = -k_B\int f \log(f)\,d\vec{\Xi},
\end{equation}
we obtain, via the chain rule, the evolution equation for the entropy density
\begin{subequations}
\begin{align}
	\label{eq:entropy_balance_law_intermediate}
	\rho \bigl[\partial_t \eta + \vec{u}\cdot\nabla_{\vec{x}} \eta\bigr] - \nabla_{\vec{x}}\cdot\frac{\rho k_B}{m}\cchevrons{\vec{V}\log(f)} - \frac{\rho k_B}{m} \cchevrons{\dot{\vec{\nu}}\cdot\nabla_{\vec{\nu}}\log(f)} &- \frac{\rho k_B}{m}\cchevrons{\vec{\mathcal{V}}\cdot\nabla_{\vec{\varsigma}}\log(f)} \\
																												&= -k_B\int \mathcal{Q}(f,f)\log(f)\,d\vec{\Xi}.
\end{align}
\end{subequations}
Since we are in a Lagrangian setting, $\dot{\vec{\nu}}$ is independent of $\vec{\nu}$. Furthermore, $\vec{\mathcal{V}}$ is independent of $\vec{\varsigma}$ by definition, see \eqref{eq:potential}. We can therefore simplify the previous expression by integrating by parts in $\vec{\nu}$ and $\vec{\varsigma}$ to obtain
\begin{equation}
	\label{eq:entropy_balance_law}
	\rho \bigl[\partial_t \eta + \vec{u}\cdot\nabla_{\vec{x}} \eta\bigr] + \nabla_{\vec{x}}\cdot \vec{\Phi} = \xi,
\end{equation}
where the entropy flux vector and the entropy production are defined, respectively, by
\begin{equation}
	\label{defn:entropy_flux_and_production}
	\vec{\Phi} = -\frac{\rho k_B}{m}\cchevrons{\vec{V}\log(f)}, \qquad \xi = - k_B\int \mathcal{Q}(f,f)\log(f)\,d\vec{\Xi}.
\end{equation}

\begin{lemma}[Vlasov power conservation]
	\label{lemma:vlasov_power}
	Let $f$ be a solution of \eqref{eq:cfmz} that decays, together with all its $\vec{x}$-derivatives, sufficiently fast at infinity. \uz{Assume that the kernel $\mathcal{W}$ is translation-invariant in space, $\mathcal{W}(\vec{x},\vec{\nu},\vec{\nu}_*,\vec{x}_*) = \widehat{\mathcal{W}}(\abs{\vec{x}-\vec{x}_*},\vec{\nu},\vec{\nu}_*)$, and additionally $\vec{x}$-independent, so that $\widehat{\mathcal{W}}(\abs{\vec{x}-\vec{x}_*},\vec{\nu},\vec{\nu}_*) = \widehat{\mathcal{W}}(\vec{\nu},\vec{\nu}_*)$;} assume further that it satisfies the exchange symmetry \eqref{eq:Wsym}, and that the binary collision rule \eqref{eq:binary_rule} preserves the orientations $\vec{\nu}$, $\vec{\nu}_*$ of the colliding particles during the collision\footnote{The orientation-preserving hypothesis is verified by the collision response model \eqref{eq:binary_rule} since only the linear momenta $\vec{p}$, $\vec{p}_*$ and the conjugate orientational momenta $\vec{\varsigma}$, $\vec{\varsigma}_*$ are exchanged in a binary encounter, while $\vec{\nu}$ and $\vec{\nu}_*$ enter the rule only through the geometric factor $\vec{r}\times\vec{n}$. The $\vec{x}$-independence of $\mathcal{W}$ is the regime in which every explicit calculation of the manuscript is performed; we will later use for example \cla{attractive} Kuramoto-type kernels of the form $\cla{\mathcal{W}_K}(\vec{\nu},\vec{\nu}_*) = \cla{-}\,\vec{\nu}\cdot\vec{\nu}_*$.}. Define
	\begin{equation}
		\label{eq:Uint}
		\mathcal{U}_{\mathrm{int}}(t) = \frac{\fix{1}}{2}\int\!\!\int \widehat{\mathcal{W}}(\vec{\nu},\vec{\nu}_*)\, f(\vec{x},\vec{p},\vec{\nu},\vec{\varsigma},t)\, f(\vec{x}_*,\vec{p}_*,\vec{\nu}_*,\vec{\varsigma}_*,t)\, d\vec{\Xi}\, d\vec{\Xi}_*\, d\vec{x}\, d\vec{x}_*,
	\end{equation}
	where $d\vec{\Xi}=d\vec{p}\, d\vec{\nu}\, d\vec{\varsigma}$ and $\rho\cchevrons{A} = m\!\int\! A\, f\, d\vec{\Xi}$. Then
	\begin{equation}
		\label{eq:vlasov_power_conservation}
		\int \frac{\rho}{\fix{m}}\cchevrons{\vec{\mathcal{V}}\cdot\dot{\vec{\nu}}}\, d\vec{x} \;+\; \frac{d}{dt}\mathcal{U}_{\mathrm{int}}(t) \;=\; 0.
	\end{equation}
\end{lemma}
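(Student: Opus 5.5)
We differentiate $\mathcal{U}_{\mathrm{int}}$ in time and substitute the kinetic equation \eqref{eq:cfmz} for $\partial_t f$. By the exchange symmetry \eqref{eq:Wsym}, the two terms generated by the product rule are equal, since relabelling $(\vec{x},\vec{\Xi})\leftrightarrow(\vec{x}_*,\vec{\Xi}_*)$ swaps them. Hence
\begin{equation*}
	\frac{d}{dt}\mathcal{U}_{\mathrm{int}} = \int\!\!\int \widehat{\mathcal{W}}(\vec{\nu},\vec{\nu}_*)\,\partial_t f(\vec{x},\vec{\Xi},t)\, f(\vec{x}_*,\vec{\Xi}_*,t)\,d\vec{\Xi}\,d\vec{\Xi}_*\,d\vec{x}\,d\vec{x}_*,
\end{equation*}
and the factor $\tfrac12$ in \eqref{eq:Uint} is absorbed. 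Next we introduce the mean-field potential $\Psi(\vec{\nu},t)=\int \widehat{\mathcal{W}}(\vec{\nu},\vec{\nu}_*) f_*\,d\vec{\Xi}_*\,d\vec{x}_*$. Because $\widehat{\mathcal{W}}$ is $\vec{x}$-independent, $\Psi$ depends on neither $\vec{x}$, $\vec{p}$ nor $\vec{\varsigma}$. Comparing with \eqref{eq:potential} gives $\vec{\mathcal{V}}=-\nabla_{\vec{\nu}}\Psi$, so the derivative above equals $\int \Psi\,\partial_t f\,d\vec{\Xi}\,d\vec{x}$.

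We then replace $\partial_t f$ by the four remaining terms of \eqref{eq:cfmz} and treat them one at a time.
\begin{itemize}
	\item The collision term $\int\Psi\,\mathcal{Q}\,d\vec{\Xi}$ vanishes. Because the collision preserves $\vec{\nu}$, $\Psi(\vec{\nu})$ behaves as a collision invariant. For the BGK model \eqref{eq:BGK} this follows directly, since $f^{(0)}$ shares $\lambda$, the orientational marginal, with $f$. For \eqref{eq:collision} it follows from the standard pre/post-collision symmetrisation with $\psi(\vec{\nu})+\psi(\vec{\nu}_*)$ unchanged by the collision.
	\item The transport term $\int\Psi\,\vec{p}/m\cdot\nabla_{\vec{x}}f$ is a pure $\vec{x}$-divergence, because $\Psi$ is $\vec{x}$-independent. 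It therefore vanishes by the assumed decay at infinity.
	\item The Vlasov term $\int \Psi\,\vec{\mathcal{V}}\cdot\nabla_{\vec{\varsigma}}f$ vanishes after integration by parts in $\vec{\varsigma}$. Both $\Psi$ and $\vec{\mathcal{V}}$ are $\vec{\varsigma}$-independent, and $f$ decays in $\vec{\varsigma}$.
	\item The streaming term $\int \Psi\,C_{m,\ell}^{-1}\vec{\varsigma}\cdot\nabla_{\vec{\nu}}f$ is integrated by parts on $\mathbb{S}^2$. The derivative moves onto $\Psi$ and yields $\int f\,\dot{\vec{\nu}}\cdot\nabla_{\vec{\nu}}\Psi\,d\vec{\Xi}=-\int f\,\vec{\mathcal{V}}\cdot\dot{\vec{\nu}}\,d\vec{\Xi}$.
\end{itemize}

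Accounting for the minus sign coming from moving the streaming term to the right-hand side of the equation for $\partial_t f$, we obtain
\begin{equation*}
	\frac{d}{dt}\mathcal{U}_{\mathrm{int}}=-\int\!\int f\,\vec{\mathcal{V}}\cdot\dot{\vec{\nu}}\,d\vec{\Xi}\,d\vec{x}=-\int\frac{\rho}{m}\cchevrons{\vec{\mathcal{V}}\cdot\dot{\vec{\nu}}}\,d\vec{x},
\end{equation*}
since $\rho\cchevrons{A}=m\int Af\,d\vec{\Xi}$. This is \eqref{eq:vlasov_power_conservation}.

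The main obstacle is the integration by parts on $\mathbb{S}^2$ in the streaming term. Because $\vec{\varsigma}\in T_{\vec{\nu}}\mathbb{S}^2$ is a fibre variable, one must check that the vector field $(\vec{\nu},\vec{\varsigma})\mapsto(\vec{\varsigma}/C_{m,\ell},\vec{\mathcal{V}})$ is divergence-free with respect to the measure $d\vec{\nu}\,d\vec{\varsigma}$ on the tangent bundle, that is, Liouville's theorem for the Hamiltonian flow. We would argue this as in the derivation of \eqref{eq:entropy_balance_law}, where the paper already integrates by parts in $\vec{\nu}$ using that $\dot{\vec{\nu}}$ is independent of $\vec{\nu}$ in canonical coordinates. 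If that argument proves delicate, we would instead write the kinetic equation in conservative form on $T\mathbb{S}^2$ with the symplectic volume, which makes both boundary terms vanish since $\mathbb{S}^2$ is closed.
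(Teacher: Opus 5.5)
Your proof is correct and follows the paper's argument essentially step by step. You symmetrise with the exchange symmetry to absorb the $\tfrac12$, substitute the kinetic equation, and discard the collision, $\vec{x}$-transport and $\vec{\varsigma}$-Vlasov contributions. You then integrate the $\vec{\nu}$-streaming term by parts to produce $-\vec{\mathcal{V}}$. Your potential $\Psi$ is only a repackaging of $\int\widehat{\mathcal{W}}f_*$, and the Liouville argument on $T\mathbb{S}^2$ is a more careful justification of the paper's step $\nabla_{\vec{\nu}}\cdot\dot{\vec{\nu}}=0$.

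One cosmetic slip in your streaming bullet: integrating $\int\Psi\,\dot{\vec{\nu}}\cdot\nabla_{\vec{\nu}}f$ by parts gives $-\int f\,\dot{\vec{\nu}}\cdot\nabla_{\vec{\nu}}\Psi=+\int f\,\vec{\mathcal{V}}\cdot\dot{\vec{\nu}}$, not the expression you wrote. It is the minus sign carried by $\partial_t f$ that then yields your correct final $-\int f\,\vec{\mathcal{V}}\cdot\dot{\vec{\nu}}$.
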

\begin{proof}
We use the shorthand $f := f(\vec{x},\vec{p},\vec{\nu},\vec{\varsigma},t)$ and $f_* := f(\vec{x}_*,\vec{p}_*,\vec{\nu}_*,\vec{\varsigma}_*,t)$, and write all integrals over the full phase space in each of the unstarred and starred variables, with $d\vec{\Xi}=d\vec{p}\,d\vec{\nu}\,d\vec{\varsigma}$.

Differentiating \eqref{eq:Uint} in time produces two summands by the product rule:
\begin{equation}
	\label{eq:vp_step1}
	\frac{d\mathcal{U}_{\mathrm{int}}}{dt} = \frac{\fix{1}}{2}\int\!\!\int \widehat{\mathcal{W}}(\vec{\nu},\vec{\nu}_*)\,\bigl[(\partial_t f)\, f_* \,+\, f\, (\partial_t f_*)\bigr]\, d\vec{\Xi}\, d\vec{\Xi}_*\, d\vec{x}\, d\vec{x}_*.
\end{equation}

The second summand can be brought to the form of the first by the change of variables
\begin{equation}
	\label{eq:vp_swap}
	\quad (\vec{x},\vec{p},\vec{\nu},\vec{\varsigma}) \,\longleftrightarrow\, (\vec{x}_*,\vec{p}_*,\vec{\nu}_*,\vec{\varsigma}_*),
\end{equation}
which has unit Jacobian and leaves the integration domain invariant. After this transformation, $\partial_t f_*$ becomes $\partial_t f$, $f$ becomes $f_*$, and $\widehat{\mathcal{W}}(\vec{\nu},\vec{\nu}_*)$ becomes $\widehat{\mathcal{W}}(\vec{\nu}_*,\vec{\nu}) = \widehat{\mathcal{W}}(\vec{\nu},\vec{\nu}_*)$, the last equality being the exchange symmetry \eqref{eq:Wsym}. The two summands are then identical, and
\begin{equation}
	\label{eq:vp_step2}
		\frac{d\mathcal{U}_{\mathrm{int}}}{dt} = \fix{\int\!\!\int} \widehat{\mathcal{W}}(\vec{\nu},\vec{\nu}_*)\, f_*\, \partial_t f\, d\vec{\Xi}\, d\vec{\Xi}_*\, d\vec{x}\, d\vec{x}_*.
\end{equation}
Here and below, $\partial_t f$ refers to the time derivative of $f$ at the unstarred phase point $(\vec{x},\vec{p},\vec{\nu},\vec{\varsigma})$. The starred variables enter the integrand only through $f_*$ and $\widehat{\mathcal{W}}(\vec{\nu},\vec{\nu}_*)$. We now substitute $\partial_t f$ from the kinetic equation \eqref{eq:cfmz},
\begin{equation}
	\label{eq:vp_kinetic}
	\partial_t f = -\frac{1}{m}\vec{p}\cdot\nabla_{\vec{x}} f \;-\; \dot{\vec{\nu}}\cdot\nabla_{\vec{\nu}} f \;-\; \vec{\mathcal{V}}\cdot\nabla_{\vec{\varsigma}} f \;+\; \mathcal{Q}(f,f),
\end{equation}
with $\dot{\vec{\nu}} = \mat{B}^{-1}\vec{\varsigma}$, into \eqref{eq:vp_step2} and inspect the four resulting contributions in turn. The $\vec{\varsigma}$-streaming contribution vanishes because $\widehat{\mathcal{W}}(\vec{\nu},\vec{\nu}_*)$ and $f_*$ are independent of $\vec{\varsigma}$, and $\vec{\mathcal{V}}(\vec{x},\vec{\nu},t)$ is independent of $\vec{\varsigma}$ by definition \eqref{eq:potential}, so integration by parts in $\vec{\varsigma}$ produces $\nabla_{\vec{\varsigma}}\!\cdot\vec{\mathcal{V}}=0$.

The $\vec{x}$-streaming contribution vanishes because $\widehat{\mathcal{W}}$ depends only on $(\vec{\nu},\vec{\nu}_*)$, $f_*$ only on the starred variables, and $\vec{p}$ is independent of $\vec{x}$, so $\vec{p}\cdot\nabla_{\vec{x}}f = \nabla_{\vec{x}}\!\cdot(\vec{p}\,f)$ is a pure $\vec{x}$-divergence and integrates to zero by decay at infinity. The collision contribution vanishes because the binary rule \eqref{eq:binary_rule} leaves $\vec{\nu}$ and $\vec{\nu}_*$ unchanged during the collision: at fixed $(\vec{\nu},\vec{\nu}_*)$ the rule is a measure-preserving map of $(\vec{p},\vec{\varsigma},\vec{p}_*,\vec{\varsigma}_*)$ alone, so the standard reciprocity argument of \cite[Section 2.2]{carrillo} runs pointwise in $\vec{\nu}$ and gives
\begin{equation}
	\label{eq:vp_collision_zero}
	\int \mathcal{Q}(f,f)(\vec{x},\vec{p},\vec{\nu},\vec{\varsigma})\, d\vec{p}\, d\vec{\varsigma} = 0 \qquad\text{at each fixed }(\vec{x},\vec{\nu}),
\end{equation}
and pulling the $(\vec{p},\vec{\varsigma})$-independent factor $\widehat{\mathcal{W}}(\vec{\nu},\vec{\nu}_*)\,f_*$ outside the inner $(\vec{p},\vec{\varsigma})$-integral then makes the collision contribution vanish.

Equivalently, the orientation-preservation of \eqref{eq:binary_rule} extends the list of collision invariants \eqref{eq:collinv} by the entire algebra of functions of $\vec{\nu}$ alone: for any $\psi(\vec{\nu})$, $\int\mathcal{Q}\,\psi(\vec{\nu})\,d\vec{p}\,d\vec{\nu}\,d\vec{\varsigma}=0$. The factor $\widehat{\mathcal{W}}(\vec{\nu},\vec{\nu}_*)\,f_*$, viewed as a function of $\vec{\nu}$ at fixed starred variables, falls in this extended class, and that is why $\int\mathcal{Q}\,d\vec{p}\,d\vec{\varsigma}=0$ holds already at each fixed $(\vec{x},\vec{\nu})$, rather than only after integration over $\vec{\nu}$.

Only the $\vec{\nu}$-streaming contribution survives. The phase-space flow $\dot{\vec{\nu}} = \mat{B}^{-1}\vec{\varsigma}$ is divergence-free on the sphere ($\nabla_{\vec{\nu}}\cdot\dot{\vec{\nu}} = 0$), so integration by parts in $\vec{\nu}$, with boundary integral over $\partial\mathcal{M}$ vanishing either because $\mathcal{M}$ is closed or by decay if it is unbounded, gives
\begin{equation}
	\label{eq:vp_Tnu_step1}
	\frac{d\mathcal{U}_{\mathrm{int}}}{dt} = \fix{\int\!\!\int} (\nabla_{\vec{\nu}} \widehat{\mathcal{W}})\, f_*\,\cdot\,\dot{\vec{\nu}}\, f\, d\vec{\Xi}\, d\vec{\Xi}_*\, d\vec{x}\, d\vec{x}_*.
\end{equation}
The integral of $(\nabla_{\vec{\nu}}\widehat{\mathcal{W}})\,f_*$ over the starred variables equals $-\vec{\mathcal{V}}(\vec{x},\vec{\nu},t)$ by the definition \eqref{eq:potential} of the Vlasov torque, and substituting this into \eqref{eq:vp_Tnu_step1} together with $\rho\cchevrons{A} = m\!\int A f\,d\vec{\Xi}$ yields
\begin{equation}
	\frac{d\mathcal{U}_{\mathrm{int}}}{dt} = -\!\int \vec{\mathcal{V}}\cdot\dot{\vec{\nu}}\, f\, d\vec{\Xi}\, d\vec{x} = -\!\int \frac{\rho}{\fix{m}}\cchevrons{\vec{\mathcal{V}}\cdot\dot{\vec{\nu}}}\, d\vec{x},
\end{equation}
which is \eqref{eq:vlasov_power_conservation}.
\end{proof}

\begin{remark}\label{rmk:swap_vs_collision_map}
\cla{The change of dummy variables in \eqref{eq:vp_swap} used in the proof above is a relabelling of integration variables on a symmetric integration domain. Its Jacobian is unity by construction. It should not be conflated with the \emph{collision map} $(\vec{p},\vec{\varsigma},\vec{p}_*,\vec{\varsigma}_*)\mapsto(\vec{p}',\vec{\varsigma}',\vec{p}'_*,\vec{\varsigma}'_*)$ that connects pre- and post-collisional momenta in \eqref{eq:binary_rule}, which is a genuine phase-space transformation. The collision map has unit Jacobian for our top-symmetric rigid-rod elastic rule (the same fact that yields detailed balance for the molecular constituents at hand, cf.~\cite[Theorem 3.12]{carrillo} and the derivation of the Maxwellian in \Cref{app:maxwellian_derivation}), but can fail to be measure-preserving for asymmetric microstructured molecules. This is the structural reason detailed balance generally fails in microstructured fluids \cite{carrillo}. The relabelling \eqref{eq:vp_swap} used in our proof has nothing to do with this issue: it would have unit Jacobian regardless of detailed balance.}
\end{remark}

\fix{The local interaction-energy density associated with the mean-field potential is
\begin{equation}
	\label{eq:eint_local}
	\rho(\vec{x},t)\,e_{\mathrm{int}}(\vec{x},t) := \frac{1}{2}\int \widehat{\mathcal{W}}(\vec{\nu},\vec{\nu}_*)\, f(\vec{x},\vec{p},\vec{\nu},\vec{\varsigma},t)\, f(\vec{x}_*,\vec{p}_*,\vec{\nu}_*,\vec{\varsigma}_*,t)\, d\vec{\Xi}\, d\vec{\Xi}_*\, d\vec{x}_*,
\end{equation}
the local form of the mean-field interaction energy $\mathcal{U}_{\mathrm{int}}$ of \eqref{eq:Uint}. It enters the equilibrium Helmholtz free energy of \Cref{sec:helmholtz}, whose minimisation fixes the Gibbs alignment, but it is excluded from the constitutive free energy \eqref{eq:psi_decomposition} used below. Adjoining $e_{\mathrm{int}}$ to the kinetic internal energy would reintroduce the Vlasov power as a source in the energy balance, and hence in the entropy production \eqref{eq:entropy_production_constitutive}. This source is not of flux--affinity form, so the Rajagopal--Srinivasa constrained maximisation of \Cref{sec:hybrid} would not close. Ignoring this term is physically benign. Alignment emerges as a gradient flow relaxing the orientational free energy toward its minimiser, the aligned Gibbs state, so in the strongly aligned regime $s\to 1$ in which the closure is performed, $e_{\mathrm{int}}$ sits at a near-stationary value and the Vlasov torque $\cchevrons{\vec{\nu}\times\vec{\mathcal{V}}}$ vanishes at the aligned equilibrium, as we will later prove in \Cref{corollary:kuramoto,corollary:onsager}. Furthermore, as $e_{\mathrm{int}}$ enters the balances only under a total time derivative, this near-stationarity makes its contribution vanish to leading order.}

We conclude this section by summarising the balance laws derived above for the mass density $\rho$, the bulk velocity $\vec{u}$, the bulk intrinsic angular momentum $\vec{\mu}$, the \fix{total internal energy $\widetilde{e} = e_{\mathrm{tr}} + e_{\mathrm{rot}}$}, and the entropy density $\eta$, i.e.
\begin{subequations}
	\label{eq:balance_laws}
	\begin{align}
		&\partial_t\rho + \nabla_{\vec{x}}\cdot(\rho\vec{u}) = 0, &&\\
		&\rho\bigl[\partial_t\vec{u} + (\nabla_{\vec{x}}\vec{u})\vec{u}\bigr] - \nabla_{\vec{x}}\cdot\mathbb{T} = 0, && \mathbb{T} = -\rho\cchevrons{\vec{V}\otimes\vec{V}}, \\
		&\rho\bigl[\partial_t\vec{\mu} + (\nabla_{\vec{x}}\vec{\mu})\vec{u}\bigr] \fix{-} \nabla_{\vec{x}}\cdot\mathbb{M} = \rho\cchevrons{\vec{\nu}\times\vec{\mathcal{V}}}, && \mathbb{M} = -\rho\cchevrons{\vec{V}\otimes(\vec{\nu}\times\vec{\Sigma})}, \\
		&\fix{\rho\bigl[\partial_t \widetilde{e} + \vec{u}\cdot\nabla_{\vec{x}} \widetilde{e}\bigr] - \mathbb{T}:\nabla_{\vec{x}}\vec{u} - \tfrac{1}{mC_{m,\ell}}\mathbb{M}:\nabla_{\vec{x}}\vec{\mu} + \nabla_{\vec{x}}\cdot\vec{Q} = 0}, &&\\
		&\rho\bigl[\partial_t \eta + \vec{u}\cdot\nabla_{\vec{x}} \eta\bigr] + \nabla_{\vec{x}}\cdot \vec{\Phi} = \xi, && \vec{\Phi} = -\frac{\rho k_B}{m}\cchevrons{\vec{V}\log(f)},
	\end{align}
\end{subequations}
where the \fix{energy balance is written for the kinetic internal energy $\widetilde{e}$ and is source-free under the micro--macro identification, by which the entire Vlasov power is absorbed into the budget of the bulk intrinsic angular momentum, see \Cref{rmk:mu_source_in_energy}}. The Vlasov torque $\vec{\mathcal{V}}$, the bare heat flux $\vec{Q}$ and the entropy production $\xi$ are defined as
\begin{align}
	\vec{\mathcal{V}}(\vec{x},\vec{\nu},t) &= -\!\int_{\mathbb{R}^3}\!\int_{\mathbb{R}^3}\!\int_{\mathcal{M}}\!\int_{\mathbb{R}^3} \nabla_{\vec{\nu}} \mathcal{W}(\vec{x},\vec{\nu},\vec{\nu}_*,\vec{x}_*)\, f(\vec{x}_*,\vec{p}_*,\vec{\nu}_*,\vec{\varsigma}_*,t) \, d\vec{x}_*\, d\vec{p}_*\, d\vec{\nu}_*\, d\vec{\varsigma}_*,\\
	\vec{Q} &= \rho\cchevrons{\vec{V}\!\left(\tfrac{1}{2}\abs{\vec{V}}^2 + \tfrac{\abs{\vec{\Sigma}}^2}{2\fix{m}C_{m,\ell}}\right)}, \\
	\xi &= -k_B\!\int \mathcal{Q}(f,f)\log(f)\,d\vec{\Xi},
\end{align}
and the macroscopic state variables read
\begin{align}
	\rho(\vec{x},t) &= m\!\int f \, d\vec{p}\, d\vec{\nu}\, d\vec{\varsigma}, &
	\vec{u}(\vec{x},t) &= \cchevrons{\vec{v}} = \tfrac{1}{n}\!\int \vec{v} f \, d\vec{\Xi}, \\
	\vec{\mu}(\vec{x},t) &= \cchevrons{\vec{\nu}\times\vec{\varsigma}} = \tfrac{1}{n}\!\int (\vec{\nu}\times\vec{\varsigma}) f \, d\vec{\Xi}, &
	\fix{\widetilde{e}(\vec{x},t)} &= \tfrac{1}{2}\cchevrons{\abs{\vec{V}}^2} + \tfrac{1}{2\fix{m}C_{m,\ell}} \cchevrons{\abs{\vec{\Sigma}}^2}, \\
	\rho(\vec{x},t)\eta(\vec{x},t) &= -k_B\!\int f \log(f)\,d\vec{\Xi}.
\end{align}

\section{Second law of thermodynamics and emergence of alignment}
\label{sec:hthm}
As briefly mentioned in \Cref{sec:kt}, one of the crucial features of the kinetic theory developed in \cite{carrillo} is the existence of an H-theorem-type result under the assumption that the group action $\mathcal{A}$ is transitive \cite[Theorem 3.13]{carrillo}\footnote{The problem of proving this when the group action is not transitive remains open, and we conjecture that a form of the H-theorem remains valid even when $\mathcal{A}$ is not transitive.}.
The importance of an H-theorem-type result is twofold. On the one hand, it guarantees that the kinetic theory is in some sense\footnote{The issue of devising a precise mathematical statement of the second law of thermodynamics remains open. Many, including a subset of the authors, would not agree with identifying the H-theorem with the second law of thermodynamics. We refer the reader interested in this issue to \cite[Remark 5]{malek} and \cite[Chapter I, Section 5]{Truesdell}.} consistent with the second law of thermodynamics. On the other hand, it establishes that the kinetic equation admits a monotone entropy functional, which is a crucial step in the study of the long-time behaviour of the system and of its relaxation to equilibrium.
\uz{Replacing the full Boltzmann operator with the BGK operator \eqref{eq:BGK} of \Cref{sec:bgk} turns the H-theorem into an immediate consequence of \Cref{lemma:BGK_entropy}: the local entropy production $\xi$ is non-negative pointwise, with equality if and only if $f$ coincides with the Maxwellian \eqref{eq:maxwellian}, and integrating the entropy balance \eqref{eq:entropy_balance_law} over the spatial domain upgrades this to a global statement. 

\begin{proposition}
	\label{prop:H_theorem}
	Let $f$ be a solution of \eqref{eq:cfmz} with the BGK collision operator \eqref{eq:BGK} of \Cref{sec:bgk}, and let $\xi$ be the entropy production of \eqref{eq:entropy_balance_law}. Then under the assumption that the entropy flux $\vec{\Phi}$ decays sufficiently fast at infinity the entropy of the system is non-decreasing in time, i.e.
	\begin{equation}
		\frac{d}{dt}\mathcal{H}(t) \geq 0, \qquad \mathcal{H}(t) = \int \rho(\vec{x},t)\eta(\vec{x},t) \, d\vec{x},
	\end{equation}
	with equality if and only if $f$ is the Maxwellian \eqref{eq:maxwellian}.
\end{proposition}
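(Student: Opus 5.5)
The plan is to integrate the local entropy balance \eqref{eq:entropy_balance_law} over the spatial domain and combine it with the mass balance. Since $\rho\eta = -k_B\int f\log f\,d\vec{\Xi}$, I will first rewrite the material derivative in conservative form. Using the continuity equation \eqref{eq:continuity_eq}, one has $\rho[\partial_t\eta + \vec{u}\cdot\nabla_{\vec{x}}\eta] = \partial_t(\rho\eta) + \nabla_{\vec{x}}\cdot(\rho\eta\vec{u})$, so \eqref{eq:entropy_balance_law} becomes
\begin{equation*}
	\partial_t(\rho\eta) + \nabla_{\vec{x}}\cdot\bigl(\rho\eta\vec{u} + \vec{\Phi}\bigr) = \xi .
\end{equation*}
I will then integrate over $\vec{x}\in\mathbb{R}^3$ and interchange $d/dt$ with the spatial integral. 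By the divergence theorem, the flux term reduces to a boundary integral at infinity. This boundary integral vanishes under the stated decay hypothesis on $\vec{\Phi}$, and under the analogous decay of $\rho\eta\vec{u}$; the latter is implicit in the same hypothesis, since both are velocity moments of $f\log f$. The result is $\frac{d}{dt}\mathcal{H}(t) = \int \xi\,d\vec{x}$.

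The sign then follows directly from \Cref{lemma:BGK_entropy}. At each $(\vec{x},t)$ the entropy production is
\begin{equation*}
	\xi = \frac{k_B}{\tau}\int (f-f^{(0)})\log(f/f^{(0)})\,d\vec{\Xi} \geq 0,
\end{equation*}
so the spatial integral is non-negative, which gives the inequality.

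For the equality case, suppose $\frac{d}{dt}\mathcal{H}=0$. Then $\int\xi\,d\vec{x}=0$ with $\xi\geq 0$, so $\xi=0$ for almost every $\vec{x}$. The pointwise integrand $(f-f^{(0)})\log(f/f^{(0)})$ is itself non-negative, so it vanishes almost everywhere in $\vec{\Xi}$. Hence $f=f^{(0)}$ at almost every $(\vec{x},\vec{\Xi})$, which is exactly the equality statement of \Cref{lemma:BGK_entropy} applied pointwise in $\vec{x}$. Conversely, if $f=f^{(0)}$ then $\xi\equiv0$ and the derivative vanishes.

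I expect the main obstacle to be justifying the conservative rewriting and the vanishing of the boundary terms. The entropy balance \eqref{eq:entropy_balance_law} was derived through integrations by parts in $\vec{\nu}$ and $\vec{\varsigma}$, together with the Lagrangian independence of $\dot{\vec{\nu}}$ and of $\vec{\mathcal{V}}$. I will take that derivation as given. I also need $\rho\eta$ to be integrable, which I will take as part of the decay assumption; the proposition should be read with that regularity understood. A further point is that "equality iff $f$ is the Maxwellian" must be understood almost everywhere and at the given time $t$, with $f^{(0)}$ built from the local moments of $f$.
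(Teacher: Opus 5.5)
Your proposal follows the paper's proof: integrate the entropy balance \eqref{eq:entropy_balance_law} over space using the continuity equation, discard the divergence term by the decay hypothesis, and invoke \Cref{lemma:BGK_entropy} for both the sign and the equality case. Your version is slightly more careful in two places. You account explicitly for the convective flux $\rho\eta\vec{u}$, which the paper's first equality drops tacitly; its decay is an additional assumption, not a consequence of the one on $\vec{\Phi}$. You also spell out the almost-everywhere argument for the equality case.
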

\begin{proof}
	Differentiating $\mathcal{H}(t)$ and using \eqref{eq:continuity_eq}, \eqref{eq:entropy_balance_law} and the divergence theorem,
	\begin{equation*}
		\frac{d}{dt}\mathcal{H}(t) = \int \rho\bigl[\partial_t \eta + \vec u\cdot\nabla_{\vec x}\eta\bigr]\, d\vec{x} = -\int \nabla_{\vec{x}}\!\cdot\!\vec{\Phi}\, d\vec{x} + \int \xi\, d\vec{x} = \int \xi\, d\vec{x} \geq 0,
	\end{equation*}
	the decay hypothesis on $\vec{\Phi}$ discarding the divergence integral and the last inequality being \Cref{lemma:BGK_entropy}.
\end{proof}
}

The H-theorem of \Cref{prop:H_theorem} controls the relaxation of the distribution function $f$ in the $(\vec{p},\vec{\varsigma})$ variables, but it gives no information about the orientational distribution $\lambda(\vec{\nu},t)$, which enters the Maxwellian \eqref{eq:maxwellian} as a free parameter. To determine $\lambda$, we must keep track of the energy stored in the mean-field potential, which the entropy balance \eqref{eq:entropy_balance_law} cannot see. This is analogous to keeping track of the $L^2$ norm of the electric field when studying waves in plasmas \cite{villani, landau}. We will do so in the next section. 

\subsection{Helmholtz free energy}\label{sec:helmholtz}
Given the form of the Maxwellian \eqref{eq:maxwellian}, determined by the macroscopic quantities $\rho, \lambda, \theta, \vec{u}, \vec{\mu}$, we adopt the Helmholtz free energy as the thermodynamic potential,
\begin{equation}
	\rho(\vec{x},t)\psi(\vec{x},t) = \rho(\vec{x},t)\Big(e(\vec{x},t)-\theta(\vec{x},t)\eta(\vec{x},t)\Big).
\end{equation}
It is illuminating to unpack the equilibrium value of the free-energy density $\rho \psi^{(0)}$ and exhibit its decomposition into a standard polyatomic-gas contribution and a purely orientational contribution. At equilibrium, $e^{(0)}_{\mathrm{tr}}$ and $e^{(0)}_{\mathrm{rot}}$ can be computed from \eqref{eq:maxwellian}:
\begin{equation}
	\label{eq:equipartition_easy}
	e^{(0)}_{\mathrm{tr}} = \frac{3}{2}\frac{k_B}{m}\theta, \qquad e^{(0)}_{\mathrm{rot}} = \frac{k_B}{m}\theta,
\end{equation}
where the prefactor of $e^{(0)}_{\mathrm{rot}}$ reflects the two-dimensional orientational momentum space $T_{\vec{\nu}}\mathbb{S}^2$ together with the reduction $C_{m,\ell}$ from \eqref{eq:maxwellian}. A direct calculation of the kinetic entropy density $\rho\eta^{(0)} = -k_B\!\int f^{(0)}\log f^{(0)}\,d\vec{\Xi}$ then yields, up to an additive constant independent of the thermodynamic state,
\begin{equation}
	\label{eq:entropy_density_maxwellian}
	\rho\eta^{(0)} = \frac{k_B\rho}{m}\!\left[\frac{5}{2}\bigl(1+\log(2\pi k_B\theta)\bigr) - \log\frac{\rho}{m} - \int_{\mathcal{M}} \lambda(\vec{\nu})\log\lambda(\vec{\nu})\, d\vec{\nu}\right],
\end{equation}
and combining \eqref{eq:equipartition_easy}--\eqref{eq:entropy_density_maxwellian} with the local interaction energy \eqref{eq:eint_local} produces the local Helmholtz free-energy density at equilibrium,
\begin{equation}
	\label{eq:helmholtz_density_maxwellian}
	\rho\psi^{(0)} = \rho\bigl(e^{(0)}_{\mathrm{tr}} + e^{(0)}_{\mathrm{rot}} + e_{\mathrm{int}}^{(0)}\bigr) - \theta\rho\eta^{(0)}.
\end{equation}
\fix{The interaction energy $e_{\mathrm{int}}^{(0)}$ enters here, in the equilibrium free energy whose minimisation fixes the Gibbs alignment of \Cref{sec:gibbs_variational}, but not in the constitutive free energy \eqref{eq:psi_decomposition} that drives the Chapman--Enskog closure, for the reasons given after \eqref{eq:eint_local}.}
Collecting in \eqref{eq:helmholtz_density_maxwellian} all terms that depend only on $(\rho,\theta)$ and not on the orientational profile $\lambda$ defines the \emph{standard} part of the free-energy density,
\begin{equation}
	\label{eq:psi_std}
	\rho\psi^{(0)}_{\mathrm{std}}(\rho,\theta) := -\frac{k_B\theta\rho}{m}\!\left[\frac{5}{2}\log(2\pi k_B\theta) - \log\frac{\rho}{m}\right] + \text{const.},
\end{equation}
which coincides with the Helmholtz free-energy density of an ideal polyatomic gas with three translational and two rotational degrees of freedom \cite{landau}, while the residue
\begin{equation}
	\label{eq:psi_ori}
	\rho\psi^{(0)}_{\mathrm{ori}}[\lambda;\rho,\theta] := \rho\, e_{\mathrm{int}}[\lambda;\rho] + \frac{k_B\theta\rho}{m}\int_{\mathcal{M}} \lambda(\vec{\nu})\log\lambda(\vec{\nu})\, d\vec{\nu}
\end{equation}
is purely orientational and isolates the part of the free energy responsible for alignment dynamics. \uz{The total Helmholtz free energy obtained by integrating $\rho\psi^{(0)}$ over the spatial domain,
\begin{equation}
	\label{eq:helmholtz_free_energy}
	\mathfrak{A}(t) := \int_\Omega \rho(\vec{x},t)\,\psi^{(0)}(\vec{x},t)\, d\vec{x},
\end{equation}
plays the role of a thermodynamic potential whose dissipation rate is studied in \Cref{sec:hybrid}.} Thus the Helmholtz free energy can be decomposed as
\begin{equation}
	\label{eq:helmholtz_decomposition}
	\psi^{(0)} = \psi^{(0)}_{\mathrm{std}}(\rho,\theta) + \psi^{(0)}_{\mathrm{ori}}[\lambda;\rho,\theta].
\end{equation}

\subsection{Orientational Gibbs distribution}\label{sec:gibbs_variational}
Henceforth we assume that $\lambda$ has the form of a Gibbs distribution \cite{onsager, virgaOnsager}:
	\begin{equation}
		\label{eq:gibbs_variational}
		\uz{\lambda(\vec{x}, \vec{\nu}) = \frac{1}{Z(\vec{x})}\exp\!\left(-\frac{1}{m k_B\theta}\int_\Omega \int_{\mathbb{S}^2} \widehat{\mathcal{W}}(\abs{\vec{x}-\vec{x}_*}, \vec{\nu},\vec{\nu}_*)\, \rho(\vec{x}_*)\,\lambda(\vec{x}_*,\vec{\nu}_*)\, d\vec{x}_*\, d\vec{\nu}_*\right),}
	\end{equation}
	\uz{where $Z(\vec{x})$ is chosen so that $\int_{\mathbb{S}^2}\lambda(\vec{x},\vec{\nu})\,d\vec{\nu} = 1$ at every $\vec{x}$.}
To illustrate the emergence of different orientational correlations, we now focus on the space-homogeneous setting where every molecule interacts with every other molecule regardless of distance:
	\begin{equation}
		\label{eq:gibbs_variational_hom}
		\uz{\lambda(\vec{\nu}) = \frac{1}{Z}\exp\!\left(-\frac{1}{k_B\theta}\int_{\mathbb{S}^2} \widehat{\mathcal{W}}(\vec{\nu},\vec{\nu}_*)\,\lambda(\vec{\nu}_*)\, d\vec{\nu}_*\right),}
	\end{equation}
	with $Z$ chosen so that $\int_{\mathbb{S}^2}\lambda\,d\vec{\nu} = 1$.
The remainder of this section studies specific forms of interaction potentials that often appear in the context of complex fluids, namely the Kuramoto potential for molecules without head-to-tail symmetry and the Maier--Saupe potential for molecules with head-to-tail symmetry.
\subsubsection{Kuramoto-type potentials}

We first consider the \cla{attractive} Kuramoto-type kernel $\cla{\mathcal{W}_K}(\vec{\nu},\vec{\nu}_*) = \cla{-}\,\vec{\nu}\cdot\vec{\nu}_*$. 
For this kernel, \eqref{eq:gibbs_variational} reduces to the Gibbs distribution
\begin{equation}
	\label{eq:gibbs}
	\lambda(\vec{\nu}) = \frac{1}{Z} \exp\left(-\frac{1}{k_B\theta}\int_{\cla{\mathbb{S}^2}}\cla{\widehat{\mathcal{W}}}(\vec{\nu},\vec{\nu}_*)\lambda(\vec{\nu}_*)\,d\vec{\nu}_*\right),
\end{equation}
where $Z$ is a normalisation constant ensuring $\int_{\cla{\mathbb{S}^2}}\lambda\,d\vec{\nu}=1$ and the dimensionless coupling has been absorbed into a redefinition of the temperature scale.
\begin{proposition}
	\label{corollary:kuramoto}
		Under the hypothesis that the equilibrium distribution of molecular orientations $\lambda$ is given by \eqref{eq:gibbs}, with a Kuramoto-type mean-field potential 
	\begin{equation}
		\cla{\mathcal{W}_K}(\vec{\nu}, \vec{\nu}_*) = \cla{-}\,\vec{\nu}\cdot\vec{\nu}_*,
	\end{equation}
	\fix{the order parameter $\vec{\gamma} = \int_{\mathbb{S}^2}\vec{\nu}\,\lambda(\vec{\nu})\,d\vec{\nu}$ satisfies the self-consistent equation
	\begin{equation}\label{eq:gibbs_self_consistent}
		\vec{\gamma} = \frac{1}{Z}\int_{\mathbb{S}^2}\vec{\nu}\,\exp\!\left(\tfrac{1}{k_B\theta}\,\vec{\nu}\cdot\vec{\gamma}\right)d\vec{\nu},\qquad Z = \int_{\mathbb{S}^2}\exp\!\left(\tfrac{1}{k_B\theta}\,\vec{\nu}\cdot\vec{\gamma}\right)d\vec{\nu},
	\end{equation}
	and} the following statements hold:
	\begin{enumerate}
		\item The isotropic distribution $\lambda_{\mathrm{iso}}(\vec{\nu}) = \frac{1}{4\pi}$ is compatible with the Gibbs distribution and the Vlasov torque $\vec{\mathcal{V}}$ vanishes at equilibrium.
		\item If $k_B\theta \cla{>} \frac{1}{3}$ the isotropic distribution is the only solution of \eqref{eq:gibbs_self_consistent}, while if $k_B\theta \cla{<} \frac{1}{3}$ non-zero aligned solutions of \eqref{eq:gibbs_self_consistent} emerge.
		\item The non-zero aligned solutions of \eqref{eq:gibbs_self_consistent} are such that the torque induced by the Vlasov torque $\vec{\mathcal{V}}$ is zero, i.e.\ $\cchevrons{\vec{\nu}\times\vec{\mathcal{V}}} = \vec{0}$.
	\end{enumerate}
\end{proposition}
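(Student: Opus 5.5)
Substituting $\widehat{\mathcal{W}}=\mathcal{W}_K=-\vec{\nu}\cdot\vec{\nu}_*$ into \eqref{eq:gibbs}, the exponent becomes $\frac{1}{k_B\theta}\vec{\nu}\cdot\int_{\mathbb{S}^2}\vec{\nu}_*\lambda(\vec{\nu}_*)\,d\vec{\nu}_*=\frac{1}{k_B\theta}\vec{\nu}\cdot\vec{\gamma}$. Hence $\lambda(\vec{\nu})=Z^{-1}\exp(\vec{\nu}\cdot\vec{\gamma}/k_B\theta)$, and taking the first moment of this expression gives \eqref{eq:gibbs_self_consistent} directly. The same substitution into \eqref{eq:potential}, restricted to the space-homogeneous per-particle normalisation, yields $\vec{\mathcal{V}}(\vec{\nu})=-\nabla_{\vec{\nu}}\bigl(-\vec{\nu}\cdot\vec{\gamma}\bigr)=(\mat{I}-\vec{\nu}\otimes\vec{\nu})\vec{\gamma}$, the tangential projection of $\vec{\gamma}$. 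This explicit form drives items 1 and 3.

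For item 1, the isotropic density has $\vec{\gamma}=\int\vec{\nu}\,d\vec{\nu}/4\pi=\vec{0}$ by the antipodal symmetry $\vec{\nu}\mapsto-\vec{\nu}$ of $\mathbb{S}^2$. With $\vec{\gamma}=\vec{0}$ the exponent vanishes, so $\lambda=1/Z=1/(4\pi)$ is self-consistent, and $\vec{\mathcal{V}}=(\mat{I}-\vec{\nu}\otimes\vec{\nu})\vec{0}=\vec{0}$.

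For item 2, rotational invariance lets us take $\vec{\gamma}=\gamma\vec{e}_3$ with $\gamma\ge0$, and set $\beta=1/(k_B\theta)$. In polar coordinates, $Z=4\pi\sinh(\beta\gamma)/(\beta\gamma)$ and the $\vec{e}_3$ component of the moment is the Langevin function, so the scalar equation is
\[
\gamma=L(\beta\gamma),\qquad L(x)=\coth x-\tfrac1x .
\]
The transverse components vanish by azimuthal symmetry. I would use three properties of $L$: it is odd, $L'(0)=1/3$, and $L$ is strictly concave on $(0,\infty)$ with $L(x)<1$. Define $g(\gamma)=L(\beta\gamma)-\gamma$.
\begin{itemize}
\item If $\beta/3<1$, concavity gives $L(\beta\gamma)<\beta\gamma/3<\gamma$ for every $\gamma>0$, so $\gamma=0$ is the only root. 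This is the case $k_B\theta>1/3$.
\item If $\beta/3>1$, then $g'(0)>0$ while $g(1)<0$, so the intermediate value theorem gives a root $\gamma\in(0,1)$. This is the case $k_B\theta<1/3$.
\end{itemize}

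The main obstacle is the strict concavity of $L$ on $(0,\infty)$, i.e.\ $L''<0$. I would first try the series route: write $L(x)=\sum_{k\ge1}\frac{2x}{x^2+k^2\pi^2}$ via the Mittag-Leffler expansion of $\coth$ and check that each summand is concave for $x>0$. Differentiating, each summand has second derivative $4x(x^2-3k^2\pi^2)/(x^2+k^2\pi^2)^3$, which is negative only for $x<\sqrt3\,k\pi$. Since $x^2>3\pi^2$ is not excluded, this route only gives concavity on $(0,\sqrt3\pi)$. For larger $x$ I would fall back on a direct elementary inequality proving $L''<0$. If that also fails, I would at least establish $L(x)<x/3$ for $x>0$, which is all the uniqueness claim actually needs. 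That bound reduces to $x\coth x<1+x^2/3$, which follows by comparing the Taylor coefficients of $x\cosh x$ and $(1+x^2/3)\sinh x$.

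For item 3, with $\vec{\mathcal{V}}=(\mat{I}-\vec{\nu}\otimes\vec{\nu})\vec{\gamma}$ we get $\vec{\nu}\times\vec{\mathcal{V}}=\vec{\nu}\times\vec{\gamma}$. Averaging against $\lambda$ gives
\[
\cchevrons{\vec{\nu}\times\vec{\mathcal{V}}}=\Bigl(\int\vec{\nu}\lambda\,d\vec{\nu}\Bigr)\times\vec{\gamma}=\vec{\gamma}\times\vec{\gamma}=\vec{0}.
\]
This holds for every solution of \eqref{eq:gibbs_self_consistent}, aligned or isotropic.
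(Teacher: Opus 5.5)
Your proof is correct. Items 1 and 3 follow the paper's argument: the same reduction $\lambda=Z^{-1}\exp(\vec{\nu}\cdot\vec{\gamma}/k_B\theta)$, the same observation $\vec{\gamma}_{\mathrm{iso}}=\vec{0}$, and the same identity $(\int\vec{\nu}\lambda\,d\vec{\nu})\times\vec{\gamma}=\vec{\gamma}\times\vec{\gamma}$. The paper uses the ambient gradient and writes $\vec{\mathcal{V}}=\vec{\gamma}$, whereas you use the tangential projection; both give the torque $\vec{\nu}\times\vec{\gamma}$.

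The genuine difference is item 2. The paper reduces to $s=\coth(s/k_B\theta)-k_B\theta/s$, Taylor-expands to $s=s/(3k_B\theta)+O(s^3)$, and appeals to bifurcation theory. That identifies the threshold $k_B\theta=1/3$ and the pitchfork structure. Strictly, though, it only shows that $s=0$ is locally isolated for $k_B\theta>1/3$ and that branches bifurcate near the critical temperature. Your argument is global. The bound $L(x)<x/3$ for $x>0$ excludes every nonzero root when $\beta<3$. The conditions $g'(0)>0$ and $g(1)<0$ give a root in $(0,1)$ for every $\beta>3$. So you actually prove both halves of item 2 as stated, with the bonus $0<\abs{\vec{\gamma}}<1$.

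You can drop the concavity detour and lead with the Taylor comparison, which does work. The coefficient of $x^{2n+1}$ in $(1+x^2/3)\sinh x-x\cosh x$ is $4n(n-1)/\bigl(3(2n+1)!\bigr)$. It vanishes for $n=0,1$ and is positive for $n\geq 2$.

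Strict concavity of $L$ on $(0,\infty)$ is in fact true: it is equivalent to $x^3\cosh x<\sinh^3 x$ (Lazarevi\'c's inequality). It would additionally show that the positive root is unique for $\beta>3$, but the proposition does not claim that.

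In the intermediate-value step, state explicitly that $g(0)=0$ and $g'(0)>0$ give some $\delta>0$ with $g(\delta)>0$. You then apply the theorem on $[\delta,1]$.
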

\begin{figure}
\begin{tikzpicture}
	\begin{groupplot}[
	    group style={
		group size=3 by 1,
		horizontal sep=1.6cm
	    },
	    width=0.32\textwidth,
	    height=0.28\textwidth,
	    axis lines=middle,
	    xlabel={$x$},
	    ylabel={$y$},
	    xmin=-2, xmax=2,
	    ymin=-2, ymax=2,
	    samples=300,
	    unbounded coords=jump,
	]

	\nextgroupplot[]
	\addplot[thick] {x};
	\addplot[thick,dashed,domain=-2:-0.02] {cosh(x)/sinh(x) - 1/x};
	\addplot[thick,dashed,domain=0.02:2]  {cosh(x)/sinh(x) - 1/x};

	\nextgroupplot[]
	\addplot[thick] {x};
	\addplot[thick,dashed,domain=-2:-0.01] {cosh(6*x)/sinh(6*x) - 1/(6*x)};
	\addplot[thick,dashed,domain=0.01:2]   {cosh(6*x)/sinh(6*x) - 1/(6*x)};

	\nextgroupplot[]
	\addplot[thick] {x};
	\addplot[thick,dashed,domain=-2:-0.005] {cosh(100*x)/sinh(100*x) - 1/(100*x)};
	\addplot[thick,dashed,domain=0.005:2]   {cosh(100*x)/sinh(100*x) - 1/(100*x)};

	\end{groupplot}
\end{tikzpicture}
\caption{Graphical representation of the self-consistent equation for the order parameter $s$ in the case of a Kuramoto-type mean-field potential. The dashed lines represent the function $I(s) = \coth\left(\frac{s}{k_B\theta}\right) - \frac{k_B\theta}{s}$ while the solid line represents the function $s$. Each plot corresponds to a different value of the temperature $\theta$; from left to right, the temperature $\theta$ decreases.}
\end{figure}

The bifurcation discussed in \Cref{corollary:kuramoto} is a pitchfork bifurcation of \hfe{$\mathfrak{A}[\lambda]$}: \cla{above the critical temperature ($k_B\theta > 1/3$) the isotropic branch is the unique minimiser, while below it ($k_B\theta < 1/3$)} the isotropic branch becomes a saddle and the ordered branches are global minimisers of \hfe{$\mathfrak{A}$}. This analysis constitutes a structural improvement over the inviscid theory of \cite{farrell}, which closes the moment hierarchy around an already-aligned state and is therefore unable to capture the transition.

\begin{proof}
	We define the order parameter $\vec{\gamma}$ related to the state of alignment of the system, i.e.
	\begin{equation}
		\vec{\gamma} = \int_{\mathcal{M}}\vec{\nu}\lambda(\vec{\nu})\,d\vec{\nu} \quad \Rightarrow \quad \vec{\nu}\cdot\vec{\gamma} = \int_{\mathcal{M}}\vec{\nu}\cdot\vec{\nu}_*\lambda(\vec{\nu}_*)\,d\vec{\nu}_*,
	\end{equation}
		Notice that $\vec{\gamma}$ is no longer constrained to $\mathbb{S}^2$ and that, \cla{using $\mathcal{W} = -\vec{\nu}\cdot\vec{\nu}_*$}, we can rewrite the Gibbs distribution as
	\begin{equation}
		\lambda(\vec{\nu}) = \frac{1}{Z} \exp\!\left(\cla{\frac{1}{k_B\theta}}\,\vec{\nu}\cdot\vec{\gamma}\right).
	\end{equation}
	Substituting this form of $\lambda(\vec{\nu})$ in the definition of $\vec{\gamma}$ recovers the self-consistent equation \eqref{eq:gibbs_self_consistent} for the order parameter $\vec{\gamma}$, namely
	\begin{equation}
		\vec{\gamma} = \int_{\mathcal{M}}\vec{\nu}\lambda(\vec{\nu})\,d\vec{\nu} = \frac{1}{Z} \int_{\mathcal{M}}\vec{\nu}\exp\!\left(\cla{\frac{1}{k_B\theta}}\,\vec{\nu}\cdot\vec{\gamma}\right)\,d\vec{\nu}.
	\end{equation}
	We begin by observing that the isotropic state, i.e.\ $\lambda_{\mathrm{iso}}(\vec{\nu}) = \frac{1}{4\pi}$, is always a solution of the previous equation, since
	\begin{equation}
		\vec{\gamma}=\int_{\mathcal{M}}\vec{\nu}\lambda_{\mathrm{iso}}(\vec{\nu})\,d\vec{\nu} = \frac{1}{4\pi} \int_{\mathcal{M}}\vec{\nu}\,d\vec{\nu} = 0\; \text{and} \; \lambda_{\mathrm{iso}}(\vec{\nu}) = \frac{1}{Z} \exp\!\left(\cla{\frac{1}{k_B\theta}}\,\vec{\nu}\cdot\vec{\gamma}\right) = \frac{1}{Z} \exp(0) = \frac{1}{Z},
	\end{equation}
	and $Z$ is equal to $4\pi$ since the area of the unit sphere $\mathbb{S}^2$ is $4\pi$.
	Furthermore, the isotropic state is compatible with the condition that the Vlasov torque $\vec{\mathcal{V}}$ vanishes at equilibrium, since \cla{with $\mathcal{W} = -\vec{\nu}\cdot\vec{\nu}_*$ we have $-\nabla_{\vec{\nu}}\mathcal{W} = \vec{\nu}_*$, hence}
	\begin{equation}
		\label{eq:Vlasov_self_consistent}
			\vec{\mathcal{V}}(\vec{\nu}) = -\int \nabla_{\vec{\nu}}\cla{[-\vec{\nu}\cdot\vec{\nu}_*]}\,\lambda_{\mathrm{iso}}(\vec{\nu}_*)f^{(0)}_{\vec{p}_*,\vec{\varsigma}_*} \, d\vec{\Xi}_* = \int \vec{\nu}_* \lambda_{\mathrm{iso}}(\vec{\nu}_*)f^{(0)}_{\vec{p}_*,\vec{\varsigma}_*} \, d\vec{\Xi}_* = \vec{\gamma}_{\mathrm{iso}} = \vec{0}.
		\end{equation}
		More generally, we can rewrite $\vec{\gamma}$ as
	\begin{equation}
		\vec{\gamma} = s\widehat{\vec{\nu}} \label{eq:nematic_dir_ericksen}
	\end{equation}
	where $s$ is the scalar order parameter and $\widehat{\vec{\nu}}\in \mathbb{S}^2$ is the aligned director of the system. Under this rewriting the self-consistent equation for the order parameter $\vec{\gamma}$ can be rewritten\footnote{The idea originated in the context of the Heisenberg model. To see the rewriting it is sufficient to rewrite the Gibbs distribution in terms of $\alpha=\frac{s}{k_B \theta}$ as $\lambda(\vec{\nu})=\frac{\alpha}{4\pi \sinh(\alpha)}\exp(\alpha\vec{\nu}\cdot\widehat{\vec{\nu}})$ and evaluate its first moment to be $(\coth(\alpha)-\frac{1}{\alpha})\widehat{\vec{\nu}}$ and substitute it on the right-hand side of the self-consistency equation.} as a self-consistent equation for the scalar order parameter $s$ \cite{Kirkpatrick}, i.e.
	\begin{equation}
		s = I(s), \qquad I (s) = \coth\left(\frac{s}{k_B\theta}\right) - \frac{k_B\theta}{s}.
	\end{equation}
		Taylor-expanding $I(s)$ near the critical point $s = 0$, which we have previously shown to be a solution of the self-consistent equation corresponding to the isotropic state, we obtain
	\begin{equation}
		s = \frac{s}{3k_B\theta} + O(s^3).
	\end{equation}
		Therefore, classical results in bifurcation theory suggest that if $k_B\theta \cla{>} \frac{1}{3}$ the isotropic state is the only solution of the self-consistent equation, while if $k_B\theta \cla{<} \frac{1}{3}$ a pair of non-zero scalar branches $s=\pm s_*$ emerges for a fixed polar axis, equivalently an aligned branch with arbitrary orientation when $s$ is taken to be non-negative.
		Lastly, we prove that for the non-zero aligned solutions of the self-consistent equation for the orientational distribution $\lambda$, the mean-field torque $\cchevrons{\vec{\nu}\times\vec{\mathcal{V}}}$ is zero.
	\begin{equation}
		\cchevrons{\vec{\nu}\times\vec{\mathcal{V}}} = \cchevrons{\vec{\nu}\times\vec{\gamma}} = \int_{\mathbb{S}^2} \vec{\nu}\times \vec{\gamma}\, \lambda(\vec{\nu})\,d\vec{\nu} = \left(\int_{\mathbb{S}^2} \vec{\nu}\, \lambda(\vec{\nu})\,d\vec{\nu}\right)\times\vec{\gamma} = \vec{\gamma}\times\vec{\gamma} = \vec{0}.
	\end{equation}
\end{proof}
As discussed in \Cref{rmk:vlasov_torque}, the mean-field torque $\rho\cchevrons{\vec{\nu}\times\vec{\mathcal{V}}}$ is the only term in the balance law for the bulk intrinsic angular momentum $\vec{\mu}$ that can potentially drive a change in the state of alignment. For any self-consistent solution of the Gibbs distribution, this torque vanishes; such states are therefore equilibrium states of the system.

\subsubsection{Maier--Saupe-type potentials}
\cla{The steric Onsager kernel $\mathcal{W}_{\mathrm{O}}(\vec{\nu},\vec{\nu}_*) = c\,|\sin\angle(\vec{\nu},\vec{\nu}_*)|$, derived as the excluded volume between two thin rods of relative orientation $\angle(\vec{\nu},\vec{\nu}_*)$, is non-differentiable at $\vec{\nu}_*=\vec{\nu}$ and is therefore not amenable to a direct Taylor expansion at aligned configurations. We replace it by the Maier--Saupe kernel
\begin{equation}
	\label{eq:maierSaupe}
	\mathcal{W}_{\mathrm{MS}}(\vec{\nu},\vec{\nu}_*) = -(\vec{\nu}\cdot\vec{\nu}_*)^2,
\end{equation}
which preserves the head-to-tail symmetry $(\vec{\nu},\vec{\nu}_*)\mapsto(-\vec{\nu},-\vec{\nu}_*)$ of the nematic phase and is the standard mean-field proxy for steric nematogens used in the liquid-crystal literature.}

\cla{\begin{proposition}\label{corollary:onsager}
	Assuming the orientational distribution at equilibrium is given by the Gibbs distribution \eqref{eq:gibbs} with the Maier--Saupe-type mean-field potential \eqref{eq:maierSaupe}, the following statements hold:
	\begin{enumerate}
		\item The isotropic distribution $\lambda_{\mathrm{iso}}(\vec{\nu}) = \frac{1}{4\pi}$ is compatible with the Gibbs distribution and the induced torque $\vec{\nu}\times\vec{\mathcal{V}}$ vanishes at the isotropic equilibrium.
		\item Because $\mathcal{W}_{\mathrm{MS}}$ is invariant under $\vec{\nu}\mapsto -\vec{\nu}$, the first moment $\int\vec{\nu}\,\lambda\,d\vec{\nu}$ vanishes at every equilibrium and the natural order parameter is the symmetric traceless Q-tensor
		\begin{equation}
			\label{eq:Qtensor}
			\mat{Q} \;=\; \int_{\mathbb{S}^2} \Bigl(\vec{\nu}\otimes\vec{\nu} - \tfrac{1}{3}\mat{I}\Bigr) \lambda(\vec{\nu})\,d\vec{\nu} \fix{\;=\; S\Bigl(\widehat{\vec{\nu}}\otimes\widehat{\vec{\nu}} - \tfrac{1}{3}\mat{I}\Bigr)},
		\end{equation}
		\fix{which is uniaxial about the director $\widehat{\vec{\nu}}$ at equilibrium, with scalar order parameter $S = \tfrac{3}{2}\,\widehat{\vec{\nu}}\cdot\mat{Q}\,\widehat{\vec{\nu}} = \cchevrons{\tfrac{1}{2}\bigl(3(\vec{\nu}\cdot\widehat{\vec{\nu}})^2-1\bigr)} \in [-\tfrac{1}{2}, 1]$.}
		\item The self-consistent equation for $S$ takes the form $S = \mathcal{I}_{\mathrm{MS}}(S)$ with
		\begin{equation}
			\label{eq:MS_self_consistent}
			\mathcal{I}_{\mathrm{MS}}(S) = \frac{\displaystyle\int_{-1}^{1} \tfrac{3x^2-1}{2}\,e^{\,S(3x^2-1)/(3k_B\theta)}\,dx}{\displaystyle\int_{-1}^{1} e^{\,S(3x^2-1)/(3k_B\theta)}\,dx}.
		\end{equation}
		The Taylor expansion of $\mathcal{I}_{\mathrm{MS}}$ near $S=0$ gives $\mathcal{I}_{\mathrm{MS}}(S) = (2/(15\,k_B\theta))\,S + O(S^2)$. The isotropic branch is therefore locally stable for $k_B\theta > 2/15$ and locally unstable for $k_B\theta < 2/15$.
		\item On the aligned branch, the Vlasov torque $\vec{\mathcal{V}}(\vec{\nu}) = 2\,\mat{A}\,\vec{\nu}$ with $\mat{A} = \tfrac{1}{3}\mat{I}+\mat{Q}$ is uniaxial about $\widehat{\vec{\nu}}$, and the induced torque $\cchevrons{\vec{\nu}\times\vec{\mathcal{V}}}$ vanishes.
	\end{enumerate}
\end{proposition}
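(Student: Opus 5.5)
The plan mirrors the proof of \Cref{corollary:kuramoto}, with the first moment $\vec{\gamma}$ replaced by the second moment. The starting point is the identity $\int_{\mathbb{S}^2}(\vec{\nu}\cdot\vec{\nu}_*)^2\lambda(\vec{\nu}_*)\,d\vec{\nu}_* = \vec{\nu}\cdot\mat{A}\vec{\nu}$, where $\mat{A}=\int\vec{\nu}_*\otimes\vec{\nu}_*\,\lambda\,d\vec{\nu}_* = \tfrac13\mat{I}+\mat{Q}$. Substituting it into \eqref{eq:gibbs} turns the Gibbs distribution into
\begin{equation*}
\lambda(\vec{\nu})=Z^{-1}\exp\bigl(\vec{\nu}\cdot\mat{A}\vec{\nu}/(k_B\theta)\bigr).
\end{equation*}
Differentiating $-\mathcal{W}_{\mathrm{MS}}=(\vec{\nu}\cdot\vec{\nu}_*)^2$ in $\vec{\nu}$ and integrating against $\lambda(\vec{\nu}_*)$ then gives $\vec{\mathcal{V}}(\vec{\nu}) = 2\mat{A}\vec{\nu}$, up to projection onto $T_{\vec{\nu}}\mathbb{S}^2$, which is irrelevant after crossing with $\vec{\nu}$.

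\textbf{Item 1.} For $\lambda_{\mathrm{iso}}=1/(4\pi)$ we have $\mat{A}=\tfrac13\mat{I}$ by isotropy. The exponent is therefore the constant $1/(3k_B\theta)$, so $\lambda_{\mathrm{iso}}$ is self-consistent with $Z=4\pi e^{1/(3k_B\theta)}$. Moreover $\vec{\nu}\times\vec{\mathcal{V}}=\tfrac23\vec{\nu}\times\vec{\nu}=\vec{0}$ pointwise.

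\textbf{Item 2.} Since $\vec{\nu}\cdot\mat{A}\vec{\nu}$ is even in $\vec{\nu}$, every Gibbs solution satisfies $\lambda(-\vec{\nu})=\lambda(\vec{\nu})$. The antipodal change of variables, which preserves the measure on $\mathbb{S}^2$, then kills the first moment. To get uniaxiality I would use a fixed-point argument. The map $\mat{Q}\mapsto\int(\vec{\nu}\otimes\vec{\nu}-\tfrac13\mat{I})\lambda_{\mat{Q}}\,d\vec{\nu}$ is $\mathrm{SO}(3)$-equivariant, so a solution can be diagonalised in its own eigenbasis. I will restrict to (assume) the uniaxial ansatz $\mat{Q}=S(\widehat{\vec{\nu}}\otimes\widehat{\vec{\nu}}-\tfrac13\mat{I})$, which is invariant under rotations about $\widehat{\vec{\nu}}$, and check that it is consistent: the equivariant map sends axially symmetric tensors to axially symmetric tensors. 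Contracting with $\widehat{\vec{\nu}}\otimes\widehat{\vec{\nu}}$ gives $S=\tfrac32\widehat{\vec{\nu}}\cdot\mat{Q}\widehat{\vec{\nu}}=\cchevrons{P_2(\vec{\nu}\cdot\widehat{\vec{\nu}})}$. The bounds $P_2\in[-\tfrac12,1]$ give the range of $S$.

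\textbf{Item 3.} Under the ansatz, $\vec{\nu}\cdot\mat{A}\vec{\nu}=\tfrac13+S(x^2-\tfrac13)$ with $x=\vec{\nu}\cdot\widehat{\vec{\nu}}$. Dropping the constant, which cancels with $Z$, writing $d\vec{\nu}=dx\,d\phi$, and integrating out $\phi$ yields exactly \eqref{eq:MS_self_consistent}. To linearise, I expand the exponential to first order in $S$. Since $\int_{-1}^1 P_2\,dx=0$ and $\int_{-1}^1 P_2\,(3x^2-1)\,dx=2\int_{-1}^1P_2^2\,dx=4/5$, the numerator is $S\cdot\tfrac{4}{15k_B\theta}+O(S^2)$ and the denominator is $2+O(S)$. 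This gives the slope $2/(15k_B\theta)$. The stability statement is then read off exactly as in \Cref{corollary:kuramoto}: the isotropic fixed point of $S\mapsto\mathcal{I}_{\mathrm{MS}}(S)$ is attracting when the slope is less than $1$ and repelling when it exceeds $1$. Unlike the Kuramoto case there is a nonzero $O(S^2)$ term, so the transition is transcritical (first order) and I would phrase the claim only as local stability.

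\textbf{Item 4.} On the aligned branch, $\mat{A}=(\tfrac13-\tfrac S3)\mat{I}+S\,\widehat{\vec{\nu}}\otimes\widehat{\vec{\nu}}$ is uniaxial. Then
\begin{equation*}
\vec{\nu}\times\vec{\mathcal{V}}=2S(\vec{\nu}\cdot\widehat{\vec{\nu}})\,\vec{\nu}\times\widehat{\vec{\nu}},
\end{equation*}
and averaging gives $\cchevrons{\vec{\nu}\times\vec{\mathcal{V}}}=2S\bigl(\int x\,\vec{\nu}\,\lambda\,d\vec{\nu}\bigr)\times\widehat{\vec{\nu}}$. Because $\lambda$ depends only on $x$, the azimuthal integral of $\vec{\nu}$ is $x\widehat{\vec{\nu}}$. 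The bracket is therefore parallel to $\widehat{\vec{\nu}}$, and the cross product vanishes. (Alternatively, $\cchevrons{\vec{\nu}\times\mat{A}\vec{\nu}}$ is the axial vector of $\mat{A}\mat{A}^T$-type commutator $[\mat{A},\mat{A}]=0$.)

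The main obstacle is justifying uniaxiality in item 2 for a general equilibrium. I would avoid this by restricting to the uniaxial ansatz, as the statement implicitly does. Item 4 in fact holds for any $\mat{A}$ via the commutator identity $\int\vec{\nu}\times\mat{A}\vec{\nu}\,\lambda = \mathrm{ax}(\mat{A}\mat{A}-\mat{A}\mat{A})=\vec 0$, which I would note as a robustness check.
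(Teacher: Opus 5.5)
Your proposal is correct and follows essentially the paper's own argument: the Bingham form through $\mat{A}=\tfrac13\mat{I}+\mat{Q}$, the uniaxial ansatz (which the paper also assumes rather than proves), the linearisation via $\int_{-1}^{1}\bigl(\tfrac{3x^2-1}{2}\bigr)^2dx=2/5$, and the axial-symmetry argument for the torque. Your fixed-point reading of the stability threshold, your remark that the nonzero $O(S^2)$ term makes the transition transcritical, and your observation that $\cchevrons{\vec{\nu}\times\vec{\mathcal{V}}}$ vanishes for any $\lambda$ because $\mat{A}^2$ is symmetric are all sound, and they tighten the paper's somewhat looser wording in (iii) and (iv).
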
}

\cla{\begin{proof}
	(i) The isotropic check is identical to \Cref{corollary:kuramoto}: $\lambda_{\mathrm{iso}}(\vec{\nu})=\tfrac{1}{4\pi}$ has $\int\vec{\nu}\otimes\vec{\nu}\,\lambda_{\mathrm{iso}}\,d\vec{\nu} = \tfrac{1}{3}\mat{I}$, hence $\mat{Q}_{\mathrm{iso}}=0$, and
	\begin{equation}
		\vec{\mathcal{V}}_{\mathrm{iso}}(\vec{\nu}) = -\int_{\mathbb{S}^2}\nabla_{\vec{\nu}}\bigl[-(\vec{\nu}\cdot\vec{\nu}_*)^2\bigr]\,\lambda_{\mathrm{iso}}(\vec{\nu}_*)\,d\vec{\nu}_* = 2\bigl(\tfrac{1}{3}\mat{I}\bigr)\vec{\nu} = \tfrac{2}{3}\vec{\nu},
	\end{equation}
	so $\vec{\mathcal{V}}_{\mathrm{iso}}$ is parallel to $\vec{\nu}$ and therefore has zero torque against $\vec{\nu}$.

	(ii) For a general Gibbs solution, substituting \eqref{eq:maierSaupe} into the Vlasov potential gives
	\begin{equation}
		\int \mathcal{W}_{\mathrm{MS}}(\vec{\nu},\vec{\nu}_*)\,\lambda(\vec{\nu}_*)\,d\vec{\nu}_* = -\vec{\nu}^T\!\left(\int \vec{\nu}_*\otimes\vec{\nu}_*\,\lambda(\vec{\nu}_*)\,d\vec{\nu}_*\right)\vec{\nu} = -\vec{\nu}^T \mat{A}\,\vec{\nu},
	\end{equation}
	with $\mat{A} = \tfrac{1}{3}\mat{I}+\mat{Q}$. The Gibbs distribution therefore takes the Bingham form
	\begin{equation}
		\label{eq:bingham}
		\lambda(\vec{\nu}) = \frac{1}{Z}\exp\!\left(\frac{1}{k_B\theta}\,\vec{\nu}^T\mat{Q}\,\vec{\nu}\right),
	\end{equation}
	after absorbing the trace into $Z$. At uniaxial states
	\begin{equation}
		\mat{Q} = S(\widehat{\vec{\nu}}\otimes\widehat{\vec{\nu}} - \tfrac{1}{3}\mat{I}), \label{eq:nematic_dir_degennes}
	\end{equation}
	hence $\vec{\nu}^T\mat{Q}\,\vec{\nu} = (S/3)\bigl(3(\vec{\nu}\cdot\widehat{\vec{\nu}})^2 - 1\bigr)$ and
	\begin{equation}
		\lambda(\vec{\nu}) = \frac{1}{Z}\exp\!\left(\frac{S\bigl(3(\vec{\nu}\cdot\widehat{\vec{\nu}})^2-1\bigr)}{3k_B\theta}\right).
	\end{equation}
	Computing the second moment of $\lambda$ in spherical coordinates with $\widehat{\vec{\nu}}$ as the polar axis gives the self-consistent equation $S = \tfrac{1}{2}\cchevrons{3(\vec{\nu}\cdot\widehat{\vec{\nu}})^2 - 1}_\lambda$, which is precisely \eqref{eq:MS_self_consistent}.

	(iii) Taylor-expanding the exponential in \eqref{eq:MS_self_consistent}, the leading non-trivial term comes from $\int_{-1}^{1} \bigl((3x^2-1)/2\bigr)^2\,dx = 2/5$, while $\int_{-1}^{1} (3x^2-1)/2\,dx = 0$ removes the constant term. Hence
	\begin{equation}
		\mathcal{I}_{\mathrm{MS}}(S) = \frac{\bigl(2S/(3k_B\theta)\bigr)(2/5)}{2} + O(S^2) = \frac{2 S}{15\,k_B\theta} + O(S^2).
	\end{equation}
	The linearised self-consistency $S = (2/(15 k_B\theta))\,S + O(S^2)$ admits a non-trivial root near $S=0$ exactly when $2/(15 k_B\theta) > 1$, i.e., $k_B\theta < 2/15$.

	(iv) From $\mathcal{W}_{\mathrm{MS}}(\vec{\nu},\vec{\nu}_*) = -(\vec{\nu}\cdot\vec{\nu}_*)^2$, one computes
	\begin{equation}
		\vec{\mathcal{V}}(\vec{\nu}) = -\int \nabla_{\vec{\nu}}\mathcal{W}_{\mathrm{MS}}\,\lambda(\vec{\nu}_*)\,d\vec{\nu}_* = 2\,\mat{A}\,\vec{\nu},
	\end{equation}
	with $\mat{A} = \tfrac{1}{3}\mat{I}+\mat{Q}$. The induced torque is
	\begin{equation}
		\cchevrons{\vec{\nu}\times\vec{\mathcal{V}}} = 2\int_{\mathbb{S}^2}\vec{\nu}\times(\mat{A}\vec{\nu})\,\lambda(\vec{\nu})\,d\vec{\nu} = 2\int_{\mathbb{S}^2}\vec{\nu}\times(\mat{Q}\vec{\nu})\,\lambda(\vec{\nu})\,d\vec{\nu},
	\end{equation}
	since $\vec{\nu}\times(\tfrac{1}{3}\mat{I}\vec{\nu}) = \tfrac{1}{3}\vec{\nu}\times\vec{\nu}=\vec{0}$. At uniaxial states, $\mat{Q}\vec{\nu} = S\bigl((\widehat{\vec{\nu}}\cdot\vec{\nu})\widehat{\vec{\nu}} - \tfrac{1}{3}\vec{\nu}\bigr)$, so $\vec{\nu}\times(\mat{Q}\vec{\nu}) = S(\widehat{\vec{\nu}}\cdot\vec{\nu})(\vec{\nu}\times\widehat{\vec{\nu}})$. The integrand is odd under the $\mathrm{SO}(2)$-rotation of $\vec{\nu}$ about $\widehat{\vec{\nu}}$, and $\lambda$ is $\mathrm{SO}(2)$-invariant about $\widehat{\vec{\nu}}$, so the integral vanishes.
\end{proof}

\Cref{corollary:onsager} is the nematic counterpart of \Cref{corollary:kuramoto}, with the polar order parameter $\vec{\gamma}\in\mathbb{R}^3$ replaced by the axial order parameter $\mat{Q}\in\mathrm{Sym}_0(\mathbb{R}^3)$. The same orthogonality mechanism that eliminates the induced torque in the Kuramoto case operates here as well, with the polar director $\widehat{\vec{\nu}}$ replaced by its head-to-tail-symmetric axial counterpart.}

The analysis presented in the previous subsections constitutes a structural improvement over the inviscid theory of \cite{farrell}, which neglects the Vlasov torque producing alignment, and hence is not capable of explaining why a strongly aligned state is compatible with the microscopic dynamics of nematic fluids. 

\subsection{One-constant Oseen--Frank energy}\label{sec:oseen_frank_emergence}
In modern variational and dynamic theories of liquid crystals it is common to express the internal energy associated with the average configuration of the rod-like molecules via the nematic director field, which in our case corresponds to $\widehat{\vec{\nu}}(\vec{x}, t)\in \mathbb{S}^2$ defined either as in \eqref{eq:nematic_dir_ericksen} or as in \eqref{eq:nematic_dir_degennes} \cite{virgaCV, ericksen1960,ericksen1961, ericksen1962, leslie1968}. A typical modelling assumption is that the internal energy of nematic fluids takes the form
	\begin{equation}
		I[\widehat{\vec{\nu}}] = \int_\Omega \frac{K}{2} \abs{\nabla_{\vec{x}}\widehat{\vec{\nu}}}^2\, d\vec{x}, \label{eq:one-constant}
	\end{equation}
	which corresponds to the intuitive idea that all distortions of the nematic state ``cost'' equally. Despite being an extreme simplification, to the point that de~Gennes \& Prost referred to it as the `poor man's' model of liquid crystals \cite{degennes} and Stephen \& Straley called it the theoretician's energy \cite{theo}, such an approximation is still capable of predicting many quantitative aspects of liquid crystal phenomena. This section is devoted to deriving a form of \eqref{eq:one-constant} from the kinetic model considered here, under the assumption $s \to 1$. \fix{The one-constant elastic energy emerges from the rotational kinetic energy at strong alignment in \Cref{lemma:erot_oseen_frank}, with Frank modulus $K = K_{\mathrm{kin}} = C_{m,\ell}\,p/m$.}

\section{Equations of state and thermodynamic constitutive relations}\label{sec:eos}
In this section we study the relations among the four relevant thermodynamic variables of the system: the mass density $\rho$, the \fix{total internal energy $\widetilde{e} = e_{\mathrm{tr}} + e_{\mathrm{rot}}$}, the entropy density $\eta$, and the temperature $\theta$. The dependence of $\eta$ on the other thermodynamic variables is often called the \textit{constitutive equation for the entropy density}. The relation between $\theta$ and $e$ is the \textit{caloric equation of state}; here it is given by the \textit{equipartition of energy}, i.e.\ energy is shared equally among the translational and rotational degrees of freedom independently of the state of the system. The relations between $\theta$ and $\rho$ are the \textit{thermal equations of state}; here they reduce to the \textit{ideal-gas law}, i.e.\ the pressure is proportional to the product of the density and the temperature of the system. \cla{Throughout this section all constitutive relations are evaluated at the Maxwellian $f^{(0)}$ of \eqref{eq:maxwellian}, which is the local equilibrium distribution for the BGK kinetic model \eqref{eq:BGK} of \Cref{sec:bgk}.}
\uz{\subsection{Translational caloric and thermal equations of state}\label{sec:eos:caloric_thermal}
At equilibrium the rigid-rod translational equipartition of $f^{(0)}$ identifies the thermodynamic temperature $\theta$ through the translational caloric equation of state
\begin{equation}
	\label{eq:equipartition_energy}
	e_{\mathrm{tr}}(\theta) \;=\; \tfrac{3}{2}\,R_s\,\theta,
\end{equation}
the $3/2$ factor reflecting the three translational degrees of freedom of the rigid-rod molecules of \Cref{sec:vanishing_girth}; equivalently
\begin{equation}
	\label{def:temperature}
	\theta \;=\; \frac{2}{3\,R_s}\,e_{\mathrm{tr}}.
\end{equation}
The thermodynamic (gas) pressure $p$ arises from the translational momentum flux of $f^{(0)}$, $\mathbb{T}^{(0)} = -\tfrac{2}{3}\rho\,e_{\mathrm{tr}}\,\mat{I} = -p\mat{I}$, and is conjugate to the specific volume in the translational component,
\begin{equation}
	\label{def:pressure}
	p \;=\; \rho^2\,\partial_{\rho}\,e_{\mathrm{tr}}\bigl|_\eta.
\end{equation}
Combining \eqref{eq:equipartition_energy} with \eqref{def:pressure} recovers the ideal-gas law
\begin{equation}
	\label{eq:ideal_gas_law}
	p \;=\; R_s\,\rho\,\theta.
\end{equation}
\fix{The rotational contribution to $\psi$ is supplied by $e_{\mathrm{rot}}(\theta,\nabla_{\vec{x}}\widehat{\vec{\nu}})$ of \Cref{lemma:erot_oseen_frank}.}
}

\cla{\begin{lemma}[Oseen--Frank form of $e_{\mathrm{rot}}$ at strong alignment]\label{lemma:erot_oseen_frank}
Let $f$ be a distribution sharing the macroscopic density $\rho$, bulk velocity $\vec{u}$, and orientational distribution $\lambda(\vec{\nu},t)$ with the strongly-aligned Maxwellian \eqref{eq:maxwellian} (uniaxial about the director $\widehat{\vec{\nu}}(\vec{x})\in\mathbb{S}^2$, $s\to 1$ in the order-parameter notation of \Cref{corollary:kuramoto}). Then the rotational internal energy admits the Oseen--Frank form
\begin{equation}
	\label{eq:oseen_frank}
	e_{\mathrm{rot}} \;=\; \frac{C_{m,\ell}}{2\fix{m}}\,\cchevrons{\bigl|(\nabla_{\vec{x}}\widehat{\vec{\nu}})\vec{V}\bigr|^2} \;=\; -\frac{C_{m,\ell}}{2\fix{m}\rho}\,\mathrm{tr}\bigl((\nabla_{\vec{x}}\widehat{\vec{\nu}})^T\,\mathbb{T}\,\nabla_{\vec{x}}\widehat{\vec{\nu}}\bigr),
\end{equation}
where $\vec{V}=\vec{v}-\vec{u}$ is the peculiar velocity and $\mathbb{T} = -\rho\cchevrons{\vec{V}\otimes\vec{V}}$ is the Cauchy stress.
\end{lemma}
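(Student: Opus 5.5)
The plan is to make the micro--macro identification of \Cref{sec:hyd} precise along particle trajectories and then read $e_{\mathrm{rot}}$ off the splitting \eqref{eq:rot_energy_split}. In the strongly aligned limit $s\to 1$ the orientational profile $\lambda(\cdot,t)$ of \eqref{eq:maxwellian} concentrates on $\widehat{\vec{\nu}}(\vec{x},t)$, as in the Gibbs analysis of \Cref{corollary:kuramoto}. I will take this to mean that a molecule at position $\vec{x}(t)$ with velocity $\vec{v}$ carries the orientation $\vec{\nu}(t)=\widehat{\vec{\nu}}(\vec{x}(t),t)$. Differentiating along the molecular path then gives the kinematic relation
\begin{equation*}
	\dot{\vec{\nu}} \;=\; \partial_t\widehat{\vec{\nu}} + (\nabla_{\vec{x}}\widehat{\vec{\nu}})\,\vec{v},
\end{equation*}
where $(\nabla_{\vec{x}}\widehat{\vec{\nu}})_{ij}=\partial_j\widehat{\nu}_i$. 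Since $\abs{\widehat{\vec{\nu}}}=1$, every column of $\nabla_{\vec{x}}\widehat{\vec{\nu}}$ and the vector $\partial_t\widehat{\vec{\nu}}$ are orthogonal to $\widehat{\vec{\nu}}$. Hence $\dot{\vec{\nu}}$, and with it $\vec{\varsigma}=C_{m,\ell}\dot{\vec{\nu}}$ from \eqref{eq:varsigma_def}, lies in $T_{\widehat{\vec{\nu}}}\mathbb{S}^2$, which is consistent with the rigid-rod constraint.

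Next I will average. Under the identification, $\partial_t\widehat{\vec{\nu}}$ and $\nabla_{\vec{x}}\widehat{\vec{\nu}}$ are functions of $(\vec{x},t)$ only, so they factor out of $\cchevrons{\cdot}$. This gives $\cchevrons{\dot{\vec{\nu}}}=\partial_t\widehat{\vec{\nu}}+(\nabla_{\vec{x}}\widehat{\vec{\nu}})\vec{u}$, the material derivative of the director. Subtracting,
\begin{equation*}
	\dot{\vec{\nu}}-\cchevrons{\dot{\vec{\nu}}} \;=\; (\nabla_{\vec{x}}\widehat{\vec{\nu}})\,\vec{V},
	\qquad\text{equivalently}\qquad
	\vec{\Sigma}=\vec{\varsigma}-\vec{\sigma}=C_{m,\ell}(\nabla_{\vec{x}}\widehat{\vec{\nu}})\vec{V}.
\end{equation*}
Inserting this into the last expression of \eqref{eq:rot_energy_split}, $e_{\mathrm{rot}}=\frac{C_{m,\ell}}{2m}\cchevrons{\abs{\dot{\vec{\nu}}-\cchevrons{\dot{\vec{\nu}}}}^2}$, gives the first equality in \eqref{eq:oseen_frank} at once.

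For the second equality I will write $\abs{(\nabla_{\vec{x}}\widehat{\vec{\nu}})\vec{V}}^2=\mathrm{tr}\bigl((\nabla_{\vec{x}}\widehat{\vec{\nu}})(\vec{V}\otimes\vec{V})(\nabla_{\vec{x}}\widehat{\vec{\nu}})^T\bigr)$. I then pull the deterministic gradient out of the average, use cyclicity of the trace, and substitute $\cchevrons{\vec{V}\otimes\vec{V}}=-\mathbb{T}/\rho$ from \eqref{eq:stress_tensor}. With the gradient convention chosen so that the contraction reads $(\nabla_{\vec{x}}\widehat{\vec{\nu}})^T\,\mathbb{T}\,\nabla_{\vec{x}}\widehat{\vec{\nu}}$, as in the statement, this yields the stated formula. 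In the only genuinely state-dependent case $\mathbb{T}$ is symmetric, so the ordering of the factors does not affect the trace. As a consistency check, evaluating at $f^{(0)}$, where $\mathbb{T}^{(0)}=-p\mat{I}$, gives $\rho e_{\mathrm{rot}}=\frac{C_{m,\ell}p}{2m}\abs{\nabla_{\vec{x}}\widehat{\vec{\nu}}}^2$. This is the one-constant energy \eqref{eq:one-constant} with $K=C_{m,\ell}p/m$, as announced in \Cref{sec:oseen_frank_emergence}.

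The main obstacle is the first step: justifying that in the limit $s\to 1$ the orientation is slaved to the director along each trajectory, so that $\dot{\vec{\nu}}$ is the convective derivative of $\widehat{\vec{\nu}}$ rather than an independent variable. The point is that the orientational spread of $\lambda$ around $\widehat{\vec{\nu}}$, which is of order $1-s$, should contribute only higher-order corrections to $\cchevrons{\abs{\vec{\Sigma}}^2}$. I would argue this by writing $\vec{\nu}=\widehat{\vec{\nu}}+\delta\vec{\nu}$ with $\cchevrons{\abs{\delta\vec{\nu}}^2}=O(1-s)$ and showing that the cross terms and the $\delta\vec{\nu}$ terms in $\cchevrons{\abs{\vec{\Sigma}}^2}$ vanish as $s\to 1$. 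If that fails, I would instead adopt the identification as the defining kinematic hypothesis of the aligned closure, as \Cref{sec:hyd} already does, and state the lemma as exact under it.
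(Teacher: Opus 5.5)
Your derivation follows the paper's proof step for step. You slave $\vec{\nu}$ to $\widehat{\vec{\nu}}$ along trajectories, differentiate, pull $\partial_t\widehat{\vec{\nu}}$ and $\nabla_{\vec{x}}\widehat{\vec{\nu}}$ out of $\cchevrons{\cdot}$, subtract to get $\dot{\vec{\nu}}-\cchevrons{\dot{\vec{\nu}}}=(\nabla_{\vec{x}}\widehat{\vec{\nu}})\vec{V}$, insert this into \eqref{eq:rot_energy_split}, and finish with $\mathbb{T}=-\rho\cchevrons{\vec{V}\otimes\vec{V}}$.

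What needs correcting is your plan for the step you call the main obstacle. It cannot be done by expanding $\vec{\nu}=\widehat{\vec{\nu}}+\delta\vec{\nu}$ in powers of $1-s$, so your fallback is not optional. Concentration of $\lambda$ controls only the $\vec{\nu}$-marginal. It can make $\vec{\nu}\approx\widehat{\vec{\nu}}(\vec{x},t)$ at each instant, but it says nothing about $\dot{\vec{\nu}}=C_{m,\ell}^{-1}\vec{\varsigma}$, which is an independent phase-space variable. Likewise $\vec{\Sigma}$ is not a function of $\vec{\nu}$ at all. Concretely, $f=f^{(0)}$ satisfies the lemma's hypotheses as stated. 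There $\vec{\Sigma}$ is a Gaussian on $T_{\vec{\nu}}\mathbb{S}^2$ with $\cchevrons{\abs{\vec{\Sigma}}^2}=2C_{m,\ell}k_B\theta$ for every $s$, so $e_{\mathrm{rot}}=k_B\theta/m$ as in \eqref{eq:equipartition_easy}. The right-hand side of \eqref{eq:oseen_frank}, however, equals $\frac{C_{m,\ell}k_B\theta}{2m^2}\abs{\nabla_{\vec{x}}\widehat{\vec{\nu}}}^2$. The velocity-level slaving $\vec{\varsigma}=C_{m,\ell}\bigl(\partial_t\widehat{\vec{\nu}}+(\nabla_{\vec{x}}\widehat{\vec{\nu}})\vec{v}\bigr)$ is therefore an additional kinematic hypothesis on $f$. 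That is how the paper's proof treats it: it explicitly invokes the assumption of a vanishing co-rotational derivative rather than deriving the slaving from $s\to 1$.

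A smaller point: symmetry of $\mathbb{T}$ does not make the placement of the transposes irrelevant. Use $(\nabla_{\vec{x}}\widehat{\vec{\nu}})_{ij}=\partial_j\widehat{\nu}_i$, which is your convention and the one behind the paper's $\odot$. Then the computation gives $\cchevrons{\abs{(\nabla_{\vec{x}}\widehat{\vec{\nu}})\vec{V}}^2}=-\rho^{-1}\,\mathbb{T}:(\nabla_{\vec{x}}\widehat{\vec{\nu}}\odot\nabla_{\vec{x}}\widehat{\vec{\nu}})=-\rho^{-1}\mathrm{tr}\bigl(\nabla_{\vec{x}}\widehat{\vec{\nu}}\,\mathbb{T}\,(\nabla_{\vec{x}}\widehat{\vec{\nu}})^T\bigr)$. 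This differs from the printed $\mathrm{tr}\bigl((\nabla_{\vec{x}}\widehat{\vec{\nu}})^T\mathbb{T}\,\nabla_{\vec{x}}\widehat{\vec{\nu}}\bigr)$ for symmetric but anisotropic stresses. For example, with $\nabla_{\vec{x}}\widehat{\vec{\nu}}=\vec{e}_1\otimes\vec{e}_2$ and $\mathbb{T}=-\vec{e}_1\otimes\vec{e}_1$, the first trace is $0$ and the second is $-1$. The two agree when $\mathbb{T}$ is isotropic, which covers your Maxwellian consistency check and the zeroth-order use in \Cref{sec:zero}. Otherwise, read the printed trace with the transposed gradient convention, as your hedge already suggests, and drop the symmetry argument.
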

\begin{proof}
	\uz{In the strongly aligned regime, the orientation variable has a dual microscopic and macroscopic nature $\cchevrons{\vec{\nu}} = s\,\widehat{\vec{\nu}}$ with $s\to 1$, so at leading order the orientation of each molecule coincides with the value of the director field at the molecule's position, i.e.\ $\vec{\nu}(t) = \widehat{\vec{\nu}}(\vec{x}(t),t)$. Under the assumption that the co-rotational time derivative of $\vec{\nu}$ vanishes, the total time derivative of the nematic director from a microscopic viewpoint reads
\begin{equation}
	\dot{\vec{\nu}} \;=\; \partial_t\widehat{\vec{\nu}} + (\nabla_{\vec{x}}\widehat{\vec{\nu}})\,\vec{v},
\end{equation}
yet, since $\widehat{\vec{\nu}}$ is a macroscopic field, it can be moved outside the average, so
\begin{equation}
	\cchevrons{\dot{\vec{\nu}}} \;=\; \partial_t\widehat{\vec{\nu}} + (\nabla_{\vec{x}}\widehat{\vec{\nu}})\,\cchevrons{\vec{v}} \;=\; \partial_t\widehat{\vec{\nu}} + (\nabla_{\vec{x}}\widehat{\vec{\nu}})\,\vec{u},
	\qquad\text{hence}\qquad
	\dot{\vec{\nu}} - \cchevrons{\dot{\vec{\nu}}} \;=\; (\nabla_{\vec{x}}\widehat{\vec{\nu}})\,\vec{V}.
\end{equation}
By definition of the peculiar conjugate momentum $\vec{\Sigma}=\vec{\varsigma}-\vec{\sigma}$, \fix{which under the micro--macro identification carries the same magnitude as the peculiar intrinsic angular momentum, $\abs{\vec{\Lambda}} = \abs{\vec{\Sigma}}$ by \Cref{rmk:peculiar_angular_momentum},} the rotational energy is the variance of $\dot{\vec{\nu}}$, so
\begin{equation}
	e_{\mathrm{rot}} \;=\; \frac{1}{2 \fix{m} C_{m,\ell}}\,\cchevrons{\abs{\vec{\Sigma}}^2} \;=\; \frac{C_{m,\ell}}{2\fix{m}}\,\cchevrons{\abs{\dot{\vec{\nu}} - \cchevrons{\dot{\vec{\nu}}}}^2} \;=\; \frac{C_{m,\ell}}{2\fix{m}}\,\cchevrons{\bigl|(\nabla_{\vec{x}}\widehat{\vec{\nu}})\vec{V}\bigr|^2},
\end{equation}
which is the first equality of \eqref{eq:oseen_frank}. The second identity follows from $\cchevrons{|(\nabla_{\vec{x}}\widehat{\vec{\nu}})\vec{V}|^2} = -\rho^{-1}\,\mathrm{tr}((\nabla_{\vec{x}}\widehat{\vec{\nu}})^T\,\mathbb{T}\,\nabla_{\vec{x}}\widehat{\vec{\nu}})$ via $\mathbb{T} = -\rho\cchevrons{\vec{V}\otimes\vec{V}}$ (cf.~\cite[Section~5]{farrell}).}
\end{proof}
We assume henceforth that any dependence of $e_{\mathrm{rot}}$ on $(\rho,\mathbb{T})$ enters via the translational energy $e_{\mathrm{tr}}$.}

\os{
\subsection{Entropy balance and rate of entropy production inferred from the structure of the Helmholtz free energy and the balance laws}
\label{sec:entropy_balance_and_rate}

In this section we infer the entropy balance and the structure of the entropy production rate from the fundamental thermodynamic relation for the Helmholtz free energy obtained in the previous sections. \fix{The macroscopic Helmholtz free energy is the sum of the kinetic internal energy $\widetilde{e} = e_{\mathrm{tr}} + e_{\mathrm{rot}}$ and the temperature-weighted entropy,}
\begin{equation}
	\label{eq:psi_decomposition}
	\psi(\theta,\rho,\nabla_{\vec{x}}\widehat{\vec{\nu}}) \;=\; e_{\mathrm{tr}}(\theta) \,+\, e_{\mathrm{rot}}(\theta,\nabla_{\vec{x}}\widehat{\vec{\nu}}) \,-\, \theta\,\eta(\theta,\rho),
\end{equation}
\fix{with the director-gradient dependence carried by the rotational specific energy $e_{\mathrm{rot}}$ of \Cref{lemma:erot_oseen_frank}.}
This form of the Helmholtz free energy, treated as a fundamental thermodynamic relation, allows us to infer the local entropy balance. Recalling the definition of the Helmholtz free energy, internal energy, and entropy,
\begin{equation}
	\psi = e-\theta\eta,
\end{equation}
together with the constitutive relation \eqref{eq:helmholtz_decomposition} and \fix{the rotational Oseen--Frank energy of \Cref{lemma:erot_oseen_frank}},
\begin{equation}
	\psi = \psi(\theta,\rho, \nabla_{\vec{x}}\widehat{\vec{\nu}} ),
\end{equation}
applying the material time derivative to both expressions and equating them yields
\begin{align}
\partial_\theta{\psi}\,\frac{d\theta}{dt} + \partial_\rho{\psi}\, \frac{d\rho}{dt} + \bigl(\partial_{\nabla_{\vec{x}}\widehat{\vec{\nu}}}\psi\bigr):\frac{d}{dt}\nabla_{\vec{x}}\widehat{\vec{\nu}} = \frac{d e}{dt} - \frac{d\theta}{dt}\eta - \theta\frac{d\eta}{dt}.
\end{align}
We employ the thermodynamic identity $\partial_\theta{{\psi}} = -\eta$, which follows from considering the Helmholtz free energy as the Legendre transform of the energy with respect to the entropy, and we define the thermodynamic pressure in the usual manner as
\begin{equation}
\uz{\label{eq:thermodynamic_pressure}}
p = \rho^2 {\partial_\rho{\psi} }.
\end{equation}
\uz{Substituting} the time derivatives of density and energy from the mass and energy balances \uz{\eqref{eq:continuity_eq} and} \fix{\eqref{eq:energy_balance_law}, the latter source-free under the micro--macro identification by \Cref{rmk:mu_source_in_energy},} yields, after multiplying by $\rho$, the following identity:
	\begin{equation}
		\rho\,\theta\,\frac{d\eta}{dt} \;=\; (\mathbb{T}+p\mat{I})\!:\!\nabla_{\vec{x}}\vec{u} \,+\, \fix{\frac{1}{mC_{m,\ell}}}\,\mathbb{M}\!:\!\nabla_{\vec{x}}\vec{\mu} \,-\, \nabla_{\vec{x}}\!\cdot\!\fix{\vec{Q}} \,-\, \rho\,\bigl(\partial_{\nabla_{\vec{x}}\widehat{\vec{\nu}}}\psi\bigr):\frac{d}{dt}\nabla_{\vec{x}}\widehat{\vec{\nu}}.
	\end{equation}
Using $d(\nabla_{\vec{x}}\widehat{\vec{\nu}})/dt = \nabla_{\vec{x}}(d\widehat{\vec{\nu}}/dt) - (\nabla_{\vec{x}}\widehat{\vec{\nu}})\,\nabla_{\vec{x}}\vec{u}$ 
	yields
\begin{align}
	\bigl(\partial_{\nabla_{\vec{x}}\widehat{\vec{\nu}}}\psi\bigr):\frac{d}{dt}\nabla_{\vec{x}}\widehat{\vec{\nu}} = -(\nabla_{\vec{x}}\widehat{\vec{\nu}})^T\left(\partial_{\nabla_{\vec{x}}\widehat{\vec{\nu}}}\psi\right) :\nabla_{\vec{x}}\vec{u} + \partial_{\nabla_{\vec{x}}\widehat{\vec{\nu}}}\psi:\nabla_{\vec{x}}\dot{\widehat{\vec{\nu}}}\label{eq:intermediate_exp_e_rot}
\end{align}	
\uz{The molecules under consideration have vanishing co-rotational time derivative, i.e.\ $\overset{\circ}{\vec{\nu}} = \vec{0}$, the same microscopic kinematic identity invoked in the proof of \Cref{lemma:erot_oseen_frank}. Since each orientation then co-rotates rigidly with the same local field, the director inherits this rigid rotation:
\begin{equation}
\dot{\widehat{\vec{\nu}}} = \partial_t\widehat{\vec{\nu}} + (\nabla_{\vec{x}}\widehat{\vec{\nu}})\,\vec{u} = \dot{\cchevrons{\vec{\nu}}} = \cchevrons{\vec{\omega}\times \vec{\nu}} = \cchevrons{\vec{\omega}} \times \widehat{\vec{\nu}},
\end{equation}
where the second identity follows from the same discussion as in \Cref{lemma:erot_oseen_frank}.
}
Thus, we can rewrite the last term of \eqref{eq:intermediate_exp_e_rot} as

\begin{align}
\partial_{\nabla_{\vec{x}}\widehat{\vec{\nu}}}\psi:\nabla_{\vec{x}}\dot{\widehat{\vec{\nu}}} &= \left(\widehat{\vec{\nu}}\times \partial_{\nabla_{\vec{x}}\widehat{\vec{\nu}}}\psi\right):\nabla_{\vec{x}}\uz{\cchevrons{\vec{\omega}}} +
\uz{\cchevrons{\vec{\omega}}}\cdot\left( \nabla_{\vec{x}}\widehat{\vec{\nu}}\dotcross \partial_{\nabla_{\vec{x}}\widehat{\vec{\nu}}}\psi \right),
\end{align}
where we define the cross product between a vector and a tensor by $(\vec a\times\mathbb{B})_{ij} = \varepsilon_{ikl}\vec{a}_k\mathbb{B}_{lj}$ and the dotcross product between two tensors by $(\mathbb{A}\dotcross\mathbb{B})_i = \varepsilon_{ijk}\mathbb{A}_{jl}\mathbb{B}_{kl}$.
For the specific form of the dependence of $\psi$ on $\nabla_{\vec{x}}\widehat{\vec{\nu}}$ given by \uz{\eqref{eq:oseen_frank}}, the dotcross product is identically zero.

\uz{It remains to relate the bulk angular velocity $\cchevrons{\vec{\omega}}$ appearing above to the macroscopic state variables. By definition \eqref{eq:eta_def} the bulk intrinsic angular momentum is $\vec{\mu} = \cchevrons{\vec{\nu}\times\vec{\varsigma}}$, and $\vec{\varsigma} = \mat{B}\dot{\vec{\nu}} = C_{m,\ell}\,\dot{\vec{\nu}}$, so that using $\dot{\vec{\nu}} = \vec{\omega}\times\vec{\nu}$ we obtain, molecule by molecule,
\begin{equation}
	\vec{\nu}\times\vec{\varsigma} \;=\; C_{m,\ell}\,\vec{\nu}\times(\vec{\omega}\times\vec{\nu}) \;=\; C_{m,\ell}\bigl(\vec{\omega} - \vec{\nu}(\vec{\nu}\cdot\vec{\omega})\bigr) \;=\; C_{m,\ell}\,\vec{\omega},
\end{equation}
where the last equality holds because the axial spin $\vec{\nu}\cdot\vec{\omega}$ is not a degree of freedom for the vanishing-girth molecules under consideration, since the inertia tensor \eqref{eq:inertia} has a null direction along $\vec{\nu}$, so we may normalise $\vec{\omega}\cdot\vec{\nu} = 0$. Averaging then gives the desired relationship,
\begin{equation}
	\label{eq:mu_omega0}
	\vec{\mu} \;=\; C_{m,\ell}\,\cchevrons{\vec{\omega}},
	\qquad\text{hence}\qquad
	\nabla_{\vec{x}}\cchevrons{\vec{\omega}} \;=\; \frac{1}{C_{m,\ell}}\,\nabla_{\vec{x}}\vec{\mu},
\end{equation}
since $C_{m,\ell}$ is a constant fixed by the molecular geometry.

As a consequence, we obtain}

	\begin{align}
		\label{eq:entropy_production_constitutive}
		\rho\,\theta\,\frac{d\eta}{dt} &= \left[\mathbb{T} + p\mat{I} + \rho \,(\nabla_{\vec{x}}\widehat{\vec{\nu}})^T \partial_{\nabla_{\vec{x}}\widehat{\vec{\nu}}} \psi\right]\!:\!\nabla_{\vec{x}}\vec{u} + \fix{\frac{1}{mC_{m,\ell}}}\left[\mathbb{M}-\fix{m\rho}\!\left(\widehat{\vec{\nu}}\times \partial_{\nabla_{\vec{x}}\widehat{\vec{\nu}}} \psi\right)\right]\!:\!\nabla_{\vec{x}}\vec{\mu}\\						   &- \nabla_{\vec{x}}\!\cdot\!\fix{\vec{Q}}.\nonumber
	\end{align}
Dividing the above identity by $\theta$ and rearranging yields the following form of the entropy balance:
\begin{align}
		\rho\,\frac{d\eta}{dt} &= \frac{1}{\theta}\left\{ \left[\mathbb{T} + p\mat{I} + \rho \,(\nabla_{\vec{x}}\widehat{\vec{\nu}})^T \partial_{\nabla_{\vec{x}}\widehat{\vec{\nu}}} \psi\right]\!:\!\nabla_{\vec{x}}\vec{u} + \fix{\frac{1}{mC_{m,\ell}}}\left[\mathbb{M}-\fix{m\rho}\!\left(\widehat{\vec{\nu}}\times \partial_{\nabla_{\vec{x}}\widehat{\vec{\nu}}} \psi\right)\right]\!:\!\nabla_{\vec{x}}\vec{\mu}\right\} \\ & -\fix{\vec{Q}}
		\cdot\frac{\nabla_{\vec{x}}\theta}{\theta^2} - \nabla_{\vec{x}}\!\cdot\!\left(\frac{\fix{\vec{Q}}}{\theta}\right).\nonumber
\end{align}
%

\uz{Since $e_{\mathrm{tr}}$ and $\eta$ are $\nabla_{\vec{x}}\widehat{\vec{\nu}}$-independent, the decomposition \eqref{eq:psi_decomposition} gives \fix{$\partial_{\nabla_{\vec{x}}\widehat{\vec{\nu}}}\psi = \partial_{\nabla_{\vec{x}}\widehat{\vec{\nu}}} e_{\mathrm{rot}}$, which carries the Frank modulus $K = K_{\mathrm{kin}} = C_{m,\ell}\,p/m$ via \Cref{lemma:erot_oseen_frank}, recovered at zeroth order in \Cref{sec:zero}.}}

}

\section{Zeroth-order hydrodynamics}
\label{sec:zero}
\cla{We now consider the hydrodynamic regime close to equilibrium. The standard approach for the Boltzmann equation would be to perform a zeroth-order Chapman--Enskog expansion around the Maxwellian distribution, which would yield the compressible Euler equations as the hydrodynamic limit \cite{farrell, curtissI, curtissV}. Instead, we evaluate only the quantities appearing in \Cref{sec:eos} using the Maxwellian distribution \eqref{eq:maxwellian} and proceed with a closure inspired by linear irreversible thermodynamics, similarly to \cite{malek}.}

We begin by computing $e_{\mathrm{tr}}$ and $e_{\mathrm{rot}}$ using the Maxwellian distribution \eqref{eq:maxwellian}:
\begin{align}
	\uz{\mathbb{T}^{(0)}} &= \uz{-}\int m(\vec{V}\otimes \vec{V}) \uz{f^{(0)}} \,d\vec{\Xi} = \uz{-}\frac{2}{3} \rho e_{\mathrm{tr}}\uz{^{(0)}} \mat{I} = \uz{-p_{\mathrm{tr,rot}}} \mat{I},\\
	e^{(0)}_{\mathrm{rot}} &= - \frac{C_{m,\ell}}{2\fix{m}\rho} \mathrm{tr}\left((\nabla_{\vec{x}}\widehat{\vec{\nu}})^T \uz{\mathbb{T}^{(0)}} \,\nabla_{\vec{x}}\widehat{\vec{\nu}}\right) = \frac{C_{m,\ell}}{2\fix{m}}\frac{\uz{p_{\mathrm{tr,rot}}}}{\rho} \mathrm{tr}\left((\nabla_{\vec{x}}\widehat{\vec{\nu}})^T \nabla_{\vec{x}}\widehat{\vec{\nu}}\right) = \frac{C_{m,\ell}}{3\fix{m}}e_{\mathrm{tr}}\uz{^{(0)}}\, \nabla_{\vec{x}}\widehat{\vec{\nu}}\,:\, \nabla_{\vec{x}}\widehat{\vec{\nu}},
\end{align}
where $p_{\mathrm{tr,rot}}=R_s\rho\theta$ is the translational--rotational thermal pressure.

\uz{\begin{corollary}[Entropy production at the Maxwellian]
	\label{cor:zeroth_entropy_production}
	Substituting the entropy balance \eqref{eq:entropy_balance_law}, $\rho\,d\eta/dt = \xi_\tau - \nabla_{\vec{x}}\!\cdot\!\vec{\Phi}$, into \uz{the entropy-production identity \eqref{eq:entropy_production_constitutive}} gives
	\begin{equation}
		\label{eq:entropy_production_explicit}
		\theta\,\xi_\tau \;=\; \left[\mathbb{T} + p\mat{I} + \uz{\rho \,(\nabla_{\vec{x}}\widehat{\vec{\nu}})^T\partial_{\nabla_{\vec{x}}\widehat{\vec{\nu}}} \psi}\right]\!:\!\nabla_{\vec{x}}\vec{u} + \fix{\frac{1}{mC_{m,\ell}}}\left[\mathbb{M}-\fix{m\rho}\!\left(\widehat{\vec{\nu}}\times \partial_{\nabla_{\vec{x}}\widehat{\vec{\nu}}} \psi\right)\right]\!:\!\nabla_{\vec{x}}\vec{\mu} - \nabla_{\vec{x}}\!\cdot\!\fix{\vec{Q}} + \theta\,\nabla_{\vec{x}}\!\cdot\!\vec{\Phi}.
	\end{equation}
	The entropy flux $\vec{\Phi} = -\frac{\rho k_B}{m}\,\cchevrons{\vec{V}\log f}$ of \eqref{defn:entropy_flux_and_production} vanishes at the Maxwellian $f^{(0)}$ by parity, since $\vec{V}\log f^{(0)}$ is odd in $\vec{V}$, so $\vec{\Phi}^{(0)} = \vec{0}$. The BGK operator \eqref{eq:BGK} vanishes at $f^{(0)}$ as well, hence
	\begin{equation}\label{eq:zeroth_entropy_production_zero}
		\xi_\tau^{(0)} \;=\; 0,
	\end{equation}
	and \eqref{eq:entropy_production_explicit} reduces at zeroth order, together with \fix{$\vec{Q}^{(0)} = \vec{0}$}, to
	\begin{equation}
		\label{eq:closure_zeroth_constraint}
		0 \;=\; \left[\mathbb{T} + p\mat{I} + \uz{\rho \,(\nabla_{\vec{x}}\widehat{\vec{\nu}})^T\partial_{\nabla_{\vec{x}}\widehat{\vec{\nu}}} \psi}\right]\!:\!\nabla_{\vec{x}}\vec{u} + \fix{\frac{1}{mC_{m,\ell}}}\left[\mathbb{M}-\fix{m\rho}\!\left(\widehat{\vec{\nu}}\times \partial_{\nabla_{\vec{x}}\widehat{\vec{\nu}}} \psi\right)\right]\!:\!\nabla_{\vec{x}}\vec{\mu}.
	\end{equation}
\end{corollary}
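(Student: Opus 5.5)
The argument has three parts: an algebraic substitution, two parity and fixed-point facts at $f^{(0)}$, and the specialisation of the identity to zeroth order.

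\emph{Substitution.} Rewrite the left-hand side of the entropy-production identity \eqref{eq:entropy_production_constitutive} using the kinetic entropy balance \eqref{eq:entropy_balance_law}. That balance gives $\rho\,d\eta/dt = \xi_\tau - \nabla_{\vec{x}}\!\cdot\!\vec{\Phi}$, with $\xi_\tau$ the BGK entropy production of \Cref{lemma:BGK_entropy}. Multiply it by $\theta$ and equate it with the right-hand side of \eqref{eq:entropy_production_constitutive}. Moving $-\theta\nabla_{\vec{x}}\!\cdot\!\vec{\Phi}$ to the right yields \eqref{eq:entropy_production_explicit} directly. No further structure is needed, since both identities hold pointwise for the same $f$. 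The only care required is to keep the factor $\frac{1}{mC_{m,\ell}}$ and the sign conventions for $\mathbb{T}$ and $\mathbb{M}$ exactly as they appear in \eqref{eq:energy_balance_law}.

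\emph{The two zeroth-order facts.} For $\vec{\Phi}^{(0)}=\vec{0}$, write $\log f^{(0)}$ from \eqref{eq:maxwellian}. It equals a term independent of $\vec{V}$ (involving $\rho$, $\lambda$, $\theta$), plus $-m\abs{\vec{V}}^2/(2k_B\theta)$, plus $-\abs{\vec{\Sigma}}^2/(2C_{m,\ell}k_B\theta)$. Hence $\vec{V}\log f^{(0)}\,f^{(0)}$ is odd in $\vec{V}$, because the Gaussian factor in $\vec{V}$ is even and decouples from the $(\vec{\nu},\vec{\varsigma})$ factor. Changing variables $\vec{p}\mapsto 2m\vec{u}-\vec{p}$ therefore shows that the $\vec{p}$-integral vanishes. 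For $\xi^{(0)}_\tau=0$, note that $\mathcal{Q}[f^{(0)}] = -\tau^{-1}(f^{(0)}-f^{(0)})=0$ by \eqref{eq:BGK}, so the integrand of $\xi$ vanishes identically. This is also the equality case of \Cref{lemma:BGK_entropy}.

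\emph{Reduction at zeroth order.} The heat flux $\vec{Q}^{(0)}$ vanishes by the same parity argument, since both summands are odd in $\vec{V}$: the second is $\vec{V}\abs{\vec{\Sigma}}^2$, and the Maxwellian factorises between $\vec{V}$ and $\vec{\Sigma}$. Inserting $\xi_\tau^{(0)}=0$, $\vec{\Phi}^{(0)}=\vec{0}$ and $\vec{Q}^{(0)}=\vec{0}$ into \eqref{eq:entropy_production_explicit} leaves \eqref{eq:closure_zeroth_constraint}.

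The main obstacle is interpretive rather than computational: deciding at what level the divergences are evaluated. I would state explicitly that ``zeroth order'' means all kinetic fluxes are evaluated at $f^{(0)}$ identically in $\vec{x}$. Then $\nabla_{\vec{x}}\!\cdot\!\vec{Q}^{(0)}$ and $\nabla_{\vec{x}}\!\cdot\!\vec{\Phi}^{(0)}$ vanish because the fields themselves vanish everywhere, not merely at a point. Under the same convention, $\mathbb{T}$ and $\mathbb{M}$ in \eqref{eq:closure_zeroth_constraint} are to be read as their Maxwellian values.
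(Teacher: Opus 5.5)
Your substitution and the three vanishing facts match the paper's argument exactly: $\xi_\tau^{(0)}=0$ because the BGK operator annihilates $f^{(0)}$, and $\vec{\Phi}^{(0)}=\vec{Q}^{(0)}=\vec{0}$ by $\vec{V}$-parity. Your remark that the divergences vanish because the fluxes vanish identically in $\vec{x}$ is a useful precision.

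The step that fails is your resolution of the interpretive obstacle, namely reading $\mathbb{T}$ and $\mathbb{M}$ in \eqref{eq:closure_zeroth_constraint} as their values at $f^{(0)}$, under the same convention as $\vec{Q}$ and $\vec{\Phi}$. Applied consistently, that convention gives the following:
\begin{itemize}
\item Your own parity argument yields $\mathbb{M}[f^{(0)}] = -\rho\cchevrons{\vec{V}\otimes(\vec{\nu}\times\vec{\Sigma})}_{\sim f^{(0)}} = 0$, because $\vec{V}$ and $\vec{\Sigma}$ are independent centred Gaussians.
\item $\mathbb{T}[f^{(0)}] = -p\mat{I}$, as the paper computes just before the corollary.
\item At the Maxwellian, $\rho\,e^{(0)}_{\mathrm{rot}} = \tfrac{K}{2}\abs{\nabla_{\vec{x}}\widehat{\vec{\nu}}}^2$, so $\rho\,\partial_{\nabla_{\vec{x}}\widehat{\vec{\nu}}}\psi = K\nabla_{\vec{x}}\widehat{\vec{\nu}}$.
\end{itemize}
Substituting these turns \eqref{eq:closure_zeroth_constraint} into $0 = K\,\mathcal{B}(A)$, with $\mathcal{B}(A)$ the anisotropic affinity of \eqref{eq:hybrid_B}. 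That is not an identity. It is a nontrivial restriction on the flow and director texture, and it is false in general: it fails for any texture with $\nabla_{\vec{x}}\widehat{\vec{\nu}}\neq 0$ under a generic compression. It is also the very affinity whose square is dissipated in $\xi^{\mathrm{aniso}}_{\mathrm{bulk}}$.

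So, under your convention, the corollary would assert something false. This also shows that your claim ``both identities hold pointwise for the same $f$'' cannot be taken literally at $f=f^{(0)}$. The Maxwellian does not solve the kinetic equation. Moreover, \eqref{eq:entropy_production_constitutive} is built from the constitutive, director-gradient-dependent $\psi$, whose rate of change is not generated by the Maxwellian kinetic fluxes.

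The reading the paper needs, and uses, is the hybrid one announced at the start of \Cref{sec:zero}. Only $\xi_\tau$, $\vec{\Phi}$ and $\vec{Q}$ are evaluated at $f^{(0)}$; $\mathbb{T}$ and $\mathbb{M}$ remain unknown constitutive fluxes. Equation \eqref{eq:closure_zeroth_constraint} is then a constraint on their closure. This is exactly how it is used immediately afterwards to obtain \eqref{eq:closure_T_zeroth}--\eqref{eq:closure_M_zeroth}. There, $\mathbb{M}^{(0)} = mK(\widehat{\vec{\nu}}\times\nabla_{\vec{x}}\widehat{\vec{\nu}})$ is manifestly not the vanishing kinetic moment at the Maxwellian. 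With your final sentence replaced by this reading, the proposal is complete.
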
}

Following \cite[Section 6]{farrell}, we observe that a valid choice of constitutive relations that satisfies \eqref{eq:closure_zeroth_constraint} identically is
\begin{align}
	\label{eq:closure_T_zeroth}
	\mathbb{T}^{(0)} &= -p\mat{I} - \uz{\rho\, (\nabla_{\vec{x}}\widehat{\vec{\nu}})^T\partial_{\nabla_{\vec{x}}\widehat{\vec{\nu}}} \psi}, \\
	\label{eq:closure_M_zeroth}
	\mathbb{M}^{(0)} &= \fix{m\rho} \left(\widehat{\vec{\nu}}\times \partial_{\nabla_{\vec{x}}\widehat{\vec{\nu}}} \uz{\psi}\right).
\end{align}
\uz{The closures \eqref{eq:closure_T_zeroth}--\eqref{eq:closure_M_zeroth} are precisely those that make the two square brackets in \eqref{eq:closure_zeroth_constraint} of \Cref{cor:zeroth_entropy_production} vanish identically.}

\uz{Since $e_{\mathrm{tr}}$ and $\eta$ are $\nabla_{\vec{x}}\widehat{\vec{\nu}}$-independent, only $e_{\mathrm{rot}}$ contributes to $\partial_{\nabla_{\vec{x}}\widehat{\vec{\nu}}}\psi$. At the strongly-aligned Maxwellian, the $\nabla_{\vec{x}}\widehat{\vec{\nu}}$-dependent part of $\rho\,e_{\mathrm{rot}}^{(0)}$ assembles into
\begin{equation}
		\label{eq:tilde_erot_zeroth}
	\rho\,e_{\mathrm{rot}}^{(0)} \;=\; \frac{K_{\mathrm{kin}}}{2}\,\abs{\nabla_{\vec{x}}\widehat{\vec{\nu}}}^2 \;+\; (\text{terms independent of }\nabla_{\vec{x}}\widehat{\vec{\nu}}),
\end{equation}
with \uz{$K_{\mathrm{kin}} = C_{m,\ell}\,p_{\mathrm{tr,rot}}/m$} from \eqref{eq:oseen_frank} of \Cref{lemma:erot_oseen_frank} evaluated at \uz{the Maxwellian translational moment $-\rho\cchevrons{\vec{V}\otimes\vec{V}}^{(0)} = -p_{\mathrm{tr,rot}}\mat{I}$} (see \Cref{rmk:Kkin_emerges} below). Substituting \eqref{eq:tilde_erot_zeroth} into \eqref{eq:closure_T_zeroth}--\eqref{eq:closure_M_zeroth} gives}
\begin{align}
	\label{eq:closures_explicit_K}
	\mathbb{T}^{(0)} \;=\; -p\mat{I} \;-\; \fix{K_{\mathrm{kin}}}\,\uz{\bigl(\nabla_{\vec{x}}\widehat{\vec{\nu}}\odot\nabla_{\vec{x}}\widehat{\vec{\nu}}\bigr)}, \qquad
	\mathbb{M}^{(0)} \;=\; \fix{m\,K_{\mathrm{kin}}}\,\bigl(\widehat{\vec{\nu}}\times\nabla_{\vec{x}}\widehat{\vec{\nu}}\bigr),\fix{\footnote{In components $\mathbb{M}^{(0)}_{ij} = m K\,\varepsilon_{jkl}\,\widehat{\nu}_k\,\partial_i\widehat{\nu}_l$, the first index being the flux index, consistently with \eqref{eq:couple_stress}, the divergence $(\nabla_{\vec{x}}\cdot\mathbb{M})_j = \partial_i\mathbb{M}_{ij}$ and the contraction $\mathbb{M}:\nabla_{\vec{x}}\vec{\mu} = \mathbb{M}_{ij}\partial_i\mu_j$.}}
\end{align}
\uz{where $(\nabla_{\vec{x}}\widehat{\vec{\nu}}\odot\nabla_{\vec{x}}\widehat{\vec{\nu}})_{ij} := \partial_i\widehat{\vec{\nu}}_k\,\partial_j\widehat{\vec{\nu}}_k = \bigl((\nabla_{\vec{x}}\widehat{\vec{\nu}})^T\nabla_{\vec{x}}\widehat{\vec{\nu}}\bigr)_{ij}$,} so the Ericksen stress, \uz{now in its classical tensorial form,} and the couple stress at zeroth order carry the \fix{Frank modulus $K = K_{\mathrm{kin}}$}.

We can now substitute the closures \eqref{eq:closure_T_zeroth} and \eqref{eq:closure_M_zeroth} together with \eqref{eq:tilde_erot_zeroth} into \eqref{eq:balance_laws} to obtain the macroscopic balance laws at zeroth order, which are given by
\begin{subequations}
	\label{eq:balance_laws_zeroth}
	\begin{align}
		&\partial_t\rho + \nabla_{\vec{x}}\cdot(\rho\vec{u}) = 0, &&\\
		&\rho\left[\partial_t\vec{u} + (\nabla_{\vec{x}}\vec{u})\vec{u}\right] - \nabla_{\vec{x}}\cdot\mathbb{T}\uz{^{(0)}} = 0, && \mathbb{T}\uz{^{(0)}} = -p\mat{I} - \uz{K\!\left(\nabla_{\vec{x}}\widehat{\vec{\nu}} \odot \nabla_{\vec{x}}\widehat{\vec{\nu}}\right)}, \\
		&\rho\bigl[\partial_t\vec{\mu} + (\nabla_{\vec{x}}\vec{\mu})\vec{u}\bigr] \fix{-} \nabla_{\vec{x}}\cdot\mathbb{M}\uz{^{(0)}} = 0, && \vec{\mu} = C_{m,\ell}\,\widehat{\vec{\nu}}\times \dot{\widehat{\vec{\nu}}},\quad \mathbb{M}\uz{^{(0)}} = \fix{m K}\!\left(\widehat{\vec{\nu}}\times\nabla_{\vec{x}}\widehat{\vec{\nu}}\right),\\
		&\fix{\rho\bigl[\partial_t \widetilde{e} + \vec{u}\cdot\nabla_{\vec{x}} \widetilde{e}\bigr]} - \mathbb{T}\uz{^{(0)}}:\nabla_{\vec{x}}\vec{u} - \fix{\tfrac{1}{mC_{m,\ell}}}\mathbb{M}\uz{^{(0)}}:\nabla_{\vec{x}}\vec{\mu} = 0, && \fix{\vec{Q}^{(0)} = \vec{0}},\quad \fix{p = R_s\,\rho\,\theta},\quad \uz{\theta = \tfrac{2}{3\,R_s}\,e_{\mathrm{tr}}},\\
		&\rho\bigl[\partial_t \eta + \vec{u}\cdot\nabla_{\vec{x}} \eta\bigr] = 0, && \cla{\xi_\tau^{(0)}} = 0.
	\end{align}
\end{subequations}
\fix{Here $p = \rho^2\partial_\rho\psi = R_s\rho\theta$ is the thermodynamic pressure, coinciding with the translational--rotational (thermal) ideal-gas pressure $p_{\mathrm{tr,rot}} = R_s\rho\theta$ because the constitutive free energy \eqref{eq:psi_decomposition} carries no density-dependent interaction term.}

We refer to \eqref{eq:balance_laws_zeroth} as the \emph{Euler--Ericksen system}.

\fix{\begin{remark}[Frank modulus $K = K_{\mathrm{kin}}$]\label{rmk:Kkin_emerges}
The kinematic contribution $K_{\mathrm{kin}} = C_{m,\ell}\,p/m$, identified by \cite{farrell}, emerges from $\rho e_{\mathrm{rot}}^{(0)}$ via \Cref{lemma:erot_oseen_frank}. At the zeroth-order stress $\mathbb{T}^{(0)} = -p\mat{I}$ we have $\rho e_{\mathrm{rot}}^{(0)} = (C_{m,\ell}p/2m)|\nabla_{\vec{x}}\widehat{\vec{\nu}}|^2$, so the prefactor of $|\nabla\widehat{\vec{\nu}}|^2$ in $\rho e_{\mathrm{rot}}^{(0)}$ is $K_{\mathrm{kin}}/2$, and the Ericksen and couple stresses at zeroth order carry $K = K_{\mathrm{kin}}$, \eqref{eq:closures_explicit_K}. This is the natural consequence of working with the Helmholtz decomposition \eqref{eq:psi_decomposition} throughout \eqref{eq:entropy_production_constitutive}--\eqref{eq:balance_laws_zeroth}.
\end{remark}}

\section{First-order hydrodynamics}\label{sec:ce}
\cla{The zeroth-order hydrodynamics derived in the previous section, and in particular the balance laws~\eqref{eq:balance_laws_zeroth}, capture only the inviscid, non-dissipative response of the system.} \uz{To recover viscous, heat-conducting and orientational dissipation we perform a first-order Chapman--Enskog expansion of the kinetic equation~\eqref{eq:cfmz} with the BGK collision operator~\eqref{eq:BGK} of~\Cref{sec:bgk}. The anisotropic extension is the subject of ongoing work.} \cla{The crucial point that distinguishes the present calculation from a standard Chapman--Enskog procedure is that the time derivatives appearing in the linearised problem must be substituted using the zeroth-order balance laws~\eqref{eq:balance_laws_zeroth}, not the compressible Euler equations: the Ericksen-stress and couple-stress contributions to $\mathbb{T}^{(0)}$ and $\mathbb{M}^{(0)}$, carrying the \fix{Frank modulus $K = K_{\mathrm{kin}}$} of~\Cref{rmk:Kkin_emerges}, propagate through the substitution rule and produce structurally new terms in $f^{(1)}$. The structure of the calculation parallels the Chapman--Enskog procedure for the BGK operator \cite[Appendix~A]{farrell}, but is generalised to the kinetic theory of ordered fluids, in which the orientational distribution $\lambda(\vec{\nu},t)$ and the Vlasov mean-field force introduce new challenges and torque contributions.} 
\cla{We non-dimensionalise the kinetic equation~\eqref{eq:cfmz} by introducing the Knudsen number $\epsilon = \mathrm{Kn} = \ell_{\mathrm{mfp}}/\ell_{\mathrm{macro}}$ as the formal small parameter and rescale the BGK relaxation time as
\begin{equation}
	\label{eq:ce_rescaling}
	\tau \;=\; \epsilon\,\bar\tau,
\end{equation}
with $\bar\tau = \mathcal{O}(1)$. The kinetic equation~\eqref{eq:cfmz} together with~\eqref{eq:BGK} then reads
\begin{equation}
	\label{eq:ce_kinetic_rescaled}
	\mathcal{D}f \;:=\; \partial_t f + m^{-1}\vec{p}\cdot\nabla_{\vec{x}}f + \mat{B}^{-1}\vec{\varsigma}\cdot\nabla_{\vec{\nu}}f + \vec{\mathcal{V}}\cdot\nabla_{\vec{\varsigma}}f \;=\; -\frac{1}{\epsilon\,\bar\tau}\bigl(f - f^{(0)}\bigr),
\end{equation}
where $\mathcal{D}$ is the streaming operator. We introduce a multi-scale time decomposition,}
\begin{equation}
	\label{eq:ce_multiscale}
	\partial_t = \partial_{t_0} + \epsilon\,\partial_{t_1} + \epsilon^2\,\partial_{t_2} + \cdots,
\end{equation}
and an asymptotic expansion of the distribution,
\begin{equation}
	\label{eq:ce_expansion}
	f = f^{(0)} + \epsilon\, f^{(1)} + \epsilon^2\, f^{(2)} + \cdots.
\end{equation}
The constraint that the macroscopic moments $(\rho, \vec u, e, \vec\mu)$, together with the orientational distribution $\lambda(\vec{\nu}, t)$, are entirely captured by the local Maxwellian $f^{(0)}$ amounts to the solvability condition that, for every $k\geq 1$,
\begin{equation}
	\label{eq:ce_sol_conditions}
	\int f^{(k)}\,\psi\,\dd\vec{\Xi} = 0\quad \text{for every collision invariant } \psi \text{ in~\eqref{eq:collinv}},
\end{equation}
together with the orthogonality of $f^{(k)}$ with respect to the orientational distribution $\lambda$, namely $\int f^{(k)}\,\dd\vec V\,\dd\vec\Sigma = 0$ pointwise in $\vec{\nu}$. These conditions guarantee that the corrections $f^{(k)}$ for $k\geq 1$ encode purely off-equilibrium fluxes and do not redefine the macroscopic state. Substituting~\eqref{eq:ce_multiscale} and~\eqref{eq:ce_expansion} into~\eqref{eq:ce_kinetic_rescaled} and matching powers of $\epsilon$ produces an infinite hierarchy of equations, the first two of which are analysed below.

At leading order in $\epsilon$, the right-hand side of~\eqref{eq:ce_kinetic_rescaled} forces $f^{(0)} - f^{(0)} = 0$, which is satisfied trivially. \cla{The slow evolution of the macroscopic fields $(\rho, \vec u, e, \vec\mu)$ on the time scale $t_0$ is then governed by the zeroth-order balance laws~\eqref{eq:balance_laws_zeroth}:
\begin{subequations}
	\label{eq:ce_zeroth_balance}
	\begin{align}
		&\partial_t\rho + \nabla_{\vec{x}}\cdot(\rho\vec{u}) = 0, &&\\
		&\rho\left[\partial_t\vec{u} + (\nabla_{\vec{x}}\vec{u})\vec{u}\right] - \nabla_{\vec{x}}\cdot\mathbb{T}\uz{^{(0)}} = 0, && \mathbb{T}\uz{^{(0)}} = -p\mat{I} - \uz{K\!\left(\nabla_{\vec{x}}\widehat{\vec{\nu}} \odot \nabla_{\vec{x}}\widehat{\vec{\nu}}\right)}, \\
		&\rho\bigl[\partial_t\vec{\mu} + (\nabla_{\vec{x}}\vec{\mu})\vec{u}\bigr] \fix{-} \nabla_{\vec{x}}\cdot\mathbb{M}\uz{^{(0)}} = 0, && \vec{\mu} = C_{m,\ell}\,\widehat{\vec{\nu}}\times \dot{\widehat{\vec{\nu}}},\quad \mathbb{M}\uz{^{(0)}} = \fix{m K}\!\left(\widehat{\vec{\nu}}\times\nabla_{\vec{x}}\widehat{\vec{\nu}}\right),\\
		&\fix{\rho\bigl[\partial_t \widetilde{e} + \vec{u}\cdot\nabla_{\vec{x}} \widetilde{e}\bigr]} - \mathbb{T}\uz{^{(0)}}:\nabla_{\vec{x}}\vec{u} - \fix{\tfrac{1}{mC_{m,\ell}}}\mathbb{M}\uz{^{(0)}}:\nabla_{\vec{x}}\vec{\mu} = 0, && \fix{\vec{Q}^{(0)} = \vec{0}},\quad \fix{p = R_s\,\rho\,\theta},\quad \theta = \frac{2}{5\,R_s}\,\widetilde{e},\\
		&\rho\bigl[\partial_t \eta + \vec{u}\cdot\nabla_{\vec{x}} \eta\bigr] = 0, && \xi_{\tau}^{(0)} = 0.
	\end{align}
\end{subequations}
We emphasise that~\eqref{eq:ce_zeroth_balance} differs from the standard compressible Euler equations precisely by the Ericksen-stress and couple-stress contributions inherited from the closures~\eqref{eq:closure_T_zeroth}--\eqref{eq:closure_M_zeroth} with the rotational specific energy of \uz{\eqref{eq:entropy_production_constitutive}, obtained by following the path of choosing the Helmholtz free energy as our primary thermodynamic potential}. These are the contributions that will propagate through the time-derivative substitution rule used at the next order, and ultimately produce structurally new terms in $f^{(1)}$.

Collecting the order-$\epsilon^{1}$ terms in~\eqref{eq:ce_kinetic_rescaled} we obtain
\begin{equation}
	\label{eq:ce_first_order_linearised}
	f^{(1)} \;=\; -\bar\tau\,\mathcal{D}f^{(0)}.
\end{equation}
The remainder of this section is devoted to the explicit computation of the streaming derivative $\mathcal{D}f^{(0)}$, the resulting form of $f^{(1)}$, and its first-order moments.}

\uz{\subsection{Expansion independence of the equation of state}\label{sec:ce_expansion_independence}
The solvability conditions~\eqref{eq:ce_sol_conditions} have an immediate consequence for the thermodynamic structure of the model, which we record before turning to the explicit computation of $f^{(1)}$. Following \cite[Section~5]{malek}, we call a macroscopic observable $\varphi[f]$ \emph{expansion independent} when
\begin{equation}
	\label{eq:expansion_independence}
	\varphi[f] = \varphi\Bigl[\textstyle\sum_{i=0}^{k}\epsilon^i f^{(i)}\Bigr]\qquad\text{for every } k\geq 0,
\end{equation}
that is, when the observable is unchanged on every truncation of the Chapman--Enskog expansion~\eqref{eq:ce_expansion}. We write $\varphi^{(k)}$ for the observable evaluated on the truncation $\sum_{i=0}^k\epsilon^i f^{(i)}$. The three results below are, for the present model, the analogues of \cite[Propositions~1--3]{malek}, the only new ingredients being the rotational component and the orientational distribution $\lambda$.

\begin{proposition}[Expansion independence of the conserved fields]\label{prop:expansion_independence_fields}
Under the solvability conditions~\eqref{eq:ce_sol_conditions} the mass density $\rho$, the bulk velocity $\vec u$, the total internal energy $\widetilde{e}$, and the bulk intrinsic angular momentum $\vec\mu$ are expansion independent.
\end{proposition}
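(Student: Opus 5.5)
The plan is to reduce expansion independence to the solvability conditions \eqref{eq:ce_sol_conditions}, field by field. Each of $\rho$, $\vec u$, $\widetilde{e}$ and $\vec\mu$ is (a function of) a moment $\int\psi f\,\dd\vec\Xi$ with $\psi$ a collision invariant from \eqref{eq:collinv}, or a function of such a moment. Since $f\mapsto\int\psi f\,\dd\vec\Xi$ is linear, on a truncation $\sum_{i=0}^k\epsilon^i f^{(i)}$ we get
\begin{equation*}
\int\psi\sum_{i=0}^k\epsilon^i f^{(i)}\,\dd\vec\Xi=\int\psi f^{(0)}\,\dd\vec\Xi+\sum_{i=1}^k\epsilon^i\int\psi f^{(i)}\,\dd\vec\Xi .
\end{equation*}
By \eqref{eq:ce_sol_conditions} every term in the sum vanishes. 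The right-hand side therefore equals the zeroth-order moment, which by construction of the Maxwellian \eqref{eq:maxwellian} is the same as the moment of $f$. I will carry this out in the following order.

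\textbf{Mass density.} Take $\psi=m$. Then $\rho^{(k)}=\rho^{(0)}=\rho$ for every $k$.

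\textbf{Bulk velocity.} Take $\psi=\vec p$. This gives $\rho^{(k)}\vec u^{(k)}=\rho\vec u$. Dividing by the already expansion-independent $\rho$ yields $\vec u^{(k)}=\vec u$.

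\textbf{Intrinsic angular momentum.} As observed before \eqref{eq:eta_def}, $\vec\nu\times\vec\varsigma$ is a difference of two collision invariants, so \eqref{eq:ce_sol_conditions} applies to it by linearity. The same division by $n=\rho/m$ then gives $\vec\mu^{(k)}=\vec\mu$.

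\textbf{Internal energy.} This is the step needing care, because $\widetilde{e}$ is not itself the moment of a collision invariant. Let $E$ denote the moment of the energy invariant $\tfrac12(|\vec p|^2/m+|\vec\varsigma|^2/C_{m,\ell})$. Expanding $\vec v=\vec u+\vec V$ and $\vec\varsigma=\vec\sigma+\vec\Sigma$ splits it as
\begin{equation*}
E=\rho\Bigl(\tfrac12|\vec u|^2+e_{\mathrm{tr}}\Bigr)+\rho\Bigl(\tfrac{|\vec\mu|^2}{2mC_{m,\ell}}+e_{\mathrm{rot}}\Bigr),
\end{equation*}
where the second bracket is the rotational split \eqref{eq:rot_energy_split}. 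In the cross terms, $\cchevrons{\vec V}=0$ holds on every truncation, because $\vec u$ is expansion independent.

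\textbf{Main obstacle.} The bulk-rotation piece of $E$ must be expressed through $\vec\mu$. Under the micro--macro identification, $|\vec\sigma|=|\widehat{\vec\nu}\times\vec\sigma|=|\vec\mu|$. This requires $\cchevrons{\vec\Sigma}=0$ on each truncation, i.e.\ that $\vec\sigma$ is not shifted by the $f^{(i)}$. I would obtain this from the $\vec\mu$ solvability condition combined with the constraint $\vec\varsigma\perp\widehat{\vec\nu}$, using the pointwise-in-$\vec\nu$ orthogonality $\int f^{(k)}\,\dd\vec V\,\dd\vec\Sigma=0$ to keep the orientation factor outside the average. If this fails, the fallback is to invoke the equivalence stated in \Cref{rmk:peculiar_angular_momentum} and work directly with $\vec\Lambda$, whose mean vanishes by the $\vec\mu$ condition.

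\textbf{Conclusion.} Once $E$, $\rho$, $\vec u$ and $\vec\mu$ are all expansion independent, $\widetilde{e}=E/\rho-\tfrac12|\vec u|^2-|\vec\mu|^2/(2mC_{m,\ell})$ is expansion independent too. This completes all four fields of the statement.
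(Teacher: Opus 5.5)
Your proposal is correct and follows the paper's argument. Each field is a moment against a collision invariant, or a function of such moments, so the solvability conditions remove every $f^{(i)}$ contribution with $i\geq 1$; $\vec u$ and $\widetilde{e}$ then follow by dividing by $\rho$ and subtracting the bulk kinetic parts. Your version is slightly more careful than the paper's, which writes the total energy density as $\rho(\tfrac12|\vec u|^2+\widetilde{e})$ and silently drops the bulk spin energy $|\vec\mu|^2/(2mC_{m,\ell})$ that you subtract explicitly. Your ``main obstacle'' is settled by your own fallback: $|\vec\nu\times\vec\varsigma|=|\vec\varsigma|$ pointwise gives $\cchevrons{|\vec\varsigma|^2}=|\vec\mu|^2+\cchevrons{|\vec\Lambda|^2}$ on every truncation, which is all that is needed.
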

\begin{proof}
Each of these fields is a moment of $f$ against a collision invariant of~\eqref{eq:collinv}. For the density,
\begin{equation*}
	\rho^{(k)} = \int m\sum_{i=0}^k\epsilon^i f^{(i)}\,\dd\vec\Xi = \int m f^{(0)}\,\dd\vec\Xi + \sum_{i=1}^k\epsilon^i\int m f^{(i)}\,\dd\vec\Xi = \rho,
\end{equation*}
since every term with $i\geq 1$ vanishes because $m$ is a collision invariant. The same computation with the momentum $\vec p$, with the total energy $\tfrac12(\abs{\vec p}^2/m + \vec\varsigma^T\mat{B}^{-1}\vec\varsigma)$, and with the angular momentum of~\eqref{eq:collinv} shows that $\rho\vec u$, the total energy density $\rho(\tfrac12\abs{\vec u}^2 + \widetilde{e})$ and $\vec\mu$ coincide with their zeroth-order values. As $\rho$ and $\rho\vec u$ are expansion independent, so is $\vec u$, and as the total energy density is then expansion independent, so is the internal energy $\widetilde{e}$.
\end{proof}

\begin{proposition}[Expansion independence of the entropy]\label{prop:expansion_independence_entropy}
Under the solvability conditions~\eqref{eq:ce_sol_conditions} the specific entropy satisfies $\eta^{(1)} = \eta^{(0)} + O(\epsilon^2)$.
\end{proposition}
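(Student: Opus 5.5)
Expand the entropy functional around the Maxwellian. By definition, $\rho\eta^{(1)} = -k_B\int (f^{(0)}+\epsilon f^{(1)})\log(f^{(0)}+\epsilon f^{(1)})\,\dd\vec\Xi$, and by \Cref{prop:expansion_independence_fields} the density is the same on every truncation. So it suffices to compare the integrals themselves. The plan is to Taylor-expand the integrand $F(x)=x\log x$ about $f^{(0)}$, with $F'(x)=\log x+1$ and $F''(x)=1/x$. This gives
\begin{equation*}
	\rho\eta^{(1)} = \rho\eta^{(0)} - \epsilon k_B\int f^{(1)}\bigl(\log f^{(0)}+1\bigr)\,\dd\vec\Xi - \frac{\epsilon^2 k_B}{2}\int \frac{(f^{(1)})^2}{f^{(0)}+\vartheta\epsilon f^{(1)}}\,\dd\vec\Xi,
\end{equation*}
where $\vartheta\in(0,1)$ is pointwise, from the Lagrange form of the remainder.

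The key step is to show that the linear term vanishes identically. By construction of the Maxwellian \eqref{eq:maxwellian}, the quantity $\log f^{(0)}$ equals $\log(\rho/m)$ plus $\log\lambda(\vec\nu)$, minus temperature normalisation constants, minus a quadratic form in $\vec V$ and $\vec\Sigma$. As emphasised in \Cref{sec:bgk}, this is affine in the collision invariants $1$, $\vec p$, the energy $\tfrac12(\abs{\vec p}^2/m+\vec\varsigma\cdot\mat B^{-1}\vec\varsigma)$ and the intrinsic angular momentum, with $\vec x$-dependent coefficients. The one exception is the term $\log\lambda(\vec\nu)$, which is a function of $\vec\nu$ alone. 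I will split the integral into these two pieces:

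\emph{Collision-invariant part:} this part integrates to zero against $f^{(1)}$ by the solvability conditions \eqref{eq:ce_sol_conditions}.

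\emph{Orientational part:} the term $\int f^{(1)}\log\lambda(\vec\nu)\,\dd\vec\Xi$ vanishes by the additional orthogonality condition. That condition, $\int f^{(1)}\,\dd\vec V\,\dd\vec\Sigma=0$ pointwise in $\vec\nu$, lets us integrate first over momenta at fixed $\vec\nu$.

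The constant $+1$ is handled by the mass condition. Hence the $O(\epsilon)$ term drops out.

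For the remainder, I will assume the usual Chapman--Enskog regularity: $f^{(1)}=-\bar\tau\mathcal{D}f^{(0)}$ from \eqref{eq:ce_first_order_linearised} is $f^{(0)}$ times a polynomial in $(\vec V,\vec\Sigma)$ with coefficients built from macroscopic gradients. Then $f^{(0)}+\vartheta\epsilon f^{(1)}\ge f^{(0)}/2$ on the bulk of phase space for $\epsilon$ small. There the quadratic term is bounded by $\epsilon^2\int (f^{(1)})^2/f^{(0)}\,\dd\vec\Xi$, which is finite because it is a Gaussian moment of a polynomial.

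I expect the main obstacle to be this remainder. Positivity can fail where the polynomial factor is large, that is, in the Gaussian tails. I would first treat the result as a formal asymptotic statement in powers of $\epsilon$ and expand the logarithm formally, as in \cite[Proposition~2]{malek]}. I would then remark that a rigorous bound follows by truncating the tails, which carry exponentially small mass. A secondary subtlety is the micro--macro identification. Under it, $\lambda$ degenerates towards a concentrated profile, and $\log\lambda$ should be kept formal; the orthogonality condition is what makes its contribution vanish regardless. Dividing by $\rho$ then yields $\eta^{(1)}=\eta^{(0)}+O(\epsilon^2)$.
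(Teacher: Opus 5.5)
Your proposal is correct and follows essentially the paper's route. You expand $f\log f$ to first order and kill the linear term by splitting $\log f^{(0)}$ into two pieces: the collision-invariant pieces, annihilated by \eqref{eq:ce_sol_conditions} (the paper reaches the energy piece through \Cref{prop:expansion_independence_fields}, itself a consequence of those conditions), and the orientational piece $\log\lambda(\vec{\nu})$, annihilated by the pointwise orthogonality in $\vec{\nu}$. Your explicit handling of the constant $+1$ and your Lagrange-remainder discussion of the $O(\epsilon^2)$ term, including the positivity caveat in the Gaussian tails, only make explicit what the paper treats formally.
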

\begin{proof}
Truncating~\eqref{eq:ce_expansion} at first order and expanding the logarithm gives $\log(f^{(0)} + \epsilon f^{(1)}) = \log f^{(0)} + \epsilon\,f^{(1)}/f^{(0)} + O(\epsilon^2)$, so the entropy density $\rho\eta = -k_B\int f\log f\,\dd\vec\Xi$ of~\eqref{defn:entropy_flux_and_production} reads
\begin{equation*}
	\rho\eta^{(1)} = -k_B\int f^{(0)}\log f^{(0)}\,\dd\vec\Xi - \epsilon\,k_B\int f^{(1)}\log f^{(0)}\,\dd\vec\Xi + O(\epsilon^2).
\end{equation*}
The first integral is $\rho\eta^{(0)}$ of~\eqref{eq:entropy_density_maxwellian}. In the second, $\log f^{(0)}$ of~\eqref{eq:ce_log_maxwellian} is an affine combination of the constant, the peculiar total energy $\tfrac12(m\abs{\vec V}^2 + \abs{\vec\Sigma}^2/C_{m,\ell})$ and the orientational term $\log\lambda(\vec{\nu})$. The constant multiplies $\int f^{(1)}\,\dd\vec\Xi = 0$ by mass solvability, the peculiar total energy multiplies $\int \tfrac12(m\abs{\vec V}^2 + \abs{\vec\Sigma}^2/C_{m,\ell})\,f^{(1)}\,\dd\vec\Xi = 0$ by the expansion independence of the internal energy of \Cref{prop:expansion_independence_fields}, and the orientational term multiplies $\int f^{(1)}\,\dd\vec V\,\dd\vec\Sigma = 0$ at fixed $\vec{\nu}$. Hence $\rho\eta^{(1)} = \rho\eta^{(0)} + O(\epsilon^2)$.
\end{proof}

These two propositions transfer the equation of state of \Cref{sec:eos} from equilibrium to the dissipative first-order state.

\begin{corollary}[Equation of state at first order]\label{cor:eos_first_order}
Under the solvability conditions~\eqref{eq:ce_sol_conditions} the caloric equation of state~\eqref{eq:equipartition_energy} and the thermal equation of state~\eqref{eq:ideal_gas_law} hold on the first-order truncation, and the thermodynamic temperature $\theta$ and the pressure $p$ are expansion independent.
\end{corollary}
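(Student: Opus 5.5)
The plan is to deduce the statement directly from \Cref{prop:expansion_independence_fields,prop:expansion_independence_entropy}, since at this level $\theta$ and $p$ are functions of the conserved fields (and the entropy) only.

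\textbf{Step 1: the temperature.} By \eqref{def:temperature}, $\theta = \tfrac{2}{3R_s}e_{\mathrm{tr}}$. I would first show that $e_{\mathrm{tr}}$ is expansion independent, and not only the total $\widetilde{e}$. \Cref{prop:expansion_independence_fields} gives expansion independence of $\widetilde{e}$. To split off $e_{\mathrm{tr}}$, I would use the equipartition relation $\widetilde{e} = \tfrac{5}{2}R_s\theta$ already recorded in \eqref{eq:ce_zeroth_balance}, which follows from \eqref{eq:equipartition_easy}. This lets $\theta$ be defined from $\widetilde{e}$ via $\theta = \tfrac{2}{5R_s}\widetilde{e}$, which is then expansion independent. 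At equilibrium this definition agrees with \eqref{def:temperature}.

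The main obstacle is whether the partition of $\widetilde{e}$ into $e_{\mathrm{tr}}$ and $e_{\mathrm{rot}}$ persists in the first-order state, i.e.\ whether
\[
\int \tfrac12 m\abs{\vec V}^2 f^{(1)}\,\dd\vec\Xi = 0
\]
separately. I would check this from $f^{(1)} = -\bar\tau\,\mathcal{D}f^{(0)}$ in \eqref{eq:ce_first_order_linearised}, using the structure of the BGK operator: the trace part of $\mathcal{D}f^{(0)}$ proportional to $m\abs{\vec V}^2 - 3k_B\theta$ is eliminated by the substitution with the zeroth-order balances \eqref{eq:ce_zeroth_balance}. If this fails, I would adopt the total-energy definition of $\theta$ and note that the caloric law \eqref{eq:equipartition_energy} then holds with $e_{\mathrm{tr}}$ replaced by its equipartition share.

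\textbf{Step 2: the caloric law.} With $\theta^{(1)} = \theta^{(0)}$ established, the caloric law \eqref{eq:equipartition_energy} holds on the truncation by construction.

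\textbf{Step 3: the pressure.} The pressure is defined thermodynamically by \eqref{eq:thermodynamic_pressure}, $p = \rho^2\partial_\rho\psi$, with $\psi = \widetilde{e} - \theta\eta$. By \Cref{prop:expansion_independence_entropy}, $\eta^{(1)} = \eta^{(0)} + O(\epsilon^2)$, and $\rho$ and $\theta$ are unchanged. Hence the constitutive function $\psi(\theta,\rho,\nabla_{\vec{x}}\widehat{\vec{\nu}})$ of \eqref{eq:psi_decomposition} is the same function on the first-order truncation. Differentiating in $\rho$ at fixed $\theta$ reproduces $p = R_s\rho\theta$, i.e.\ \eqref{eq:ideal_gas_law}, and $p^{(1)} = p^{(0)}$ to first order.

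Finally, I would emphasise the distinction between this thermodynamic $p$ and the mechanical trace $-\tfrac13\mathrm{tr}\,\mathbb{T}^{(1)}$. The two may differ by a viscous contribution, and that difference is exactly what the subsequent constitutive closure determines. The corollary concerns only the thermodynamic $p$.
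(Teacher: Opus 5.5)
Your route is the paper's. Expansion independence of $\rho$ and $\widetilde e$ (\Cref{prop:expansion_independence_fields}) and of $\eta$ (\Cref{prop:expansion_independence_entropy}) puts the first-order state on the same equilibrium relation, so $\theta$ and $p=\rho^2\partial_\rho\psi$ keep their zeroth-order values. The law $p=R_s\rho\theta$ then follows from the $\rho$-dependence of $\eta$ in \eqref{eq:psi_decomposition}, and your Step~3 is exactly this argument.

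What needs correcting is the primary branch of Step~1, which would fail. The solvability conditions \eqref{eq:ce_sol_conditions} constrain only the total energy, not $\tfrac12 m\abs{\vec V}^2$ separately. Substituting the zeroth-order balances also does not remove the translational trace mode. The isotropic bulk term of \eqref{eq:ce_f1_explicit} is $\tfrac25\bigl(\tfrac{2B}{3}-C\bigr)\nabla_{\vec x}\!\cdot\!\vec u$, and \Cref{lemma:maxwellian_moments} gives $\cchevrons{B(\tfrac{2B}{3}-C)}_{\sim f^{(0)}} = 1 = -\cchevrons{C(\tfrac{2B}{3}-C)}_{\sim f^{(0)}}$. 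So $f^{(1)}$ moves energy between the translational and rotational reservoirs at fixed total. As a result, $e_{\mathrm{tr}}^{(1)}-e_{\mathrm{tr}}^{(0)}$ picks up a term proportional to $\tau R_s\theta\,\nabla_{\vec x}\!\cdot\!\vec u$, and this exchange is precisely why $\zeta_{\mathrm{iso}}\neq 0$. The cancellation you expect is the monatomic one, where there is no second reservoir.

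Your fallback is therefore not optional; it is the proof. You must define $\theta$ through $\widetilde e$, equivalently $\theta=\partial_\eta\widetilde e$. The caloric law that holds on the first-order truncation is then $\widetilde e=\tfrac52 R_s\theta$, not \eqref{eq:equipartition_energy} with the kinetic $e_{\mathrm{tr}}^{(1)}$. If you took \eqref{def:temperature} literally instead, $\theta$ would not be expansion independent. Your Step~2 claim that the law holds ``by construction'' is correct only in this total-energy sense.

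The paper's proof makes the same switch without comment. It calls $\theta=\partial_\eta\widetilde e$ the temperature of \eqref{def:temperature} and closes with $\theta=\tfrac{2}{5R_s}\widetilde e$. Your remark that \eqref{eq:equipartition_energy} survives only with $e_{\mathrm{tr}}$ replaced by its equipartition share $\tfrac35\widetilde e$ is the accurate reading of the corollary.
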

\begin{proof}
By \Cref{prop:expansion_independence_fields} the density and the total internal energy are expansion independent, and by \Cref{prop:expansion_independence_entropy} so is the entropy up to $O(\epsilon^2)$. The equilibrium relation $\widetilde{e} = \widetilde{e}(\rho,\eta)$ of \Cref{sec:eos} therefore holds on the first-order truncation, so the temperature $\theta = \partial_\eta \widetilde{e}$ of~\eqref{def:temperature} \uz{and the thermodynamic pressure $p = \rho^2\partial_\rho\psi$ of \eqref{eq:thermodynamic_pressure} retain their zeroth-order values.} \fix{By the Helmholtz decomposition \eqref{eq:psi_decomposition} the pressure coincides with the translational--rotational ideal-gas pressure, $p = p_{\mathrm{tr,rot}} = R_s\rho\theta$, which together with the equipartition law $\theta = \tfrac{2}{5 R_s}\widetilde{e}$ holds not only at equilibrium but on the dissipative first-order state.}
\end{proof}

A comment on the meaning of the order superscripts in the Helmholtz potential is in order. When writing $\psi^{(1)}$ we mean
\begin{equation}
	\label{eq:psi_first_order_convention}
	\psi^{(1)} \;:=\; e_{\mathrm{tr}}^{(1)} \,+\, e_{\mathrm{rot}}^{(1)} \,-\, \theta\,\eta^{(1)}.
\end{equation}
The kinetic components $e_{\mathrm{tr}}$, $e_{\mathrm{rot}}$ and $\eta$ are evaluated on the truncation $f^{(0)} + \epsilon f^{(1)}$\fix{; the constitutive free energy carries no interaction component}. The practical gain of \eqref{eq:psi_first_order_convention} is considerable.
}

\subsection{Streaming derivative}\label{sec:ce_streaming}
We compute the streaming derivative $\mathcal{D}f^{(0)}$ via the chain rule, $\mathcal{D}f^{(0)} = f^{(0)}\,\mathcal{D}\log f^{(0)}$, where, from~\eqref{eq:maxwellian},
\begin{equation}
	\label{eq:ce_log_maxwellian}
	\log f^{(0)} = \log\rho - \frac{5}{2}\log\theta - \frac{m\,\abs{\vec{V}}^2}{2k_B\theta} - \frac{\abs{\vec{\Sigma}}^2}{2C_{m,\ell}k_B\theta} + \log\lambda(\vec{\nu}) + \mathrm{C},
\end{equation}
where $\vec{V} = \vec{v} - \vec{u}$, $\vec{\Sigma} = \vec{\varsigma} - \vec{\sigma}$, $\vec{\sigma} = \cchevrons{\vec{\varsigma}}$, and $\mathrm{C}$ is a constant independent of any physical quantity. We collect the contribution of each term to $\mathcal{D}\log f^{(0)}$. Throughout, we substitute the time derivatives of the macroscopic fields using the zeroth-order balance laws~\eqref{eq:ce_zeroth_balance}.

Acting with $\mathcal{D}$ on $\log\rho$ and using the continuity equation in~\eqref{eq:ce_zeroth_balance},
\begin{align}
	\mathcal{D}\log\rho &= \rho^{-1}\bigl(\partial_{t_0}\rho + \vec{v}\cdot\nabla_{\vec{x}}\rho\bigr) = \rho^{-1}\bigl(-\nabla_{\vec{x}}\cdot(\rho\vec{u}) + \vec{v}\cdot\nabla_{\vec{x}}\rho\bigr) = -\nabla_{\vec{x}}\cdot\vec{u} + \rho^{-1}\vec{V}\cdot\nabla_{\vec{x}}\rho. \label{eq:ce_streaming_rho}
\end{align}

\uz{Acting with $\mathcal{D}$ on $\abs{\vec V}^2 = (\vec v - \vec u)\cdot(\vec v - \vec u)$ and using the momentum balance in~\eqref{eq:ce_zeroth_balance}, that is, the Euler--Ericksen system derived in \Cref{sec:zero} with the Ericksen stress and \fix{the pressure $p = p_{\mathrm{tr,rot}} = R_s\rho\theta$},
\begin{equation}
	\partial_{t_0}\vec{u} = -(\nabla_{\vec{x}}\vec{u})\vec{u} - \rho^{-1}\nabla_{\vec{x}}p - \rho^{-1}\nabla_{\vec{x}}\!\cdot\!\bigl[K\,(\nabla_{\vec{x}}\widehat{\vec{\nu}}\odot\nabla_{\vec{x}}\widehat{\vec{\nu}})\bigr],
\end{equation}
gives
\begin{align}
	\mathcal{D}\abs{\vec{V}}^2 &= -2\vec{V}\cdot\bigl(\partial_{t_0}\vec{u} + (\nabla_{\vec{x}}\vec{u})\vec{v}\bigr) = -2\,\vec V\cdot(\nabla_{\vec{x}}\vec u)\vec V + \frac{2}{\rho}\vec V\cdot\nabla_{\vec{x}}p + \frac{2}{\rho}\vec V\cdot\nabla_{\vec{x}}\!\cdot\!\bigl[K\,(\nabla_{\vec{x}}\widehat{\vec{\nu}}\odot\nabla_{\vec{x}}\widehat{\vec{\nu}})\bigr]. \label{eq:ce_streaming_V2}
\end{align}
The first term is the standard Newtonian shear contribution, the second the full-pressure-gradient contribution, and the third the Ericksen-stress contribution. The thermal part of the pressure gradient will cancel classically against the density-gradient and temperature-gradient sources, as we will see in \Cref{thm:ce_first_order} below. The interaction part combines with the Ericksen divergence into the vector
\begin{equation}
	\label{eq:ce_W_def}
	\vec{W} \;:=\; \nabla_{\vec x}\!\cdot\!\bigl[K\,(\nabla_{\vec x}\widehat{\vec{\nu}}\odot\nabla_{\vec x}\widehat{\vec{\nu}})\bigr]\fix{,}
\end{equation}
\fix{with $K = K_{\mathrm{kin}}$,} which produces in $\mathcal{D}\log f^{(0)}$ the $\vec{V}$-linear source $-(m/(k_B\theta\rho))\,\vec{V}\cdot\vec{W}$. \fix{The following remark shows that $\vec{W}$ is removed by the solvability condition of the first-order problem.}

\fix{\begin{remark}[$\vec{W}$ and the solvability condition]\label{rmk:xiE_solvability}
The source $-(m/(k_B\theta\rho))\,\vec{V}\cdot\vec{W}$ is, at each point, a linear combination of the \emph{momentum collision invariants} of \eqref{eq:collinv}, which the linearised BGK operator annihilates, so it is not inverted into $f^{(1)}$. The first-order solvability condition \eqref{eq:ce_sol_conditions} requires $f^{(1)}$ to carry no hydrodynamic momentum, $\int f^{(1)}\,m\vec{V}\,d\vec{\Xi} = \vec{0}$, and it is this condition that governs the component: were $\vec{W}$ retained, $f^{(1)}$ would carry the spurious momentum $\int f^{(1)}\,m\vec{V}\,d\vec{\Xi} = \bar\tau\,\vec{W}$. With the constitutive free energy reduced to $\psi = \widetilde{e} - \theta\eta$ of \eqref{eq:psi_decomposition} the interaction pressure is absent, and
\begin{equation}
	\label{eq:ce_W_constraint}
	\vec{W} \;=\; \nabla_{\vec x}\!\cdot\!\bigl[K_{\mathrm{kin}}\,(\nabla_{\vec x}\widehat{\vec{\nu}}\odot\nabla_{\vec x}\widehat{\vec{\nu}})\bigr]
\end{equation}
is purely the divergence of the elastic Ericksen stress carried by the zeroth-order momentum balance. It is therefore removed from $f^{(1)}$ by the solvability condition, contributing to neither the first-order heat flux nor the entropy production, and imposing no constraint on the state.
\end{remark}}

\cla{For the temperature, the rigid-rod caloric law $\widetilde{e} = (5/2)\,R_s\,\theta$ of \eqref{eq:equipartition_energy} gives $\partial_{t_0}\theta = (2/(5R_s))\,\partial_{t_0}\widetilde{e}$.} \fix{The energy balance in~\eqref{eq:ce_zeroth_balance} is the balance for the kinetic internal energy $\widetilde{e}$, so
\begin{align}
	\rho\,\partial_{t_0}\widetilde{e} &= -\rho\,\vec u\cdot\nabla_{\vec x}\widetilde{e} + \mathbb{T}^{(0)}\!:\!\nabla_{\vec{x}}\vec{u} + \frac{1}{mC_{m,\ell}}\mathbb{M}^{(0)}\!:\!\nabla_{\vec{x}}\vec{\mu}\nonumber\\
	&= -\rho\,\vec u\cdot\nabla_{\vec x}\widetilde{e} - p_{\mathrm{tr,rot}}\,\nabla_{\vec{x}}\cdot\vec{u} - K\,(\nabla_{\vec{x}}\widehat{\vec{\nu}}\odot\nabla_{\vec{x}}\widehat{\vec{\nu}})\!:\!\nabla_{\vec{x}}\vec{u} + \frac{K}{C_{m,\ell}}\,(\widehat{\vec{\nu}}\times\nabla_{\vec{x}}\widehat{\vec{\nu}})\!:\!\nabla_{\vec{x}}\vec{\mu},
\end{align}
using $\mathbb{T}^{(0)} = -p_{\mathrm{tr,rot}}\mat{I} - K(\nabla_{\vec{x}}\widehat{\vec{\nu}}\odot\nabla_{\vec{x}}\widehat{\vec{\nu}})$ with $p_{\mathrm{tr,rot}} = R_s\rho\theta$,}
\begin{equation}
	\label{eq:ce_streaming_theta}
	\begin{aligned}
	\mathcal{D}\log\theta = -\frac{2}{5}\Bigl[\nabla_{\vec{x}}\!\cdot\!\vec{u} + (K/p_{\mathrm{tr,rot}})\,(\nabla_{\vec{x}}\widehat{\vec{\nu}}\odot\nabla_{\vec{x}}\widehat{\vec{\nu}})\!:\!\nabla_{\vec{x}}\vec{u}\Bigr]
	\fix{+} \frac{2K}{5\,C_{m,\ell}\,R_s\,\rho\,\theta}\,(\widehat{\vec{\nu}}\times\nabla_{\vec{x}}\widehat{\vec{\nu}})\!:\!\nabla_{\vec{x}}\vec{\mu}\\
	+ \theta^{-1}\vec V\cdot\nabla_{\vec x}\theta.
	\end{aligned}
\end{equation}
The Ericksen correction to the adiabatic compression rate of the temperature is now the full tensorial contraction $(K/p_{\mathrm{tr,rot}})(\nabla_{\vec x}\widehat{\vec{\nu}}\odot\nabla_{\vec x}\widehat{\vec{\nu}})\!:\!\nabla_{\vec x}\vec{u}$, sensitive to the orientation of the compression relative to the director texture. In the limit $\nabla_{\vec x}\widehat{\vec{\nu}}\to 0$ it reduces to the classical $-\frac{2}{5}\nabla\cdot\vec u$ of a polyatomic gas with the diatomic adiabatic exponent $\gamma = 7/5$ \footnote{The work \cite{farrell} discusses experimental observations of the deviation from the diatomic adiabatic exponent and conjectures that the Ericksen correction would resolve it.}.}

Acting with $\mathcal{D}$ on $\abs{\vec\Sigma}^2 = (\vec\varsigma - \vec\sigma)\cdot(\vec\varsigma - \vec\sigma)$ and using the angular-momentum balance in~\eqref{eq:ce_zeroth_balance} together with the rigid-rod identification $\vec\sigma = \cchevrons{\vec\varsigma}$ and the aligned regime $s\to 1$, we obtain
\begin{equation}
	\label{eq:ce_streaming_Sigma2}
	\mathcal{D}\abs{\vec\Sigma}^2 = -2\,\vec\Sigma\cdot\bigl[\partial_{t_0}\vec\sigma + (\nabla_{\vec{x}}\vec\sigma)\vec v\bigr] + 2(\vec{\mathcal V}\cdot\vec\Sigma).
\end{equation}
By the alignment identity $\vec\gamma\cdot\dot{\widehat{\vec{\nu}}} = s(\widehat{\vec{\nu}}\cdot\dot{\widehat{\vec{\nu}}}) = 0$\uz{, cf.~\Cref{rmk:vlasov_torque} and \Cref{corollary:kuramoto},} which holds at $s\to 1$, the Vlasov contribution $\vec{\mathcal V}\cdot\vec\Sigma$ projects to zero against any test function depending only on $\abs{\vec V}^2 + \abs{\vec\Sigma}^2/C_{m,\ell}$.

Lastly, since $\lambda$ depends neither on $\vec V$ nor on $\vec\Sigma$,
\begin{equation}
	\label{eq:ce_streaming_lambda}
	\mathcal{D}\log\lambda = \partial_{t_0}\log\lambda + \vec v\cdot\nabla_{\vec{x}}\log\lambda + (\mat{B}^{-1}\vec\varsigma)\cdot\nabla_{\vec{\nu}}\log\lambda.
\end{equation}
In the aligned regime $s\to 1$ inherited from~\Cref{sec:zero}, the Gibbs distribution~\eqref{eq:gibbs} is concentrated about the director $\widehat{\vec{\nu}}$, and~\eqref{eq:ce_streaming_lambda} reduces to
\begin{equation}
	\mathcal{D} \log(\lambda) \approx \frac{s}{k_B\theta}\left[(\mat{B}^{-1}\vec{\varsigma})\cdot\widehat{\vec{\nu}}
	+ \vec{\nu}\cdot\bigl(\partial_{t_0}\widehat{\vec{\nu}}+(\nabla_{\vec{x}}\widehat{\vec{\nu}})\vec v\bigr)\right],
\end{equation}
up to derivatives of the scalar concentration parameter, which are higher order in the strongly aligned closure. At leading order $\vec{\nu}=\widehat{\vec{\nu}}$, the first term vanishes because $\mat{B}^{-1}\vec{\varsigma}\in T_{\vec{\nu}}\mathbb{S}^2$ and the second vanishes by the unit-length constraint on $\widehat{\vec{\nu}}$.
Hence, in the aligned regime, the orientational contribution to $\mathcal{D}\log f^{(0)}$ vanishes, and we are left only with the contributions in~\eqref{eq:ce_streaming_rho},~\eqref{eq:ce_streaming_V2},~\eqref{eq:ce_streaming_theta} and~\eqref{eq:ce_streaming_Sigma2}.

\subsection{First-order correction}
\label{sec:ce_f1}
Combining the contributions of~\Cref{sec:ce_streaming} via $f^{(1)} = -\bar\tau\,f^{(0)}\,\mathcal{D}\log f^{(0)}$ and regrouping the contributions according to the macroscopic gradients on which they are linear, we arrive at the explicit form of $f^{(1)}$.

\begin{theorem}\label{thm:ce_first_order}
\cla{In the aligned regime $s\to 1$, the first-order Chapman--Enskog correction to the local Maxwellian~\eqref{eq:maxwellian} for the single-relaxation BGK operator~\eqref{eq:BGK} is}
\begin{align}
\label{eq:ce_f1_explicit}
	f^{(1)} &= -\cla{\bar\tau}\,f^{(0)}\,\biggl\{\,\underbrace{\frac{m\,(\vec V\!\otimes\!\vec V)^d}{k_B\theta}\!:\!\mathbb D}_{\text{shear}} \;+\; \underbrace{\Bigl(\frac{m\abs{\vec V}^2}{2k_B\theta} - \frac{5}{2}\Bigr)\frac{\vec V\cdot\nabla_{\vec x}\theta}{\theta}}_{\text{translational heat}}+\;\underbrace{\Bigl(\frac{\abs{\vec\Sigma}^2}{2C_{m,\ell}k_B\theta} - 1\Bigr)\frac{\vec V\cdot\nabla_{\vec x}\theta}{\theta}}_{\text{rotational heat}} \nonumber \\
	&+\;\underbrace{\frac{2}{5}\Bigl(\frac{m\abs{\vec V}^2}{3k_B\theta} - \frac{\abs{\vec\Sigma}^2}{2C_{m,\ell}k_B\theta}\Bigr)\nabla_{\vec x}\!\cdot\!\vec u}_{\text{isotropic bulk}}\nonumber\\
	&-\;\underbrace{\uz{\frac{2}{5}(K/p_{\mathrm{tr,rot}})\,\bigl[(\nabla_{\vec x}\widehat{\vec{\nu}}\odot\nabla_{\vec x}\widehat{\vec{\nu}})\!:\!\nabla_{\vec x}\vec u\bigr]\Bigl(\frac{m\abs{\vec V}^2}{2k_B\theta} + \frac{\abs{\vec\Sigma}^2}{2C_{m,\ell}k_B\theta} - \frac{5}{2}\Bigr)}}_{\text{anisotropic bulk}}\nonumber\\
	&\;\fix{+}\;\underbrace{\frac{2\,\cla{K}}{5\,\uz{C_{m,\ell}}\,R_s\,\rho\,\theta}\,(\widehat{\vec{\nu}}\times\nabla_{\vec x}\widehat{\vec{\nu}})\!:\!\nabla_{\vec x}\vec\mu\,\Bigl(\frac{m\abs{\vec V}^2}{2k_B\theta} + \frac{\abs{\vec\Sigma}^2}{2C_{m,\ell}k_B\theta} - \frac{5}{2}\Bigr)}_{\text{couple-stress reactive}}\nonumber\\
	&\;+\;\underbrace{\frac{\Sigma_a\,(\nabla_{\vec x}\vec\sigma)_{ab}\,V_b}{C_{m,\ell}\,k_B\theta}}_{\text{angular advection}}\,\biggr\},
\end{align}
where $(\vec V\!\otimes\!\vec V)^d$ denotes the deviatoric part of the velocity dyad, $\mathbb D = \frac{1}{2}(\nabla_{\vec x}\vec u + (\nabla_{\vec x}\vec u)^T)$ is the symmetric part of the velocity gradient, \fix{$K = K_{\mathrm{kin}} = C_{m,\ell}\,p_{\mathrm{tr,rot}}/m$ is the one-constant Frank modulus from \Cref{rmk:Kkin_emerges}}\uz{, and $p_{\mathrm{tr,rot}} = R_s\rho\theta$ is the translational--rotational (thermal) pressure. The $\vec{V}$-linear source \fix{is removed by the solvability condition, see \eqref{eq:ce_W_constraint} and \Cref{rmk:xiE_solvability}}, so $f^{(1)}$ is consistent with \eqref{eq:ce_sol_conditions}}.
\end{theorem}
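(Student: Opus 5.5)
The plan is to start from \eqref{eq:ce_first_order_linearised}, $f^{(1)} = -\bar\tau\,f^{(0)}\,\mathcal{D}\log f^{(0)}$. We expand $\mathcal{D}\log f^{(0)}$ term by term using the logarithm \eqref{eq:ce_log_maxwellian}:
\begin{equation*}
\mathcal{D}\log f^{(0)} = \mathcal{D}\log\rho - \tfrac52\mathcal{D}\log\theta - \frac{m}{2k_B\theta}\mathcal{D}\abs{\vec V}^2 - \frac{1}{2C_{m,\ell}k_B\theta}\mathcal{D}\abs{\vec\Sigma}^2 + \Bigl(\frac{m\abs{\vec V}^2}{2k_B\theta}+\frac{\abs{\vec\Sigma}^2}{2C_{m,\ell}k_B\theta}\Bigr)\mathcal{D}\log\theta + \mathcal{D}\log\lambda .
\end{equation*}
The last term is discarded by the aligned-regime reduction of \eqref{eq:ce_streaming_lambda}. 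Into the remaining terms we insert the streaming contributions \eqref{eq:ce_streaming_rho}, \eqref{eq:ce_streaming_V2}, \eqref{eq:ce_streaming_theta} and \eqref{eq:ce_streaming_Sigma2}, all derived from the zeroth-order Euler--Ericksen system \eqref{eq:ce_zeroth_balance}. The crucial bookkeeping device is that $\mathcal{D}\log\theta$ always enters through the combined coefficient $\bigl(\tfrac{m\abs{\vec V}^2}{2k_B\theta}+\tfrac{\abs{\vec\Sigma}^2}{2C_{m,\ell}k_B\theta}-\tfrac52\bigr)$. This factor immediately produces the anisotropic-bulk and couple-stress-reactive brackets of \eqref{eq:ce_f1_explicit}, with exactly the prefactors appearing in \eqref{eq:ce_streaming_theta}.

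Next we group the terms by the gradients on which they are linear.

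\textbf{Velocity gradient.} The term $\tfrac{m}{k_B\theta}\vec V\cdot(\nabla\vec u)\vec V$ coming from \eqref{eq:ce_streaming_V2} is split into deviatoric and trace parts. This gives the shear term $\tfrac{m}{k_B\theta}(\vec V\otimes\vec V)^d:\mathbb D$ plus $\tfrac{m\abs{\vec V}^2}{3k_B\theta}\nabla\cdot\vec u$. We then combine this with $-\nabla\cdot\vec u$ from $\mathcal{D}\log\rho$ and with the isotropic part $-\tfrac25\nabla\cdot\vec u$ of $\mathcal{D}\log\theta$, multiplied by the combined coefficient. Using $1-\tfrac25\cdot\tfrac52=0$ and $\tfrac13-\tfrac25\cdot\tfrac12$ for the translational piece, the constants cancel. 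What remains is the isotropic bulk term $\tfrac25\bigl(\tfrac{m\abs{\vec V}^2}{3k_B\theta}-\tfrac{\abs{\vec\Sigma}^2}{2C_{m,\ell}k_B\theta}\bigr)\nabla\cdot\vec u$. Verifying this coefficient identity is the one genuinely arithmetic step, and it should be checked carefully.

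\textbf{$\vec V$-linear terms.} We collect $\rho^{-1}\vec V\cdot\nabla\rho$, $-\tfrac{m}{k_B\theta\rho}\vec V\cdot\nabla p$, and the temperature-gradient contributions. With $p=R_s\rho\theta$ and $R_s=k_B/m$, the density gradients cancel. The temperature gradients reorganise into the translational heat factor $(\tfrac{m\abs{\vec V}^2}{2k_B\theta}-\tfrac52)$ and the rotational heat factor $(\tfrac{\abs{\vec\Sigma}^2}{2C_{m,\ell}k_B\theta}-1)$, the split being dictated by the caloric weights $\tfrac32$ and $1$ in \eqref{eq:equipartition_easy}. The Ericksen contribution $-\tfrac{m}{k_B\theta\rho}\vec V\cdot\vec W$ is dropped by the solvability argument of \Cref{rmk:xiE_solvability}.

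\textbf{Rotational streaming.} For \eqref{eq:ce_streaming_Sigma2}, the Vlasov piece $\vec{\mathcal V}\cdot\vec\Sigma$ is removed by the alignment identity at $s\to1$. The convective piece $(\nabla\vec\sigma)\vec v$ is split into $(\nabla\vec\sigma)\vec u$ and $(\nabla\vec\sigma)\vec V$. The $(\nabla\vec\sigma)\vec u$ part combines with $\partial_{t_0}\vec\sigma$ into a material derivative, which the angular-momentum balance in \eqref{eq:ce_zeroth_balance} with $\mathbb M^{(0)}$ would replace by a divergence. The $(\nabla\vec\sigma)\vec V$ part gives the angular-advection term $\Sigma_a(\nabla\vec\sigma)_{ab}V_b/(C_{m,\ell}k_B\theta)$.

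The main obstacle is this rotational step. It requires showing that the $\vec\Sigma$-linear remainder $\vec\Sigma\cdot\dot{\vec\sigma}$ contributes nothing. I would argue, analogously to $\vec W$, that it is linear in the peculiar angular-momentum collision invariant under the micro--macro identification ($\vec\Sigma$ lies in $T_{\widehat{\vec\nu}}\mathbb S^2$, so $\vec{\Sigma}$ corresponds to $\widehat{\vec\nu}\times\vec\Sigma$). It would therefore be removed by the solvability conditions \eqref{eq:ce_sol_conditions}, in the same way as in \Cref{rmk:xiE_solvability}. A final check is that each surviving bracket has zero moments against $1$, $\vec V$, the total peculiar energy and $\vec\nu\times\vec\varsigma$, and that it integrates to zero in $(\vec V,\vec\Sigma)$ at fixed $\vec\nu$. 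These are Gaussian moment computations. They confirm consistency with \eqref{eq:ce_sol_conditions} and fix the constants $\tfrac52$ and $1$ uniquely.
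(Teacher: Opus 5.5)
Your computation of the bracket is the paper's: the same splitting $\mathcal{D}\log f^{(0)} = \mathcal{D}\log\rho - A\,\mathcal{D}\log\theta - \tfrac{m}{2k_B\theta}\mathcal{D}\abs{\vec V}^2 - \tfrac{1}{2C_{m,\ell}k_B\theta}\mathcal{D}\abs{\vec\Sigma}^2 + \mathcal{D}\log\lambda$, the same density-gradient cancellation via $p=R_s\rho\theta$, and the same three-source bulk arithmetic. You also make the same aligned-regime removal of $\mathcal{D}\log\lambda$ and $\vec{\mathcal V}\cdot\vec\Sigma$, and the same removal of $\vec V\cdot\vec W$.

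The only step you justify differently is the rotational remainder $\vec\Sigma\cdot\dot{\vec\sigma}$. The paper inserts the zeroth-order angular-momentum balance, obtaining $-\tfrac{2mK}{\rho}\vec\Sigma\cdot(\mat{I}-\widehat{\vec{\nu}}\otimes\widehat{\vec{\nu}})\Delta\widehat{\vec{\nu}} - 2\vec\Sigma\cdot(\vec\mu\times\dot{\widehat{\vec{\nu}}})$, and discards both pieces as Burnett order. You discard it because, under the micro--macro identification, $\vec\Sigma\cdot\vec a = (\widehat{\vec{\nu}}\times\vec\Sigma)\cdot(\widehat{\vec{\nu}}\times\vec a)$ lies in the span of the constant and the angular-momentum invariant. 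That is a legitimate alternative, and it makes the angular-momentum condition structural rather than a by-product of truncation.

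The gap is your closing check, which fails. The anisotropic-bulk and couple-stress-reactive brackets both carry the factor $B+C-\tfrac52$ (notation of \Cref{lemma:maxwellian_moments}). Since $\cchevrons{B+C-\tfrac52}_{\sim f^{(0)}}=0$, one has $\cchevrons{(B+C-\tfrac52)(B+C)}_{\sim f^{(0)}} = \cchevrons{(B+C-\tfrac52)^2}_{\sim f^{(0)}} = \tfrac52\neq 0$, while every other bracket is orthogonal to $B+C$. Carrying this through, the $f^{(1)}$ of the statement satisfies
\[
\int \tfrac12\Bigl(m\abs{\vec V}^2 + \tfrac{1}{C_{m,\ell}}\abs{\vec\Sigma}^2\Bigr) f^{(1)}\,d\vec\Xi \;=\; \bar\tau\,K\,\mathcal{B},
\]
with $\mathcal{B}$ as in \eqref{eq:hybrid_B}. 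This is the exact energy analogue of the spurious momentum $\bar\tau\vec W$ of \Cref{rmk:xiE_solvability}. It is produced by the Ericksen and couple-stress work in the zeroth-order energy balance, which $f^{(0)}$ does not generate.

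Moreover, $B+C-\tfrac52$ is itself an affine combination of the mass and energy invariants. The very principle you use to delete $\vec\Sigma\cdot\dot{\vec\sigma}$ (and the paper uses to delete $\vec V\cdot\vec W$) would therefore delete these two brackets entirely. You cannot apply that principle, keep the brackets, and also claim every bracket is orthogonal to the energy invariant. The paper's proof escapes this only because it invokes solvability for $\vec W$ alone and never tests the energy moment.

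What your argument genuinely establishes is the displayed formula together with the mass, momentum, angular-momentum and fixed-$\vec\nu$ conditions. The energy condition in \eqref{eq:ce_sol_conditions} holds only where $\mathcal{B}=0$, or after projecting those brackets out, which changes the formula.
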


\begin{proof}
We start from~\eqref{eq:ce_first_order_linearised}, which expresses $f^{(1)}$ as the streaming derivative of the local Maxwellian,
\begin{equation}
	\label{eq:ce_f1_proof_start}
	f^{(1)} = -\bar\tau\,\mathcal{D}f^{(0)} = -\bar\tau\,f^{(0)}\,\mathcal{D}\log f^{(0)}.
\end{equation}
By~\eqref{eq:ce_log_maxwellian} and the linearity of $\mathcal{D}$,
\begin{align}
	\label{eq:ce_Dlog_decomposition}
	\mathcal{D}\log f^{(0)} &= \mathcal{D}\log\rho - A\,\mathcal{D}\log\theta - \frac{m}{2k_B\theta}\,\mathcal{D}\abs{\vec V}^2 - \frac{1}{2C_{m,\ell}k_B\theta}\,\mathcal{D}\abs{\vec\Sigma}^2 + \mathcal{D}\log\lambda,
	\\
	A &= \frac{5}{2} - \frac{m}{2k_B\theta}\abs{\vec{V}}^2 - \frac{1}{2C_{m,\ell}k_B\theta}\abs{\vec{\Sigma}}^2.
\end{align}
where we used $\mathcal{D}(k_B\theta)^{-1} = (k_B\theta)^{-1}\mathcal{D}\log\theta\cdot(-1)$ and absorbed both temperature contributions inside~\eqref{eq:ce_streaming_theta} below. The orientational term $\mathcal{D}\log\lambda$ vanishes in the regime $s\to 1$ as discussed above. The remaining four terms are given by~\eqref{eq:ce_streaming_rho},~\eqref{eq:ce_streaming_V2},~\eqref{eq:ce_streaming_theta} and~\eqref{eq:ce_streaming_Sigma2}. We now substitute them into~\eqref{eq:ce_Dlog_decomposition} and group by the macroscopic gradient on which each contribution depends.

By~\eqref{eq:ce_streaming_rho},
\begin{equation*}
	\mathcal{D}\log\rho = -\nabla_{\vec x}\!\cdot\!\vec u + \rho^{-1}\vec V\cdot\nabla_{\vec x}\rho.
\end{equation*}
The first term combines with the trace part of $\mathcal{D}\abs{\vec V}^2$ and with $-\frac{5}{2}\mathcal{D}\log\theta$ to give the bulk contribution analysed below. The second term contains the density gradient. \fix{Recalling the thermal equation of state $p = p_{\mathrm{tr,rot}} = R_s\rho\theta$, we have $\rho^{-1}\nabla_{\vec x}p = R_s\theta\,\rho^{-1}\nabla_{\vec x}\rho + R_s\nabla_{\vec x}\theta$, so that the pressure-gradient term in~\eqref{eq:ce_streaming_V2} contributes}
\begin{equation*}
	-\frac{m}{2k_B\theta}\cdot\frac{2}{\rho}\vec V\cdot\nabla_{\vec x}p = -\frac{1}{\theta}\vec V\cdot\nabla_{\vec x}\theta - \rho^{-1}\vec V\cdot\nabla_{\vec x}\rho\fix{,}
\end{equation*}
where we used $m R_s = k_B$. The contribution $-\rho^{-1}\vec V\cdot\nabla_{\vec x}\rho$ exactly cancels the $+\rho^{-1}\vec V\cdot\nabla_{\vec x}\rho$ from $\mathcal{D}\log\rho$, leaving the temperature-gradient piece $-\theta^{-1}\vec V\cdot\nabla_{\vec x}\theta$. The remaining $\vec{V}$-linear source, $-(m/(k_B\theta\rho))\,\vec{V}\cdot\vec{W}$, is now the Ericksen divergence \eqref{eq:ce_W_def} alone, \fix{which is removed by the solvability condition, see \Cref{rmk:xiE_solvability}}.

The advective term $-2\,\vec V\cdot(\nabla_{\vec x}\vec u)\vec V$ in~\eqref{eq:ce_streaming_V2}, multiplied by the prefactor $-\tfrac{m}{2k_B\theta}$ in~\eqref{eq:ce_Dlog_decomposition}, contributes
\begin{equation*}
	\frac{m}{k_B\theta}\,V_iV_j\,\partial_{x_i}u_j = \frac{m}{k_B\theta}\,(\vec V\otimes\vec V):\nabla_{\vec x}\vec u.
\end{equation*}
Splitting $\nabla_{\vec x}\vec u$ into its symmetric part $\mathbb D$ and its antisymmetric part, which vanishes against the symmetric tensor $\vec V\otimes\vec V$, and decomposing $\vec V\otimes\vec V$ into its deviatoric and trace parts, the trace part $\tfrac{1}{3}\abs{\vec V}^2\mat{I}$ couples to $\nabla_{\vec x}\!\cdot\!\vec u$ and is reabsorbed into the bulk term, while the deviatoric piece yields the shear contribution
\begin{equation}
	\frac{m\,(\vec V\otimes\vec V)^d}{k_B\theta}:\mathbb D,
\end{equation}
which is precisely the first term in~\eqref{eq:ce_f1_explicit}.

Combining the temperature-gradient piece $-\theta^{-1}\vec V\cdot\nabla_{\vec x}\theta$ obtained above with the trace part of the shear contribution gives
\begin{equation}
	-\frac{1}{\theta}\vec V\cdot\nabla_{\vec x}\theta + \frac{m\abs{\vec V}^2}{3k_B\theta}\,(\nabla_{\vec x}\!\cdot\!\vec u);
\end{equation}
the second piece is reabsorbed into the bulk contribution below.

Combining the last term in \eqref{eq:ce_streaming_theta} with the $-\theta^{-1}\vec{V}\cdot \nabla_{\vec{x}}\theta$ term obtained above gives
\begin{equation}
	-\left[ \frac{5}{2} - \frac{m}{2k_B\theta}\abs{\vec{V}}^2 + 1 - \frac{1}{2C_{m,\ell}k_B\theta}\abs{\vec{\Sigma}}^2 \right] \vec{V}\cdot \frac{\nabla_{\vec{x}} \theta}{\theta},
\end{equation}
which yields the translational and rotational heat contributions.

It remains to evaluate the rotational-gradient contribution:
	\begin{align}
		\mathcal{D}\,\abs{\vec{\Sigma}}^2 = 2\vec{\Sigma}\cdot \bigl(\partial_t \vec{\Sigma} + (\nabla_{\vec{x}} \vec{\Sigma}) \vec{v}\bigr) = \fix{-\,\frac{2mK}{\rho}\,\vec\Sigma\cdot(\mat{I}-\widehat{\vec{\nu}}\otimes\widehat{\vec{\nu}})\Delta\widehat{\vec{\nu}} - 2\,\vec\Sigma\cdot(\vec{\mu}\times\dot{\widehat{\vec{\nu}}})} - 2\,\vec{\Sigma}\cdot (\nabla_{\vec{x}} \vec{\sigma})\, \vec{V},
	\end{align}
where we have used \eqref{eq:ce_zeroth_balance} \cla{with the couple stress $\mathbb{M}^{(0)} = \fix{mK}(\widehat{\vec{\nu}}\times\nabla_{\vec x}\widehat{\vec{\nu}})$} \fix{and the micro--macro identification $\vec{\sigma} = \vec{\mu}\times\widehat{\vec{\nu}}$}.

\fix{The first term above is a second director gradient, $\Delta\widehat{\vec{\nu}}$, and the second is a product of rates, $\vec{\mu}\times\dot{\widehat{\vec{\nu}}}$. Both are of Burnett order, i.e.\ second order in the Knudsen number, and are dropped at the order of the closure.}
The remaining term, i.e.\ $-2\,\vec\Sigma\cdot(\nabla_{\vec x}\vec\sigma)\vec V = -2\,\Sigma_a(\nabla_{\vec x}\vec\sigma)_{ab}V_b$ is bilinear in $\vec V$ and $\vec\Sigma$ and does not assemble into a polyatomic bulk-polynomial mode. Multiplied by the prefactor $-\tfrac{1}{2C_{m,\ell}k_B\theta}$ it contributes to $\mathcal{D}\log f^{(0)}$ the term
\begin{equation}
	\frac{\Sigma_a\,(\nabla_{\vec x}\vec\sigma)_{ab}\,V_b}{C_{m,\ell}\,k_B\theta},
\end{equation}
which is the sixth term in~\eqref{eq:ce_f1_explicit}. 

We now collect all contributions to $\mathcal{D}\log f^{(0)}$ that are proportional to $\nabla_{\vec x}\!\cdot\!\vec u$. Three sources contribute:
\begin{enumerate}
	\item From $\mathcal{D}\log\rho$ via the continuity equation, $-\nabla_{\vec x}\!\cdot\!\vec u$.
	\item From the trace part of $\mathcal{D}\abs{\vec V}^2$, the term $\tfrac{m\abs{\vec V}^2}{3k_B\theta}\,\nabla_{\vec x}\!\cdot\!\vec u$.
	\item From the first term of \eqref{eq:ce_streaming_theta} multiplied by $A$.
\end{enumerate}
This yields exactly the bulk term in \eqref{eq:ce_f1_explicit}.

The second term in \eqref{eq:ce_streaming_theta}, namely
\begin{equation*}
	\frac{2\,\cla{K}}{5\,\uz{C_{m,\ell}}\,R_s\,\rho\,\theta}\,(\vec{\nu}\times\nabla_{\vec x}\widehat{\vec{\nu}})\!:\!\nabla_{\vec x}\vec\mu,
\end{equation*}
multiplied by $A$ becomes the fifth (couple-stress reactive) summand in~\eqref{eq:ce_f1_explicit}\uz{, and the tensorial Ericksen contraction in the first bracket of \eqref{eq:ce_streaming_theta}, multiplied by $A$, becomes the second (anisotropic) bulk summand}.

Substituting all \uz{six} contributions into~\eqref{eq:ce_f1_proof_start} and pulling out the common factor $-\bar\tau\,f^{(0)}$ yields~\eqref{eq:ce_f1_explicit}.
\end{proof}

\subsection{First-order entropy flux and entropy production}\label{sec:ce_entropy}

Throughout this subsection we evaluate the kinetic definitions of the entropy flux vector and the entropy production rate given in \eqref{defn:entropy_flux_and_production}, namely
\begin{equation}
    \label{eq:ce_entropy_definitions}
    \vec{\Phi} = -\frac{\rho k_B}{m}\cchevrons{\vec{V}\log(f)}, \qquad \xi = - k_B\int \mathcal{Q}(f,f)\log(f)\,d\vec{\Xi},
\end{equation}
on the Chapman--Enskog ansatz $f = f^{(0)} + \epsilon\,f^{(1)} + O(\epsilon^2)$ of~\eqref{eq:ce_expansion}, with $f^{(1)}$ given explicitly by~\eqref{eq:ce_f1_explicit}. \cla{The single-relaxation BGK collision operator~\eqref{eq:BGK},
\begin{equation}
    \label{eq:ce_BGK_single}
    \mathcal{Q}_{\tau}[f] = -\frac{1}{\tau}\bigl(f - f^{(0)}\bigr),
\end{equation}
is the point of entry exploited below. The equality case of \Cref{lemma:BGK_entropy} ensures $\xi_\tau^{(0)} = 0$, as already discussed in~\eqref{eq:zeroth_entropy_production_zero}.}

We shall repeatedly use the following Gaussian moments against the local Maxwellian $f^{(0)}$.

\begin{lemma}[Maxwellian moments]\label{lemma:maxwellian_moments}
	With $\cchevrons{\cdot}_{\sim f^{(0)}}$ denoting the average with respect to $f^{(0)}$, and writing $B \coloneqq m\abs{\vec V}^2/(2k_B\theta)$, $C \coloneqq \abs{\vec\Sigma}^2/(2C_{m,\ell}k_B\theta)$, one has
\begin{subequations}
    \label{eq:maxwellian_moments}
    \begin{align}
	    &\cchevrons{V_i V_j}_{\sim f^{(0)}} = \frac{k_B\theta}{m}\,\delta_{ij}, \qquad \cchevrons{\Sigma_a\Sigma_c}_0 = C_{m,\ell}k_B\theta\,(\delta_{ac} - \nu_a\nu_c),\\
	    &\cchevrons{V_iV_jV_kV_l}_{\sim f^{(0)}} = \Bigl(\frac{k_B\theta}{m}\Bigr)^2\,\bigl(\delta_{ij}\delta_{kl}+\delta_{ik}\delta_{jl}+\delta_{il}\delta_{jk}\bigr), \\
	    &\cchevrons{1}_{\sim f^{(0)}} = 1,\quad \cchevrons{B}_{\sim f^{(0)}} = \tfrac{3}{2}, \quad \cchevrons{B^2}_{\sim f^{(0)}} = \tfrac{15}{4}, \quad \cchevrons{B^3}_{\sim f^{(0)}} = \tfrac{105}{8}, \\
	    &\cchevrons{C}_{\sim f^{(0)}} = 1, \; \cchevrons{C^2}_{\sim f^{(0)}} = 2,\\
	    &\cchevrons{(B-\tfrac{5}{2})^2}_{\sim f^{(0)}} = \tfrac{5}{2},\quad \cchevrons{(C-1)^2}_{\sim f^{(0)}} = 1,\quad \cchevrons{(B-\tfrac{5}{2})B}_{\sim f^{(0)}} = 0, \\
	    &\cchevrons{(\tfrac{2B}{3}-C)^2}_{\sim f^{(0)}} = \tfrac{5}{3},\quad \cchevrons{(B+C-\tfrac{5}{2})^2}_{\sim f^{(0)}} = \tfrac{5}{2},\quad \cchevrons{(\tfrac{2B}{3}-C)(B+C-\tfrac{5}{2})}_{\sim f^{(0)}} = 0,
    \end{align}
\end{subequations}
	together with the cross-independence $\cchevrons{g(\abs{\vec V}^2)\,h(\abs{\vec\Sigma}^2)}_{\sim f^{(0)}} = \cchevrons{g}_{\sim f^{(0)}}\cchevrons{h}_{\sim f^{(0)}}$ and all odd moments of $\vec V$ and $\vec\Sigma$ vanish.
\end{lemma}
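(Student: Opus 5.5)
The plan is to reduce every identity to the one-dimensional Gaussian integral by using the product structure of the Maxwellian \eqref{eq:maxwellian}. Once the factor $\rho\lambda(\vec\nu)/m$ is normalised away, $f^{(0)}$ factorises at fixed $(\vec x,\vec\nu,t)$ into two independent centred Gaussians. One is an isotropic Gaussian in $\vec V\in\mathbb{R}^3$ with covariance $(k_B\theta/m)\mat{I}$. The other is an isotropic Gaussian in $\vec\Sigma\in T_{\vec\nu}\mathbb{S}^2$ with covariance $C_{m,\ell}k_B\theta$ times the identity on the tangent plane. Viewed in ambient coordinates, this identity is the projector $\mat{I}-\vec\nu\otimes\vec\nu$.

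The factorisation immediately gives three facts. First, it gives the cross-independence $\cchevrons{g(\abs{\vec V}^2)h(\abs{\vec\Sigma}^2)}_{\sim f^{(0)}}=\cchevrons{g}\cchevrons{h}$. Second, the $\vec\nu$-average of a function independent of $\vec\nu$ just reproduces $\int\lambda\,d\vec\nu=1$. Third, all odd moments vanish because each Gaussian is invariant under $\vec V\mapsto-\vec V$ and $\vec\Sigma\mapsto-\vec\Sigma$. The second moments $\cchevrons{V_iV_j}$ and $\cchevrons{\Sigma_a\Sigma_c}$ are then the covariances just stated. The fourth moment $\cchevrons{V_iV_jV_kV_l}$ follows from Wick/Isserlis pairing, which can be checked by differentiating the moment generating function $\exp(k_B\theta\abs{\vec t}^2/(2m))$ four times.

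For the scalar moments I will use the fact that $B=m\abs{\vec V}^2/(2k_B\theta)$ is one half of a $\chi^2$ variable with three degrees of freedom, that is, $B\sim\mathrm{Gamma}(3/2,1)$. Likewise $C=\abs{\vec\Sigma}^2/(2C_{m,\ell}k_B\theta)$ is one half of a $\chi^2$ variable with two degrees of freedom, because $\vec\Sigma$ lives in the two-dimensional tangent plane. Hence $C\sim\mathrm{Exp}(1)$. The moments then come from $\cchevrons{B^n}=\Gamma(3/2+n)/\Gamma(3/2)$, which gives $3/2$, $15/4$ and $105/8$, and from $\cchevrons{C^n}=n!$, which gives $1$ and $2$. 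To confirm these I can pass to spherical and polar coordinates in $\vec V$ and $\vec\Sigma$ and substitute $r^2\mapsto B$ or $C$.

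The remaining identities are polynomial combinations of these moments, expanded using the independence of $B$ and $C$. For example, $\cchevrons{(B-\tfrac52)^2}=\tfrac{15}{4}-\tfrac{15}{2}+\tfrac{25}{4}=\tfrac52$ and $\cchevrons{(B-\tfrac52)B}=\tfrac{15}{4}-\tfrac{15}{4}=0$. Since $B+C$ has mean $5/2$, its variance is $\mathrm{Var}\,B+\mathrm{Var}\,C=\tfrac32+1=\tfrac52$. For $\cchevrons{(\tfrac{2B}{3}-C)^2}$ one obtains $\tfrac53-2+2=\tfrac53$. The cross term vanishes because $\cchevrons{\tfrac{2B}{3}-C}=0$ and $\cchevrons{(\tfrac{2B}{3}-C)(B+C)}=\tfrac52-\tfrac12-2=0$.

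The only step where I expect care is needed is the rotational factor. It must be checked that, under the micro--macro identification and the rigid-rod constraint $\vec\varsigma\cdot\vec\nu=0$, the normalisation $(2\pi C_{m,\ell}k_B\theta)^{-1}$ in \eqref{eq:maxwellian} is exactly that of a two-dimensional Gaussian centred at $\vec\sigma$. This is what yields the projector in $\cchevrons{\Sigma_a\Sigma_c}$ and the $\mathrm{Exp}(1)$ law for $C$, rather than a three-dimensional law. Everything else is routine Gaussian bookkeeping.
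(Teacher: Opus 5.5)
Your proposal is correct and takes the same route as the paper. The paper simply calls these identities standard Gaussian integrals and notes that $\vec\Sigma$ lives in the two-dimensional plane orthogonal to $\vec\nu$, which gives the projector $\delta_{ac}-\nu_a\nu_c$. Your identification of $B$ as $\mathrm{Gamma}(3/2,1)$ and $C$ as $\mathrm{Exp}(1)$, the Isserlis fourth moment, and the arithmetic for all the combined moments check out.

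One point is worth keeping explicit, and you already flag it: the identity for $\cchevrons{\Sigma_a\Sigma_c}$ holds pointwise in $\vec\nu$, or under the micro--macro identification. Averaging it against $\lambda$ would instead give $\delta_{ac}-\int_{\mathbb{S}^2}\nu_a\nu_c\,\lambda\,d\vec\nu$ in place of $\delta_{ac}-\nu_a\nu_c$.
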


The moments in \eqref{eq:maxwellian_moments} are standard Gaussian integrals. Note that for $\vec\Sigma$ we use that the rigid-rod angular momentum lives in the two-dimensional plane orthogonal to $\vec{\nu}$, whence the projector $\delta_{ac}-\nu_a\nu_c$.

\begin{theorem}[First-order entropy flux]\label{thm:ce_entropy_flux}
\cla{In the aligned regime $s\to 1$,} the first-order Chapman--Enskog entropy flux vector obtained by substituting the expansion~\eqref{eq:ce_expansion} into~\eqref{eq:ce_entropy_definitions} is
\begin{equation}
    \label{eq:ce_entropy_flux}
    \vec\Phi \;=\; \frac{\vec Q^{(1)}}{\theta},
\end{equation}
where $\vec Q^{(1)}$ is the first-order Chapman--Enskog heat flux, obtained by evaluating the kinetic definition of $\vec Q$ given in~\eqref{eq:balance_laws} on $f^{(1)}$, namely
\begin{equation}
    \label{eq:ce_heat_flux}
    \vec Q^{(1)} \;=\; m\!\int \vec V\,\tfrac{1}{2}\abs{\vec V}^2\,f^{(1)}\,\dd\vec\Xi \;+\; \int \vec V\,\tfrac{1}{2C_{m,\ell}}\abs{\vec\Sigma}^2\,f^{(1)}\,\dd\vec\Xi.
\end{equation}
\end{theorem}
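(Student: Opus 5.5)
The plan is to expand $\log f$ to first order and then use the affine structure of $\log f^{(0)}$ in the peculiar variables. Writing $f = f^{(0)} + \epsilon f^{(1)} + O(\epsilon^2)$ gives $\log f = \log f^{(0)} + \epsilon\, f^{(1)}/f^{(0)} + O(\epsilon^2)$. Hence
\begin{equation*}
\vec\Phi = -k_B\!\int \vec V\,\log f^{(0)}\,f^{(0)}\,\dd\vec\Xi \;-\;\epsilon\,k_B\!\int \vec V\,\log f^{(0)}\,f^{(1)}\,\dd\vec\Xi\;-\;\epsilon\,k_B\!\int \vec V\,f^{(1)}\,\dd\vec\Xi + O(\epsilon^2),
\end{equation*}
using $\rho\cchevrons{\cdot}/m = \int\cdot\,f\,\dd\vec\Xi$. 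The first integral is $\vec\Phi^{(0)}=\vec 0$ by parity, as already noted in \Cref{cor:zeroth_entropy_production}. The third integral vanishes by the momentum solvability condition \eqref{eq:ce_sol_conditions}. This applies because the $\vec V$-linear source $\vec W$ has been removed from $f^{(1)}$ (\Cref{rmk:xiE_solvability}), so $\int m\vec V f^{(1)}\,\dd\vec\Xi = \vec 0$.

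For the middle integral, I will insert \eqref{eq:ce_log_maxwellian}:
\begin{equation*}
\log f^{(0)} = \mathrm{const}(\vec x,t) - \frac{m\abs{\vec V}^2}{2k_B\theta} - \frac{\abs{\vec\Sigma}^2}{2C_{m,\ell}k_B\theta} + \log\lambda(\vec\nu).
\end{equation*}
\begin{itemize}
\item The constant multiplies $\int\vec V f^{(1)}\,\dd\vec\Xi = \vec 0$, again by momentum solvability.
\item The two quadratic pieces give exactly $\frac{1}{\theta}\bigl(m\int \vec V\tfrac12\abs{\vec V}^2 f^{(1)} + \int\vec V\tfrac{1}{2C_{m,\ell}}\abs{\vec\Sigma}^2 f^{(1)}\bigr)$, which is $\vec Q^{(1)}/\theta$ as defined in \eqref{eq:ce_heat_flux}. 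The factor $k_B$ cancels against $1/k_B$, and the sign flips to $+$.
\item The orientational piece is $\int \vec V\log\lambda\, f^{(1)}\,\dd\vec\Xi$. Under the micro--macro identification in the aligned regime, $\lambda$ factors out at fixed $\vec\nu$. It then multiplies $\int \vec V f^{(1)}\,\dd\vec V\,\dd\vec\Sigma$.
\end{itemize}

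The main obstacle is this last integral. The stated solvability condition only gives $\int f^{(1)}\,\dd\vec V\,\dd\vec\Sigma=0$ pointwise in $\vec\nu$, not the first moment in $\vec V$. I would handle it directly from the explicit form \eqref{eq:ce_f1_explicit} and \Cref{lemma:maxwellian_moments}, by checking which $\vec V$-odd terms survive against $\vec V$.
\begin{itemize}
\item The shear, bulk, anisotropic-bulk and couple-stress terms are even in $\vec V$, so they vanish against $\vec V$ by parity.
\item The angular-advection term is odd in $\vec\Sigma$ (with $\vec\Sigma$ Gaussian at fixed $\vec\nu$), so it vanishes.
\item The translational heat term gives $\cchevrons{V_iV_j(B-\tfrac52)}_{\sim f^{(0)}} = \tfrac{k_B\theta}{m}\delta_{ij}\,\tfrac{2}{3}\cchevrons{(B-\tfrac52)B} = 0$ by the listed identity. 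This is the standard BGK orthogonality.
\item The rotational heat term gives $\cchevrons{V_iV_j}\cchevrons{C-1} = 0$ by cross-independence and $\cchevrons{C}=1$.
\end{itemize}
Since the Maxwellian factorises in $\vec\nu$ as $\lambda(\vec\nu)$ times a Gaussian, these cancellations hold pointwise in $\vec\nu$. So $\int\vec V f^{(1)}\,\dd\vec V\,\dd\vec\Sigma = \vec 0$ at each $\vec\nu$, and the $\log\lambda$ contribution drops.

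The same computation independently confirms the vanishing of the constant term, which provides a consistency check on solvability. Collecting the surviving terms yields $\vec\Phi = \vec Q^{(1)}/\theta$, with the $\epsilon$ absorbed into the first-order flux.
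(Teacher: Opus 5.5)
Your proposal is correct and follows the paper's proof essentially step for step. You expand $\log f$, kill $\vec\Phi^{(0)}$ by parity and the constant part of $\log f^{(0)}$ by momentum solvability, and identify the two quadratic pieces with $\vec Q^{(1)}/\theta$. The only difference is the $\log\lambda$ term: the paper just invokes solvability in the aligned regime, whereas you check $\int \vec V f^{(1)}\,\dd\vec V\,\dd\vec\Sigma=\vec 0$ pointwise in $\vec\nu$ directly from \eqref{eq:ce_f1_explicit}, using parity, $\cchevrons{(B-\tfrac52)B}_{\sim f^{(0)}}=0$ and $\cchevrons{C}_{\sim f^{(0)}}=1$. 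That is a more careful justification of the same step, since the stated solvability conditions do not themselves give this first moment at fixed $\vec\nu$.
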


\begin{proof}
Expanding $\log f = \log f^{(0)} + \epsilon\,f^{(1)}/f^{(0)} + O(\epsilon^2)$ and $f = f^{(0)} + \epsilon f^{(1)} + O(\epsilon^2)$, a direct computation gives
\begin{equation}
    \label{eq:ce_flux_expansion}
    \int \vec V\,\log(f)\,f\,\dd\vec\Xi \;=\; \int \vec V\,\log(f^{(0)})\,f^{(0)}\,\dd\vec\Xi \;+\; \epsilon\!\int \vec V\,\bigl[\log(f^{(0)}) + 1\bigr] f^{(1)}\,\dd\vec\Xi \;+\; O(\epsilon^2).
\end{equation}
The zeroth-order integral vanishes in fact since $\log f^{(0)}$ is an even polynomial in $\vec V$ thus $\vec V\log f^{(0)}\,f^{(0)}$ is odd in $\vec V$ and integrates to zero. The additive constant $1$ and the $\vec V,\vec\Sigma$-independent part of $\log f^{(0)}$ multiply $\int\vec V f^{(1)}\dd\vec\Xi = 0$, by the solvability condition~\eqref{eq:ce_sol_conditions}. The orientational term $\log\lambda(\vec{\nu})$ multiplies
\begin{equation*}
    \int \vec V f^{(1)}\,\dd\vec V\,\dd\vec\Sigma \quad\text{at fixed } \vec{\nu},
\end{equation*}
which in the aligned regime $s\to 1$ where $\lambda$ is concentrated about the director $\widehat{\vec{\nu}}$ vanishes by solvability.

We are left with the two $\vec V,\vec\Sigma$-dependent contributions of $\log f^{(0)}$:
\begin{equation*}
    \int \vec V\,\log(f)\,f\,\dd\vec\Xi \;=\; \epsilon\!\int \vec V\!\left[-\frac{m\abs{\vec V}^2}{2k_B\theta} - \frac{\abs{\vec\Sigma}^2}{2C_{m,\ell}k_B\theta}\right]\! f^{(1)}\,\dd\vec\Xi \;+\; O(\epsilon^2).
\end{equation*}
Multiplying by $-k_B$ and recalling that $\vec\Phi = -k_B\int\vec V \log(f) f\,\dd\vec\Xi$,
\begin{equation*}
    \vec\Phi \;=\; \frac{\epsilon}{\theta}\!\int \vec V\!\left[\frac{m\abs{\vec V}^2}{2} + \frac{\abs{\vec\Sigma}^2}{2C_{m,\ell}}\right]\! f^{(1)}\,\dd\vec\Xi \;+\; O(\epsilon^2).
\end{equation*}
The right-hand side is precisely $\vec Q^{(1)}/\theta$ since absorbing $\epsilon$ back into the physical rescaling \cla{$\tau = \epsilon\,\bar\tau$} gives~\eqref{eq:ce_entropy_flux}.
\end{proof}

The first-order entropy flux \eqref{eq:ce_entropy_flux} satisfies the same expansion-independence statement as the equation of state of \Cref{cor:eos_first_order}. In Clausius--Duhem form $\vec\Phi^{(1)} = \vec\Phi^{(0)} + \theta^{-1}\vec Q^{(1)} + O(\epsilon^2)$ its zeroth-order flux vanishes, $\vec\Phi^{(0)} = \vec 0$, the analogue of \cite[Proposition~3]{malek}.

\begin{theorem}[First-order entropy production]\label{thm:ce_entropy_production}
\cla{In the aligned regime $s\to 1$,} the first-order Chapman--Enskog entropy production rate obtained by substituting the expansion~\eqref{eq:ce_expansion} into the kinetic definition~\eqref{eq:ce_entropy_definitions} with the single-relaxation BGK operator~\eqref{eq:ce_BGK_single} is
\begin{equation}
    \label{eq:ce_entropy_production}
    \xi \;=\; \frac{k_B\,\cla{\tau}\,\rho}{m}\,\cchevrons{D^2}_{\sim f^{(0)}},
\end{equation}
where $D$ is the bracketed factor of~\eqref{eq:ce_f1_explicit}, i.e.\ $f^{(1)} = -\cla{\bar\tau}\,f^{(0)}\,D$, and $\cchevrons{\cdot}_{\sim f^{(0)}}$ denotes the average with respect to $f^{(0)}$. Explicitly, in terms of the thermodynamic forces $\mathbb D = \tfrac{1}{2}(\nabla_{\vec x}\vec u+(\nabla_{\vec x}\vec u)^T)$, $\mathbb D^d = \mathbb D - \tfrac{1}{3}(\nabla_{\vec x}\!\cdot\!\vec u)\,\mat{I}$, the temperature gradient $\nabla_{\vec x}\theta$, and the orientational forces $\nabla_{\vec x}\widehat{\vec{\nu}}$, $\nabla_{\vec x}\vec\mu$, the entropy production admits the manifestly non-negative decomposition
\begin{equation}
    \label{eq:ce_entropy_production_explicit}
    \xi \;=\; \xi_{\mathrm{shear}} + \xi_{\mathrm{heat}} + \xi_{\mathrm{bulk}}^{\mathrm{iso}} + \xi_{\mathrm{bulk}}^{\mathrm{aniso}} + \xi_{\mathrm{ang}} \;\geq\; 0,
\end{equation}
\cla{with each summand given by
\begin{subequations}
\label{eq:ce_entropy_production_terms}
\begin{align}
    \xi_{\mathrm{shear}}       &= \frac{2\,k_B\,\tau\,\rho}{m}\,\mathbb D^d\!:\!\mathbb D^d,\\
    \xi_{\mathrm{heat}}        &= \frac{7}{2}\,\frac{k_B^2\,\tau\,\rho\,\theta}{m^2}\,\frac{\abs{\nabla_{\vec x}\theta}^2}{\theta^2},\\
    \xi_{\mathrm{bulk}}^{\mathrm{iso}}    &= \frac{4}{15}\,\frac{k_B\,\tau\,\rho}{m}\,(\nabla_{\vec x}\!\cdot\!\vec u)^2,\\
    \xi_{\mathrm{bulk}}^{\mathrm{aniso}}  &= \uz{\frac{2\,(K/p_{\mathrm{tr,rot}})^2}{5}\,\frac{k_B\,\tau\,\rho}{m}\Bigl[(\nabla_{\vec x}\widehat{\vec{\nu}}\odot\nabla_{\vec x}\widehat{\vec{\nu}})\!:\!\nabla_{\vec x}\vec u \fix{-} \frac{1}{C_{m,\ell}}(\widehat{\vec{\nu}}\times\nabla_{\vec x}\widehat{\vec{\nu}})\!:\!\nabla_{\vec x}\vec\mu\Bigr]^{\!2}},\\
    \xi_{\mathrm{ang}}          &= \uz{\frac{k_B\,\tau\,\rho}{m^2\,C_{m,\ell}}\,(\delta_{ac}-\widehat{\nu}_a\widehat{\nu}_c)(\partial_b\mu_a)(\partial_b\mu_c)},
\end{align}
\end{subequations}
corresponding respectively to the Newtonian shear, polyatomic Fourier heat conduction, the translational--rotational imbalance bulk viscosity, the anisotropic coupling of \uz{the Ericksen tensor with the velocity gradient} and couple-stress reactivity, and the angular-advection dissipation. \uz{No Ericksen-stress dissipation appears: the would-be summand $\xi_{\mathrm{E}} = (\tau/(\theta\rho))\abs{\vec{W}}^2$ vanishes because $\vec{W}$ is removed by the solvability condition, see \eqref{eq:ce_W_constraint} and \Cref{rmk:xiE_solvability}.} \uz{The two sub-modes of the bulk component, $(2B/3-C)$ and $(B+C-5/2)$ in the reduced variables of \eqref{eq:maxwellian_moments}, are $L^2(f^{(0)})$-orthogonal, so $\xi_{\mathrm{bulk}}^{\mathrm{iso}}$ and $\xi_{\mathrm{bulk}}^{\mathrm{aniso}}$ appear as separate perfect squares with no cross term.} \fix{The Frank modulus $K = K_{\mathrm{kin}}$ appears in $\xi_{\mathrm{bulk}}^{\mathrm{aniso}}$ via the ratio $K/p_{\mathrm{tr,rot}} = C_{m,\ell}/m$, recovering the form of \cite{farrell}.}}
\end{theorem}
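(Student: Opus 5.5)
The plan is to mirror the structure of the proof of \Cref{thm:ce_entropy_flux}: expand $\xi$ on the Chapman--Enskog ansatz, reduce it to a quadratic form in $f^{(1)}$, and then evaluate that form with the Gaussian moments of \Cref{lemma:maxwellian_moments}. Three steps are needed.

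\textbf{Step 1: reduce $\xi$ to $\langle D^2\rangle$.} By the proof of \Cref{lemma:BGK_entropy}, the term $\int(f-f^{(0)})\log f^{(0)}\,\dd\vec\Xi$ vanishes. Hence
\begin{equation*}
\xi=\frac{k_B}{\tau}\int (f-f^{(0)})\log\frac{f}{f^{(0)}}\,\dd\vec\Xi .
\end{equation*}
Substitute $f-f^{(0)}=\epsilon f^{(1)}+O(\epsilon^2)$ and $\log(f/f^{(0)})=\epsilon f^{(1)}/f^{(0)}+O(\epsilon^2)$. This gives
\begin{equation*}
\xi=\frac{k_B\epsilon^2}{\tau}\int \frac{(f^{(1)})^2}{f^{(0)}}\,\dd\vec\Xi+O(\epsilon^3).
\end{equation*}
Inserting $f^{(1)}=-\bar\tau f^{(0)}D$ from \Cref{thm:ce_first_order} and using $\int f^{(0)}(\cdot)\,\dd\vec\Xi=(\rho/m)\cchevrons{\cdot}_{\sim f^{(0)}}$, the prefactor becomes
\begin{equation*}
\frac{k_B\epsilon^2\bar\tau^2}{\tau}=k_B\tau ,
\end{equation*}
because $\tau=\epsilon\bar\tau$ by \eqref{eq:ce_rescaling}. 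This yields \eqref{eq:ce_entropy_production}. Non-negativity is then automatic.

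\textbf{Step 2: expand $D^2$ and remove cross terms.} Write $D$ as the sum of the six blocks of \eqref{eq:ce_f1_explicit}:
\begin{itemize}
\item shear, which is even of degree two in $\vec V$ and traceless;
\item the two heat blocks, which are odd in $\vec V$;
\item isotropic bulk, proportional to $(2B/3-C)$;
\item the anisotropic and couple-stress-reactive blocks, both proportional to $(B+C-\tfrac52)$, which I merge into one bracket;
\item angular advection, which is odd in both $\vec V$ and $\vec\Sigma$.
\end{itemize}
The cross terms vanish for the following reasons.
\begin{itemize}
\item Parity in $\vec V$ and in $\vec\Sigma$ separately kills every cross term involving the heat blocks or the angular block, except the heat--heat cross term.
\item The cross term of shear with a radial function vanishes by the isotropic fourth moment, since $(\vec V\otimes\vec V)^d$ averages to zero against any function of $\abs{\vec V}^2$.
\item The iso/aniso bulk cross term vanishes by the last orthogonality identity of \eqref{eq:maxwellian_moments}.
\end{itemize}

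\textbf{Step 3: evaluate the diagonal terms with \Cref{lemma:maxwellian_moments}.}
\begin{itemize}
\item Shear: the fourth-moment formula gives $\cchevrons{(m/k_B\theta)^2((\vec V\otimes\vec V)^d:\mathbb D)^2}=2\,\mathbb D^d:\mathbb D^d$.
\item Isotropic bulk: $(2/5)^2\cdot\tfrac53=\tfrac{4}{15}$.
\item Merged aniso/reactive bulk: $(2/5)^2\cdot\tfrac52=\tfrac25$. Using $K/p_{\mathrm{tr,rot}}=C_{m,\ell}/m$ and $K/(C_{m,\ell}R_s\rho\theta)=1/C_{m,\ell}$ rewrites the combined coefficient as the squared bracket in $\xi^{\mathrm{aniso}}_{\mathrm{bulk}}$. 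The relative sign between the two pieces comes from the opposite signs in \eqref{eq:ce_f1_explicit}.
\item Heat: the two heat blocks combine into $(B+C-\tfrac72)\,\vec V\cdot\nabla\theta/\theta$. Since $\vec V$ is independent of $\vec\Sigma$ under $f^{(0)}$, the average is $(k_B\theta/m)\,\abs{\nabla\theta}^2\theta^{-2}$ times $\cchevrons{(B+C-\tfrac72)^2}$ weighted by $B$. This factor is computed from the $B$ moments (one extra power of $B$ from $\abs{\vec V}^2$) and the $C$ moments, and should produce $7/2$.
\item Angular advection: $\cchevrons{\Sigma_a\Sigma_c}$ and $\cchevrons{V_bV_d}$ give the projector $(\delta_{ac}-\widehat\nu_a\widehat\nu_c)$ times $\abs{\nabla\vec\sigma}^2$ with prefactor $1/(C_{m,\ell}k_B\theta)\cdot k_B\theta/m$. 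To express it through $\nabla\vec\mu$, I use $\vec\sigma=\vec\mu\times\widehat{\vec\nu}$ at leading order in the aligned regime, with $\nabla\widehat{\vec\nu}$ terms counted as higher order, together with the projector identity.
\end{itemize}

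The main obstacle I expect is bookkeeping of numerical constants and units. The heat coefficient $7/2$ requires care with the $B$-weighted moments and the cross term between $B$ and $C$. The angular term requires tracking the factors of $m$ arising from the specific-energy normalisation and from the conversion from $\vec\sigma$ to $\vec\mu$. I would verify these first, reconciling the $m$ powers against the units of $\xi_{\mathrm{shear}}$.
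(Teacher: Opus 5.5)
Your proposal is correct and follows the paper's proof essentially step for step. The common route is: the relative-entropy reduction to $\frac{k_B\tau\rho}{m}\cchevrons{D^2}_{\sim f^{(0)}}$; cross terms killed by $\vec V$/$\vec\Sigma$ parity, by tracelessness of $(\vec V\otimes\vec V)^d$, and by the $L^2(f^{(0)})$-orthogonality of $(\tfrac{2B}{3}-C)$ and $(B+C-\tfrac52)$; diagonal terms from \Cref{lemma:maxwellian_moments}; and $\vec\sigma=\vec\mu\times\widehat{\vec\nu}$ for the angular term. Your merged heat factor does come out right, as $\tfrac23\cdot\tfrac{21}{4}=\tfrac72$.

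One slip needs fixing. $K/(C_{m,\ell}R_s\rho\theta)=(K/p_{\mathrm{tr,rot}})/C_{m,\ell}=1/m$, not $1/C_{m,\ell}$. The $1/C_{m,\ell}$ inside the squared bracket comes simply from $R_s\rho\theta=p_{\mathrm{tr,rot}}$, which makes the reactive coefficient $\tfrac25(K/p_{\mathrm{tr,rot}})\,C_{m,\ell}^{-1}$, so that $\tfrac25(K/p_{\mathrm{tr,rot}})$ factors out of both pieces.
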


\begin{proof}
With the single-temperature BGK operator~\eqref{eq:ce_BGK_single} the kinetic definition~\eqref{eq:ce_entropy_definitions} reads
\begin{equation}
    \label{eq:ce_xi_starting}
    \xi \;=\; \frac{k_B}{\tau}\!\int (f - f^{(0)})\,\log(f)\,\dd\vec\Xi.
\end{equation}
We split $\log f = \log f^{(0)} + \log(f/f^{(0)})$ and note that
\begin{equation*}
    \int (f - f^{(0)})\,\log f^{(0)}\,\dd\vec\Xi \;=\; 0,
\end{equation*}
because $\log f^{(0)}$, given by~\eqref{eq:ce_log_maxwellian}, is a linear combination of the collision invariants $1$, $m\abs{\vec V}^2$, $\abs{\vec\Sigma}^2$ and of $\log\lambda(\vec{\nu})$, all of which are annihilated by the solvability conditions~\eqref{eq:ce_sol_conditions}, thus $\int (f-f^{(0)})\,\dd\vec V\,\dd\vec\Sigma = 0$. Hence
\begin{equation}
    \label{eq:ce_xi_relative}
    \xi \;=\; \frac{k_B}{\tau}\!\int (f - f^{(0)})\,\log\!\bigl(f/f^{(0)}\bigr)\,\dd\vec\Xi.
\end{equation}
Substituting $f - f^{(0)} = \epsilon\,f^{(1)} + O(\epsilon^2)$ and $\log(f/f^{(0)}) = \epsilon\,f^{(1)}/f^{(0)} + O(\epsilon^2)$ yields
\begin{equation}
    \label{eq:ce_xi_leading}
    \xi \;=\; \frac{k_B\,\epsilon^2}{\tau}\!\int \frac{(f^{(1)})^2}{f^{(0)}}\,\dd\vec\Xi \;+\; O(\epsilon^3).
\end{equation}
Writing $f^{(1)} = -\bar\tau\,f^{(0)}\,D$ from~\eqref{eq:ce_f1_explicit} and using $\tau = \epsilon\,\bar\tau$,
\begin{equation*}
	\xi \;=\; k_B\,\epsilon\,\bar\tau\!\int f^{(0)}\,D^2\,\dd\vec\Xi \;=\; k_B\,\tau\,\frac{\rho}{m}\,\cchevrons{D^2}_{\sim f^{(0)}},
\end{equation*}
which is~\eqref{eq:ce_entropy_production} and is manifestly non-negative.

It remains to evaluate $\cchevrons{D^2}_{\sim f^{(0)}}$ using \Cref{lemma:maxwellian_moments}. We label the seven bracketed summands of~\eqref{eq:ce_f1_explicit} as $D_s$ for shear, $D_{\mathrm h}^t$ for translational heat, $D_{\mathrm h}^r$ for rotational heat, $D_b$ for bulk, $D_{\mathrm{cs}}$ for the couple-stress reactive term, $D_{\mathrm{aa}}$ for angular advection, and $D_{\mathrm E}$ for the Ericksen term. Expanding $D^2 = \sum_i D_i^2 + 2\sum_{i<j}D_iD_j$ and using the parity properties of \Cref{lemma:maxwellian_moments}, all cross-terms $\cchevrons{D_i D_j}_{\sim f^{(0)}}$ vanish except for $\cchevrons{D_bD_{\mathrm{cs}}}_{\sim f^{(0)}}$, which we compute explicitly below.

\emph{Shear.} Using the fourth-order Gaussian moment,
\begin{equation*}
	\cchevrons{\bigl((\vec V\otimes\vec V)^d\!:\!\mathbb D\bigr)^2}_{\sim f^{(0)}} \;=\; 2\Bigl(\tfrac{k_B\theta}{m}\Bigr)^{\!2}\,\mathbb D^d\!:\!\mathbb D^d,
\end{equation*}
so $\cchevrons{D_s^2}_{\sim f^{(0)}} = 2\,\mathbb D^d\!:\!\mathbb D^d$.

\emph{Heat.} By isotropy of the Gaussian in $\vec V$ and by cross-independence of $\vec V$ and $\vec\Sigma$,
\begin{align*}
	\cchevrons{(D_{\mathrm h}^t)^2}_{\sim f^{(0)}} &= \tfrac{1}{3}\cchevrons{(B-\tfrac{5}{2})^2 \abs{\vec V}^2}_{\sim f^{(0)}} \tfrac{\abs{\nabla_{\vec x}\theta}^2}{\theta^2} = \tfrac{5}{2}\tfrac{k_B\theta}{m}\tfrac{\abs{\nabla_{\vec x}\theta}^2}{\theta^2},\\
	\cchevrons{(D_{\mathrm h}^r)^2}_{\sim f^{(0)}} &= \cchevrons{(C-1)^2}_{\sim f^{(0)}}\cchevrons{(\vec V\!\cdot\!\nabla_{\vec x}\theta/\theta)^2}_{\sim f^{(0)}} = \tfrac{k_B\theta}{m}\tfrac{\abs{\nabla_{\vec x}\theta}^2}{\theta^2},
\end{align*}
and the cross $\cchevrons{D_{\mathrm h}^tD_{\mathrm h}^r}_{\sim f^{(0)}}$ vanishes because $\cchevrons{C-1}_{\sim f^{(0)}} = 0$ and $\vec\Sigma\perp\vec V$ under $f^{(0)}$. Summing, $\cchevrons{(D_{\mathrm h}^t+D_{\mathrm h}^r)^2}_{\sim f^{(0)}} = \tfrac{7}{2}\tfrac{k_B\theta}{m}\abs{\nabla_{\vec x}\theta}^2/\theta^2$.

\uz{\emph{Bulk and couple-stress reactive.} With $\alpha := \nabla_{\vec x}\!\cdot\!\vec u$ and the combined anisotropic affinity
\begin{equation*}
	\mathcal{B} \;:=\; (\nabla_{\vec x}\widehat{\vec{\nu}}\odot\nabla_{\vec x}\widehat{\vec{\nu}})\!:\!\nabla_{\vec x}\vec u \;\fix{-}\; \frac{1}{C_{m,\ell}}\,(\widehat{\vec{\nu}}\times\nabla_{\vec x}\widehat{\vec{\nu}})\!:\!\nabla_{\vec x}\vec\mu,
\end{equation*}
the anisotropic bulk summand and the couple-stress reactive summand of \eqref{eq:ce_f1_explicit} carry the same Gaussian factor $(B+C-\tfrac{5}{2})$ and assemble into
\begin{equation*}
    D_b + D_{\mathrm{cs}} \;=\; \tfrac{2\alpha}{5}(\tfrac{2B}{3}-C) \;-\; \tfrac{2}{5}\,(K/p_{\mathrm{tr,rot}})\,\mathcal{B}\,(B+C-\tfrac{5}{2}).
\end{equation*}
Using the moments of \eqref{eq:maxwellian_moments},
\begin{equation*}
	\cchevrons{(\tfrac{2B}{3}-C)^2}_{\sim f^{(0)}} = \tfrac{5}{3}, \qquad
	\cchevrons{(B+C-\tfrac{5}{2})^2}_{\sim f^{(0)}} = \tfrac{5}{2}, \qquad
	\cchevrons{(\tfrac{2B}{3}-C)(B+C-\tfrac{5}{2})}_{\sim f^{(0)}} = 0,
\end{equation*}
we obtain
\begin{equation*}
	\cchevrons{(D_b + D_{\mathrm{cs}})^2}_{\sim f^{(0)}} \;=\; \tfrac{4}{15}\alpha^2 \;+\; \tfrac{2}{5}\,(K/p_{\mathrm{tr,rot}})^2\,\mathcal{B}^2.
\end{equation*}
	The isotropic and anisotropic bulk components are orthogonal Gaussian modes and appear as two separate perfect squares, with no cross term to complete.}

\emph{Angular advection.} By the rigid-rod projector identity $\cchevrons{\Sigma_a\Sigma_c}_0 = C_{m,\ell}k_B\theta\,(\delta_{ac}-\nu_a\nu_c)$ discussed in \Cref{sec:vanishing_girth} and $\cchevrons{V_bV_d}_0 = (k_B\theta/m)\delta_{bd}$, and taking the strong alignment limit $s\to 1$ for the $\vec{\nu}$ integration,
\begin{equation*}
	\cchevrons{D_{\mathrm{ang}}^2}_{\sim f^{(0)}} \;=\; \frac{1}{(C_{m,\ell}k_B\theta)^2}\,(C_{m,\ell}k_B\theta)\,\tfrac{k_B\theta}{m}\,(\delta_{ac}-\widehat{\nu}_a\widehat{\nu}_c)(\partial_b\sigma_a)(\partial_b\sigma_c) \;=\; \frac{(\delta_{ac}-\widehat{\nu}_a\widehat{\nu}_c)(\partial_b\sigma_a)(\partial_b\sigma_c)}{m\,C_{m,\ell}}.
\end{equation*}
\uz{We express the result in terms of the intrinsic angular momentum: by the strong-alignment identity $\vec\mu = \cchevrons{\vec{\nu}\times\vec\varsigma} = \widehat{\vec{\nu}}\times\vec\sigma$, equivalently $\vec\sigma = \vec\mu\times\widehat{\vec{\nu}}$, the product rule gives $\partial_b\vec\sigma = (\partial_b\vec\mu)\times\widehat{\vec{\nu}} + \vec\mu\times\partial_b\widehat{\vec{\nu}}$, the second term being of higher order in the alignment expansion since $\vec\mu$ itself vanishes at the aligned equilibrium. Using $\abs{\vec a\times\widehat{\vec{\nu}}}^2 = \abs{\vec a}^2 - (\vec a\cdot\widehat{\vec{\nu}})^2$, the projected quadratic form transfers verbatim,
\begin{equation*}
	(\delta_{ac}-\widehat{\nu}_a\widehat{\nu}_c)(\partial_b\sigma_a)(\partial_b\sigma_c) \;=\; (\delta_{ac}-\widehat{\nu}_a\widehat{\nu}_c)(\partial_b\mu_a)(\partial_b\mu_c) + \mathcal{O}(\epsilon^4),
\end{equation*}
which is the form stated in \eqref{eq:ce_entropy_production_terms}, written directly in the couple-stress affinity $\nabla_{\vec x}\vec\mu$ in view of the pairing of \Cref{sec:hybrid}.}

\uz{\emph{Remaining cross-terms.} Every remaining cross $\cchevrons{D_iD_j}_{\sim f^{(0)}}$ vanishes: either by $\vec V$-parity (when one factor carries an odd power of $\vec V$ and the other an even power, e.g.\ heat against shear, bulk and couple-stress), by $\vec\Sigma$-parity (angular-advection against any non-$\vec\Sigma$-carrying piece), or by tracelessness of $(\vec V\otimes\vec V)^d$ (shear against bulk and couple-stress).}

Summing all surviving contributions and multiplying by the overall prefactor $k_B\tau\rho/m$ yields the \uz{five-term} decomposition~\eqref{eq:ce_entropy_production_terms}.
\end{proof}

\uz{Reading the transport coefficients of the model from~\eqref{eq:ce_entropy_production_terms}, we find the shear viscosity $\zeta_{\mathrm{shear}} = p_{\mathrm{tr,rot}}\,\tau$, the thermal conductivity $\kappa = \tfrac{7}{2}(k_B/m)\,p_{\mathrm{tr,rot}}\,\tau = c_p\,p_{\mathrm{tr,rot}}\,\tau$, where $c_p = \tfrac{7}{2}(k_B/m)$ is the specific heat at constant pressure of a diatomic gas, the bulk viscosity $\zeta_{\mathrm{iso}} = \tfrac{4}{15}\,p_{\mathrm{tr,rot}}\,\tau$, the anisotropic coefficient $\zeta_{\mathrm{aniso}} = \tfrac{2}{5}\,(K/p_{\mathrm{tr,rot}})^2\,p_{\mathrm{tr,rot}}\,\tau$, and the spin-diffusion coefficient $\zeta_{\mathrm{ang}} = p_{\mathrm{tr,rot}}\,\tau/(m\,C_{m,\ell})$. The ratio $\kappa/\zeta_{\mathrm{shear}} = c_p$ recovers the standard Eucken relation.} \cla{In contrast with a monatomic BGK derivation, $\zeta_{\mathrm{iso}}$ does not vanish: Stokes' hypothesis is not an issue in the present single-relaxation BGK framework because the Ericksen-stress contribution to the zeroth-order pressure carries through the Chapman--Enskog substitution rule.} \uz{The two summands $\xi_{\mathrm{bulk}}^{\mathrm{aniso}}$ and $\xi_{\mathrm{ang}}$} have no counterpart in a standard polyatomic BGK derivation and are structural consequences of the Ericksen and couple-stress contributions to the zeroth-order balance laws~\eqref{eq:ce_zeroth_balance}.

\uz{With these coefficients, the first-order entropy production takes the compact final form
\begin{equation}
	\label{eq:ce_entropy_production_final}
	\begin{aligned}
	\theta\,\xi \;=\;& 2\,\zeta_{\mathrm{shear}}\,\mathbb D^d\!:\!\mathbb D^d
	\;+\; \zeta_{\mathrm{iso}}\,(\nabla_{\vec x}\!\cdot\!\vec u)^2
	\;+\; \zeta_{\mathrm{aniso}}\Bigl[(\nabla_{\vec x}\widehat{\vec{\nu}}\odot\nabla_{\vec x}\widehat{\vec{\nu}})\!:\!\nabla_{\vec x}\vec u \fix{-} \tfrac{1}{C_{m,\ell}}(\widehat{\vec{\nu}}\times\nabla_{\vec x}\widehat{\vec{\nu}})\!:\!\nabla_{\vec x}\vec\mu\Bigr]^{2}\\
	&+\; \zeta_{\mathrm{ang}}\,(\delta_{ac}-\widehat{\nu}_a\widehat{\nu}_c)(\partial_b\mu_a)(\partial_b\mu_c)
	\;+\; \kappa\,\frac{\abs{\nabla_{\vec x}\theta}^{2}}{\theta},
	\end{aligned}
\end{equation}
a manifestly non-negative quadratic form in the affinities $(\mathbb D^d,\ \nabla_{\vec x}\!\cdot\!\vec u,\ \nabla_{\vec x}\widehat{\vec{\nu}},\ \nabla_{\vec x}\vec\mu,\ \nabla_{\vec x}\theta)$. This is the form of the entropy production used by the Rajagopal--Srinivasa closure of \Cref{sec:hybrid}.}

\os{
\section{Rajagopal--Srinivasa--Chapman--Enskog closure}\label{sec:hybrid}

The first-order Chapman--Enskog expansion of~\Cref{sec:ce} produced a non-negative expression \eqref{eq:ce_entropy_production_final} for the entropy production rate. In addition, the Chapman--Enskog procedure yielded an explicit expression \eqref{eq:ce_entropy_flux} for the entropy flux.

A complementary thermodynamic constraint is provided by the Chapman--Enskog-based constitutive relation for the Helmholtz free energy derived in \Cref{sec:entropy_balance_and_rate}. This relation allows us to infer the structure of the entropy balance from purely thermodynamic arguments, using only the balance equations. Combining this thermodynamic constraint with the kinetic-theory constraint on the entropy flux, the following corollary gives an independent expression for the entropy production rate and exposes its flux--affinity structure.

\begin{corollary}[Entropy production inferred from the first-order Helmholtz free-energy structure]\label{cor:first_order_entropy_production}
Substituting the entropy balance \eqref{eq:entropy_balance_law},
\[
\rho\,\frac{d\eta}{dt} = \xi - \nabla_{\vec{x}}\!\cdot\!\vec{\Phi},
\]
into \uz{the entropy-production identity \eqref{eq:entropy_production_constitutive}}, and using the first-order entropy flux
\[
\vec{\Phi} = \frac{\vec{Q}^{(1)}}{\theta}
\]
from \Cref{thm:ce_entropy_flux}, yields
\begin{align}
\nonumber
\theta\xi^{(1)}
&= \left[\mathbb{T}^{(1)} + p\mat{I} + \rho \,(\nabla_{\vec{x}}\widehat{\vec{\nu}})^T \partial_{\nabla_{\vec{x}}\widehat{\vec{\nu}}} \psi\right]\!:\!\nabla_{\vec{x}}\vec{u} \\
\nonumber
&\quad + \fix{\frac{1}{mC_{m,\ell}}}\left[\mathbb{M}^{(1)}-\fix{m\rho}\!\left(\widehat{\vec{\nu}}\times \partial_{\nabla_{\vec{x}}\widehat{\vec{\nu}}} \psi\right)\right]\!:\!\nabla_{\vec{x}}\vec{\mu}
-\fix{\vec{Q}^{(1)}}\cdot\frac{\nabla_{\vec{x}}\theta}{\theta} \fix{.}
\end{align}

\fix{Since the constitutive free energy carries no interaction component, the heat flux is the bare kinetic flux $\vec{Q}$ and requires no interaction correction, so the final divergence term is absent.} The tensor
\begin{equation}
	\left[\mathbb{T}^{(1)} + p\mat{I} + \rho \,(\nabla_{\vec{x}}\widehat{\vec{\nu}})^T \partial_{\nabla_{\vec{x}}\widehat{\vec{\nu}}} \psi\right]
\end{equation}
is symmetric, by the symmetry of $\mathbb{T}$ and by the assumed structure of $\psi$. Consequently, this tensor contracts only with the symmetric part of the velocity gradient, $\mathbb{D}$. Splitting $\mathbb{D}$ into its spherical and traceless parts, we obtain the following bilinear form for the entropy production rate:
\begin{equation}\label{eq:hybrid_constraint}
\theta\,\xi(J,A)
= J_{\mathrm{tr}\mathbb{T}}A_{\mathrm{tr}\mathbb{T}}
+ \mathbb{J}_{\mathbb{T}^d}\!:\!\mathbb{A}_{\mathbb{T}^d}
+ \mathbb{J}_{\mathbb{M}}\!:\!\mathbb{A}_{\mathbb{M}}
+ \vec{J}_{\vec{Q}}\!\cdot\!\vec{A}_{\vec{Q}} .
\end{equation}
The corresponding thermodynamic fluxes and affinities are defined by
\begin{subequations}\label{eq:hybrid_JA}
\begin{align}
J_{\mathrm{tr}\mathbb{T}}
&= \frac{1}{3}\mathrm{tr}(\mathbb{T}^{(1)}) + p
+ \frac{1}{3}\rho\,(\nabla_{\vec x}\widehat{\vec{\nu}})^T :\partial_{\nabla_{\vec{x}}\widehat{\vec{\nu}}} \psi,
&
A_{\mathrm{tr}\mathbb{T}}
&= \nabla_{\vec{x}}\cdot\vec{u}, \\
\mathbb{J}_{\mathbb{T}^d}
&= \left(\mathbb{T}^{(1)}  + 
\rho \,(\nabla_{\vec{x}}\widehat{\vec{\nu}})^T \partial_{\nabla_{\vec{x}}\widehat{\vec{\nu}}} \psi
\right)^{d},
&
\mathbb{A}_{\mathbb{T}^d}
&= \mathbb{D}^d, \\
\mathbb{J}_{\mathbb{M}}
&= \mathbb{M}^{(1)} - \fix{m\rho}\bigl(\widehat{\vec{\nu}}\times\partial_{\nabla_{\vec{x}}\widehat{\vec{\nu}}}\psi\bigr),
&
\mathbb{A}_{\mathbb{M}}
&= \fix{\frac{1}{mC_{m,\ell}}}\nabla_{\vec{x}}\vec{\mu}, \\
\vec{J}_{\vec{Q}}
&= \vec{Q}^{(1)},
&
\vec{A}_{\vec{Q}}
&= -\frac{\nabla_{\vec{x}}\theta}{\theta}.
\end{align}
\end{subequations}
\end{corollary}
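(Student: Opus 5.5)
The plan is to start from the identity \eqref{eq:entropy_production_constitutive}, which holds for any state once the free energy has the form \eqref{eq:psi_decomposition} and the mass and energy balances hold. On the first-order truncation $f^{(0)}+\epsilon f^{(1)}$ it is still valid, because \Cref{cor:eos_first_order} guarantees that $\theta$, $p$ and $\psi$ retain their equilibrium functional form. The stresses and heat flux appearing there are then replaced by their first-order values $\mathbb{T}^{(1)}$, $\mathbb{M}^{(1)}$, $\vec{Q}^{(1)}$.

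The next step is to substitute the entropy balance $\rho\,d\eta/dt=\xi-\nabla_{\vec x}\!\cdot\!\vec\Phi$ on the left-hand side. This gives $\theta\xi$ equal to the two bracketed flux--affinity terms, plus $-\nabla_{\vec x}\!\cdot\!\vec Q^{(1)}+\theta\nabla_{\vec x}\!\cdot\!\vec\Phi$, exactly as in \Cref{cor:zeroth_entropy_production}.

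We then insert $\vec\Phi=\vec Q^{(1)}/\theta$ from \Cref{thm:ce_entropy_flux}. The product rule gives
\begin{equation*}
\theta\nabla_{\vec x}\!\cdot\!(\vec Q^{(1)}/\theta)=\nabla_{\vec x}\!\cdot\!\vec Q^{(1)}-\vec Q^{(1)}\cdot\nabla_{\vec x}\theta/\theta .
\end{equation*}
The divergence terms therefore cancel identically and leave $-\vec Q^{(1)}\cdot\nabla_{\vec x}\theta/\theta$. No residual divergence survives because, as noted after \eqref{eq:eint_local}, the constitutive free energy contains no interaction energy, so $\vec Q$ is the bare kinetic flux and needs no correction. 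This yields the displayed formula for $\theta\xi^{(1)}$.

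To reach the bilinear form \eqref{eq:hybrid_constraint}, we first argue that the bracket multiplying $\nabla_{\vec x}\vec u$ is symmetric. Here $\mathbb{T}^{(1)}=-\rho\cchevrons{\vec V\otimes\vec V}$ is symmetric by definition, and $p\mat{I}$ is symmetric. For the remaining piece, \Cref{lemma:erot_oseen_frank} and \eqref{eq:tilde_erot_zeroth} give $\rho\,\partial_{\nabla\widehat{\vec\nu}}\psi=K\nabla_{\vec x}\widehat{\vec\nu}$, so $\rho(\nabla_{\vec x}\widehat{\vec\nu})^T\partial_{\nabla\widehat{\vec\nu}}\psi=K\,\nabla_{\vec x}\widehat{\vec\nu}\odot\nabla_{\vec x}\widehat{\vec\nu}$, which is symmetric. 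Hence the bracket annihilates the skew part of $\nabla_{\vec x}\vec u$ and only $\mathbb D$ contributes.

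Next we split $\mathbb D=\mathbb D^d+\frac13(\nabla_{\vec x}\!\cdot\!\vec u)\mat{I}$ and decompose the symmetric flux into its trace and deviatoric parts. Since the cross contractions of a deviatoric tensor with the identity vanish, this produces the pairs $(J_{\mathrm{tr}\mathbb{T}},A_{\mathrm{tr}\mathbb{T}})$ and $(\mathbb J_{\mathbb T^d},\mathbb A_{\mathbb T^d})$. The couple-stress term is rewritten by absorbing the factor $1/(mC_{m,\ell})$ into the affinity $\mathbb A_{\mathbb M}$. The heat term is read directly as $\vec J_{\vec Q}\cdot\vec A_{\vec Q}$.

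The main obstacle is justifying that \eqref{eq:entropy_production_constitutive} holds with first-order fluxes while $\psi$, $p$ and $\theta$ keep their zeroth-order expressions. I would settle this by appealing to the expansion independence in \Cref{prop:expansion_independence_fields}, \Cref{prop:expansion_independence_entropy} and \Cref{cor:eos_first_order}, together with the convention \eqref{eq:psi_first_order_convention}. The symmetry argument is the only other nontrivial point, and it depends on the one-constant form of $e_{\mathrm{rot}}$.
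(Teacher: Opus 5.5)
Your proposal is correct and takes essentially the same route as the paper. You substitute $\rho\,d\eta/dt=\xi-\nabla_{\vec x}\!\cdot\!\vec\Phi$ into \eqref{eq:entropy_production_constitutive} and cancel the divergence terms via $\vec\Phi=\vec Q^{(1)}/\theta$ and the product rule. You then use the symmetry of the stress bracket to reduce to $\mathbb D$ and split it into spherical and deviatoric parts; your explicit form $K\,\nabla_{\vec x}\widehat{\vec\nu}\odot\nabla_{\vec x}\widehat{\vec\nu}$ is exactly the ``assumed structure of $\psi$'' the paper invokes, and your appeal to \Cref{cor:eos_first_order} and the expansion-independence results only makes explicit what the paper takes for granted.
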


\subsection{Rajagopal--Srinivasa maximisation principle}\label{sec:hybrid_RSP}

\uz{The final form \eqref{eq:ce_entropy_production_final} of the first-order entropy production is a positive-definite and convex quadratic form in the affinities
\begin{equation}
	(\nabla_{\vec x}\!\cdot\!\vec u,\mathbb D^d,\nabla_{\vec x }\vec{\mu}, -\frac{\nabla_{\vec x}\theta}{\theta}).
\end{equation}
Introducing the shorthand $\mathcal{B}(A)$ for the anisotropic affinity
\begin{equation}
\label{eq:hybrid_B}
\mathcal B(A)
= (\nabla_{\vec x}\widehat{\vec{\nu}}\odot\nabla_{\vec x}\widehat{\vec{\nu}})\!:\!\nabla_{\vec x}\vec u
\fix{-} \frac{1}{C_{m,\ell}}\,(\widehat{\vec{\nu}}\times\nabla_{\vec x}\widehat{\vec{\nu}})\!:\!\nabla_{\vec x}\vec\mu,
\end{equation}
the entropy production rate reads
\begin{equation}
\label{eq:hybrid_iota}
\begin{aligned}
\theta\,\xi(A)
&= 2\zeta_{\mathrm{shear}}\,\mathbb D^d\!:\!\mathbb D^d
+ \zeta_{\mathrm{iso}}(\nabla_{\vec x}\!\cdot\!\vec u)^2
+ \zeta_{\mathrm{aniso}}\,\mathcal{B}(A)^{2}
+ \zeta_{\mathrm{ang}}\,(\delta_{ac}-\widehat{\nu}_a\widehat{\nu}_c)(\partial_b\mu_a)(\partial_b\mu_c)
+ \kappa\,\frac{\abs{\nabla_{\vec x}\theta}^{2}}{\theta},\\
\mathcal B(A)
&= (\nabla_{\vec x}\widehat{\vec{\nu}}\odot\nabla_{\vec x}\widehat{\vec{\nu}})\!:\!\nabla_{\vec x}\vec u
\fix{-} \frac{1}{C_{m,\ell}}(\widehat{\vec{\nu}}\times\nabla_{\vec x}\widehat{\vec{\nu}})\!:\!\nabla_{\vec x}\vec\mu,
\end{aligned}
\end{equation}
with the transport coefficients
\begin{equation}
\label{eq:hybrid_coefficients}
\begin{gathered}
\zeta_{\mathrm{shear}} = p_{\mathrm{tr,rot}}\,\tau, \qquad
\zeta_{\mathrm{iso}} = \tfrac{4}{15}\,p_{\mathrm{tr,rot}}\,\tau, \qquad
\zeta_{\mathrm{aniso}} = \tfrac{2}{5}\,(K/p_{\mathrm{tr,rot}})^{2}\,p_{\mathrm{tr,rot}}\,\tau,\\
\zeta_{\mathrm{ang}} = \frac{p_{\mathrm{tr,rot}}\,\tau}{m\,C_{m,\ell}}, \qquad
\kappa = \tfrac{7}{2}\frac{k_B}{m}\,p_{\mathrm{tr,rot}}\,\tau,
\end{gathered}
\end{equation}
and $\zeta_{\mathrm{aniso}} = \tfrac{2}{5}\fix{(C_{m,\ell}/m)}^2\,p_{\mathrm{tr,rot}}\tau$ \fix{with $K/p_{\mathrm{tr,rot}} = C_{m,\ell}/m$}. Recall that the would-be Ericksen-stress dissipation never enters: \fix{$\vec{W}$ is removed by the solvability condition, see \eqref{eq:ce_W_constraint} and \Cref{rmk:xiE_solvability}}.}

Following~\cite[Section 7]{malek}, we identify the closure relations, i.e.\ the constitutive equations for the thermodynamic fluxes, by means of the maximisation of the entropy production rate, the key idea of Rajagopal \& Srinivasa \cite{vitMalek,rajagopalSrinivasa}. This amounts to solving the constrained optimisation problem
\begin{equation}
\mathrm{maximize}_{A}\ \theta\xi(A) + \Lambda\bigl(\theta\xi(A)-\theta\xi(J,A)\bigr),
\end{equation}
where $\theta\xi(A)$ is given by \eqref{eq:hybrid_iota}, $\theta\xi(J,A)$ is given by \eqref{eq:hybrid_constraint}, and the maximisation is performed over the set of affinities
\begin{equation}
(A_{\mathrm{tr}\mathbb{T}},\mathbb{A}_{\mathbb{T}^d},\mathbb{A}_{\mathbb{M}},\vec{A}_{\vec{Q}}).
\end{equation}
The corresponding stationarity condition is
\begin{equation}
\frac{\partial\theta\xi(A)}{\partial A_i}
= \frac{\Lambda}{1+\Lambda}J_i,
\qquad
A_i\in\{A_{\mathrm{tr}\mathbb{T}},\mathbb{A}_{\mathbb{T}^d},\mathbb{A}_{\mathbb{M}},\vec{A}_{\vec{Q}}\}.
\end{equation}
Multiplying by $A_i$ and summing over all affinities gives
\begin{equation}
\sum_i A_i \frac{\partial\theta\xi(A)}{\partial A_i}
= \frac{\Lambda}{1+\Lambda}\sum_i A_i J_i
= \frac{\Lambda}{1+\Lambda}\xi.
\end{equation}
Since $\xi$ is homogeneous of degree two in the affinities, the left-hand side is equal to $2\theta\xi(A)$. We therefore obtain $\Lambda=-2$ and hence $(1+\Lambda)/\Lambda=1/2$. The resulting constitutive relations are
\begin{align}
J_{\mathrm{tr}{\mathbb{T}}}
&= \uz{\zeta_{\mathrm{iso}}\,A_{\mathrm{tr}\mathbb{T}} + \tfrac{1}{3}\,\zeta_{\mathrm{aniso}}\,\mathcal{B}(A)\,(\nabla_{\vec x}\widehat{\vec{\nu}}\!:\!\nabla_{\vec x}\widehat{\vec{\nu}})}, \\
\mathbb{J}_{\mathbb{T}^d}
&= 2\uz{\zeta_{\mathrm{shear}}}\,\mathbb{A}_{\mathbb{T}^d} \uz{+ \zeta_{\mathrm{aniso}}\,\mathcal{B}(A)\,\bigl(\nabla_{\vec x}\widehat{\vec{\nu}}\odot\nabla_{\vec x}\widehat{\vec{\nu}}\bigr)^{d}}, \\
\mathbb{J}_{\mathbb{M}}
&= \fix{-\,m\,\zeta_{\mathrm{aniso}}}\,\mathcal{B}(A)(\widehat{\vec{\nu}}\times\nabla_{\vec x}\widehat{\vec{\nu}}) \fix{+ mC_{m,\ell}\,\zeta_{\mathrm{ang}}\,(\nabla_{\vec x}\vec\mu)\bigl(\mat{I}-\widehat{\vec{\nu}}\otimes\widehat{\vec{\nu}}\bigr)}, \\
\vec{J}_{\vec{Q}}
&= -\kappa\,\nabla_{\vec x}\theta.
\end{align}
In the trace and deviatoric fluxes, the Ericksen tensor splits as $\nabla_{\vec x}\widehat{\vec{\nu}}\odot\nabla_{\vec x}\widehat{\vec{\nu}} = (\nabla_{\vec x}\widehat{\vec{\nu}}\odot\nabla_{\vec x}\widehat{\vec{\nu}})^d + \tfrac{1}{3}(\nabla_{\vec x}\widehat{\vec{\nu}}\!:\!\nabla_{\vec x}\widehat{\vec{\nu}})\,\mat{I}$, the trace part contributing to $J_{\mathrm{tr}\mathbb{T}}$ and the deviatoric part contributing to $\mathbb{J}_{\mathbb{T}^d}$.
Substituting the definitions of the fluxes and affinities from~\eqref{eq:hybrid_JA} into the above relations yields the following constitutive relations for the stress, couple-stress, and heat flux.

\begin{theorem}[Rajagopal--Srinivasa--Chapman--Enskog constitutive relations]\label{thm:hybrid_closure}
Under the hypotheses of~\Cref{thm:ce_first_order}, the dissipative stress, couple-stress, and heat flux of the single-relaxation BGK model are closed by
\begin{subequations}
\label{eq:hybrid_closure}
\begin{align}
\mathbb{T}^{(1)} + p\,\mat{I}
+ \rho \,(\nabla_{\vec{x}}\widehat{\vec{\nu}})^T \partial_{\nabla_{\vec{x}}\widehat{\vec{\nu}}} \psi
&= 2\uz{\zeta_{\mathrm{shear}}}\,\mathbb D^{d}
+ \uz{\zeta_{\mathrm{iso}}\,(\nabla_{\vec x}\!\cdot\!\vec u)\,\mat{I}}\nonumber\\
&\qquad+ \uz{\zeta_{\mathrm{aniso}}\,\mathcal B(A)\,\bigl(\nabla_{\vec x}\widehat{\vec{\nu}}\odot\nabla_{\vec x}\widehat{\vec{\nu}}\bigr)},
\label{eq:hybrid_T}\\
\mathbb{M}^{(1)} - \fix{m\rho}\,\bigl(\vec{\nu}\times\partial_{\nabla_{\vec x}\widehat{\vec{\nu}}} \psi\bigr)
&= \fix{-\,m\,\zeta_{\mathrm{aniso}}}\,\mathcal B(A)\,\bigl(\vec{\nu}\times\nabla_{\vec x}\widehat{\vec{\nu}}\bigr) \fix{+ mC_{m,\ell}\,\zeta_{\mathrm{ang}}\,(\nabla_{\vec x}\vec\mu)\bigl(\mat{I}-\widehat{\vec{\nu}}\otimes\widehat{\vec{\nu}}\bigr)},
\label{eq:hybrid_M}\\
\vec{Q}^{(1)}
&= -\kappa\,\nabla_{\vec x}\theta.
\label{eq:hybrid_Q}
\end{align}
\end{subequations}
Here $\mathcal B(A)$ is defined in \eqref{eq:hybrid_B}
and the transport coefficients \uz{$\zeta_{\mathrm{shear}},\zeta_{\mathrm{iso}},\zeta_{\mathrm{aniso}},\zeta_{\mathrm{ang}},\kappa$} are defined in~\eqref{eq:hybrid_coefficients}. The entropy flux is consistent with~\eqref{eq:ce_entropy_flux}, namely
\[
\vec\Phi^{(1)} = \frac{\vec Q^{(1)}}{\theta} = -\frac{\kappa}{\theta}\,\nabla_{\vec x}\theta,
\]
and the entropy production~\eqref{eq:hybrid_iota} is non-negative.
\end{theorem}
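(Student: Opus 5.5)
The proof is essentially a bookkeeping exercise on top of the Rajagopal--Srinivasa maximisation carried out just before the statement. I would start from the two expressions for the entropy production. The first is the kinetic quadratic form $\theta\xi(A)$ of \eqref{eq:hybrid_iota}, with the coefficients \eqref{eq:hybrid_coefficients} supplied by \Cref{thm:ce_entropy_production}. The second is the flux--affinity bilinear form $\theta\xi(J,A)$ of \eqref{eq:hybrid_constraint}, supplied by \Cref{cor:first_order_entropy_production}.

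I would then pose the constrained problem of maximising $\theta\xi(A)$ subject to $\theta\xi(A)=\theta\xi(J,A)$ over the affinities $(A_{\mathrm{tr}\mathbb{T}},\mathbb{A}_{\mathbb{T}^d},\mathbb{A}_{\mathbb{M}},\vec{A}_{\vec Q})$. The stationarity conditions read $\partial\theta\xi/\partial A_i=\tfrac{\Lambda}{1+\Lambda}J_i$. Contracting with $A_i$ and using Euler's identity for the degree-two homogeneous form $\theta\xi(A)$ fixes $\Lambda=-2$. This gives $J_i=\tfrac12\,\partial\theta\xi/\partial A_i$.

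The next step is to differentiate each summand of \eqref{eq:hybrid_iota} with respect to each affinity.
\begin{itemize}
\item The shear term contributes $2\zeta_{\mathrm{shear}}\mathbb D^d$ to $\mathbb{J}_{\mathbb{T}^d}$.
\item The isotropic bulk term contributes $\zeta_{\mathrm{iso}}\nabla\cdot\vec u$ to $J_{\mathrm{tr}\mathbb{T}}$.
\item The heat term contributes $\kappa\abs{\nabla\theta}^2/\theta=\kappa\theta\abs{\vec A_{\vec Q}}^2$, whose half-derivative gives $\vec Q^{(1)}=\kappa\theta\vec A_{\vec Q}=-\kappa\nabla\theta$.
\item The anisotropic term $\zeta_{\mathrm{aniso}}\mathcal B^2$ contributes $\zeta_{\mathrm{aniso}}\mathcal B\,\partial\mathcal B/\partial A_i$. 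This requires $\partial\mathcal B/\partial(\nabla\vec u)=\nabla\widehat{\vec\nu}\odot\nabla\widehat{\vec\nu}$, which is symmetric. Its trace and deviatoric parts then feed $J_{\mathrm{tr}\mathbb{T}}$ and $\mathbb{J}_{\mathbb{T}^d}$ respectively.
\item Writing $\nabla\vec\mu=mC_{m,\ell}\mathbb{A}_{\mathbb{M}}$, the chain rule gives $\partial\mathcal B/\partial\mathbb{A}_{\mathbb{M}}=-m\,(\widehat{\vec\nu}\times\nabla\widehat{\vec\nu})$.
\item Likewise the spin-diffusion term gives $mC_{m,\ell}\zeta_{\mathrm{ang}}(\nabla\vec\mu)(\mat I-\widehat{\vec\nu}\otimes\widehat{\vec\nu})$. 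Here the factor $\tfrac12$ cancels the $2$ from the quadratic form, and the factors of $mC_{m,\ell}$ must be tracked carefully through the rescaled affinity.
\end{itemize}

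Finally I would reassemble the trace and deviatoric flux relations into the full tensor identity \eqref{eq:hybrid_T}. The bulk trace $\tfrac13\mathrm{tr}$ is recombined with the deviator, and the splitting of $\nabla\widehat{\vec\nu}\odot\nabla\widehat{\vec\nu}$ into its deviatoric and spherical parts restores the undecomposed Ericksen tensor. Reading off \eqref{eq:hybrid_M} and \eqref{eq:hybrid_Q} from the definitions \eqref{eq:hybrid_JA} completes the closure.

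Consistency of the entropy flux follows by substituting \eqref{eq:hybrid_Q} into \Cref{thm:ce_entropy_flux}. Non-negativity is immediate, since \eqref{eq:hybrid_iota} is a sum of squares with positive coefficients. It is also worth checking that with $J=\tfrac12\nabla_A\theta\xi$ the identity $\theta\xi(J,A)=\theta\xi(A)$ holds, again by Euler's theorem. This confirms that the constraint is satisfied.

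The main obstacle is a technical point about the constraint. The entropy production form must be a genuine function of the chosen affinities. The anisotropic affinity $\mathcal B$ contracts with the full $\nabla\vec u$, but the symmetry of $\nabla\widehat{\vec\nu}\odot\nabla\widehat{\vec\nu}$ reduces it to $\mathbb D$, so it depends only on $A_{\mathrm{tr}\mathbb{T}}$ and $\mathbb{A}_{\mathbb{T}^d}$. Likewise the projector term must be expressed through $\mathbb{A}_{\mathbb{M}}$. I would make both reductions explicit first, so that the gradients with respect to the affinities are well defined and the prefactors $m$ and $C_{m,\ell}$ in the couple-stress closure come out correctly.
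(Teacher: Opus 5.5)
Your proposal is correct and follows the paper's own route. Both arguments use the constrained maximisation of $\theta\xi(A)$ against the flux--affinity form \eqref{eq:hybrid_constraint}, Euler's identity fixing $\Lambda=-2$ so that $J_i=\tfrac12\,\partial\theta\xi/\partial A_i$, term-by-term differentiation of \eqref{eq:hybrid_iota}, and recombination of the trace and deviatoric fluxes through \eqref{eq:hybrid_JA}; your explicit reduction of $\mathcal B$ to a function of $(A_{\mathrm{tr}\mathbb{T}},\mathbb{A}_{\mathbb{T}^d},\mathbb{A}_{\mathbb{M}})$ and your tracking of the factor $mC_{m,\ell}$ make precise what the paper leaves implicit, and the resulting coefficients agree with \eqref{eq:hybrid_closure}.
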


Relation~\eqref{eq:hybrid_Q} is Fourier's law, with Eucken-type conductivity satisfying
\[
\uz{\frac{\kappa}{\zeta_{\mathrm{shear}}}} = \tfrac{7}{2}\frac{k_B}{m} = c_p.
\]
Relation~\eqref{eq:hybrid_T} combines the Newtonian shear response with an isotropic bulk viscosity $\zeta_{\mathrm{iso}}$, generated by the translational--rotational equipartition imbalance of~\Cref{sec:zero}, and \uz{an Ericksen-aligned dissipative stress $\zeta_{\mathrm{aniso}}\,\mathcal{B}(A)\,(\nabla_{\vec x}\widehat{\vec{\nu}}\odot\nabla_{\vec x}\widehat{\vec{\nu}})$, whose trace part is the anisotropic bulk correction and whose deviatoric part is a texture-aligned contribution to the viscous stress}. Relation~\eqref{eq:hybrid_M} provides the couple-stress closure, including coupling both to \uz{the velocity gradient through the Ericksen contraction in $\mathcal{B}(A)$} and to the angular-momentum gradient through $\nabla_{\vec x}\vec\mu$\uz{; the angular-advection component contributes the spin-diffusion term $\fix{mC_{m,\ell}\,}\zeta_{\mathrm{ang}}\,\fix{(\nabla_{\vec x}\vec\mu)(\mat{I}-\widehat{\vec{\nu}}\otimes\widehat{\vec{\nu}})}$, an Eringen-type micropolar couple stress projected on the plane transverse to the director, with the kinetic prediction $\zeta_{\mathrm{ang}} = p_{\mathrm{tr,rot}}\tau/(m C_{m,\ell})$ for the spin viscosity}. \fix{The sign with which this term enters the director equation \eqref{eq:director_equation} makes spin diffusion dissipative, damping director oscillations.} In the isotropic limit $\nabla_{\vec x}\widehat{\vec{\nu}}\to 0$, $C_{m,\ell}\to 0$, the closure reduces to the monatomic-gas result
\[
\mathbb{T}^{(1)} = -p\,\mat{I} + 2\uz{\zeta_{\mathrm{shear}}}\,\mathbb D^d,
\qquad
\vec Q^{(1)} = -\kappa\,\nabla_{\vec x}\theta
\]
of e.g.~\cite[eq.~(98)]{malek}. The Ericksen stress and couple-stress contributions encoded in~\eqref{eq:hybrid_T}--\eqref{eq:hybrid_M} have no counterpart in that theory and constitute the structural novelty of the present kinetic theory of ordered fluids.

}

\uz{\subsection{The closed system and the director form of the angular-momentum balance}\label{sec:hybrid_system}
We now recast the angular-momentum balance as an evolution equation for the nematic director, following the discussion around \cite[eq.~(6.9)]{farrell}, and summarise the closed system obtained in this work. Recall the zeroth-order identification $\vec{\mu} = C_{m,\ell}\,\widehat{\vec{\nu}}\times\dot{\widehat{\vec{\nu}}}$ of \eqref{eq:balance_laws_zeroth}, where the dot denotes the material derivative. Since $\dot{\widehat{\vec{\nu}}}\times\dot{\widehat{\vec{\nu}}} = \vec{0}$,
\begin{equation}
	\rho\,\frac{d\vec{\mu}}{dt} \;=\; \rho\,C_{m,\ell}\,\widehat{\vec{\nu}}\times\ddot{\widehat{\vec{\nu}}}.
\end{equation}
Moreover, every contribution to the closed couple stress is, up to the higher-order terms already discarded in \Cref{sec:ce}, of the form $g\,(\widehat{\vec{\nu}}\times\nabla_{\vec x}\widehat{\vec{\nu}})$ or $\fix{mC_{m,\ell}^{2}\,}\zeta_{\mathrm{ang}}(\widehat{\vec{\nu}}\times\nabla_{\vec x}\dot{\widehat{\vec{\nu}}})$, and for any scalar field $g$ the identity
\begin{equation}
	\label{eq:cross_divergence_identity}
	\nabla_{\vec x}\!\cdot\!\bigl[g\,(\widehat{\vec{\nu}}\times\nabla_{\vec x}\widehat{\vec{\nu}})\bigr] \;=\; \widehat{\vec{\nu}}\times\nabla_{\vec x}\!\cdot\!\bigl(g\,\nabla_{\vec x}\widehat{\vec{\nu}}\bigr)
\end{equation}
holds, because $\varepsilon_{ikl}\,\partial_j\hat{\nu}_k\,\partial_j\hat{\nu}_l = 0$ by antisymmetry. The angular-momentum balance $\rho\,d\vec{\mu}/dt \fix{-} \nabla_{\vec x}\!\cdot\!\mathbb{M} = 0$ therefore takes the form
\begin{equation}
	\label{eq:director_cross_form}
	\widehat{\vec{\nu}}\times\Bigl[\,\rho\,C_{m,\ell}\,\ddot{\widehat{\vec{\nu}}} \;-\; \fix{m}\,\nabla_{\vec x}\!\cdot\!\Bigl(\bigl(K - \zeta_{\mathrm{aniso}}\,\mathcal{B}(A)\bigr)\,\nabla_{\vec x}\widehat{\vec{\nu}}\Bigr) \;\fix{-\; mC_{m,\ell}^{2}}\,\nabla_{\vec x}\!\cdot\!\bigl(\zeta_{\mathrm{ang}}\,\nabla_{\vec x}\dot{\widehat{\vec{\nu}}}\bigr)\,\Bigr] \;=\; \vec{0}.
\end{equation}
This prescribes that the component orthogonal to $\widehat{\vec{\nu}}$ of the bracketed director operator vanish. Exactly as in \cite[eq.~(6.14)]{farrell}, this constraint is expressed through a Lagrange multiplier $\mathfrak{t}(\vec{x},t)$, determined by the unit-length constraint on the director, and the angular-momentum balance becomes the \emph{director equation}
\begin{equation}
	\label{eq:director_equation}
	\rho\,C_{m,\ell}\,\ddot{\widehat{\vec{\nu}}} \;-\; \fix{m}\,\nabla_{\vec x}\!\cdot\!\Bigl[\bigl(K - \zeta_{\mathrm{aniso}}\,\mathcal{B}(A)\bigr)\,\nabla_{\vec x}\widehat{\vec{\nu}}\Bigr] \;\fix{-\; mC_{m,\ell}^{2}}\,\nabla_{\vec x}\!\cdot\!\bigl(\zeta_{\mathrm{ang}}\,\nabla_{\vec x}\dot{\widehat{\vec{\nu}}}\bigr) \;=\; \mathfrak{t}\,\widehat{\vec{\nu}}.
\end{equation}
The three terms are, respectively, the rotational inertia of the rods, the elastic restoring force weakened by the anisotropic dissipative coupling, and the spin diffusion of \eqref{eq:hybrid_M}. In the inviscid limit $\tau \to 0$ all dissipative coefficients vanish and \eqref{eq:director_equation} reduces to the inertial director equation of \cite{farrell}.

For ease of reference we collect the closed macroscopic system derived in this work for the unknowns $(\rho, \vec{u}, \widehat{\vec{\nu}}, \theta, \mathfrak{t})$:
\begin{subequations}
\label{eq:final_system}
\begin{align}
	&\partial_t\rho + \nabla_{\vec{x}}\!\cdot\!(\rho\vec{u}) = 0, \\
	&\rho\bigl[\partial_t\vec{u} + (\nabla_{\vec{x}}\vec{u})\vec{u}\bigr] = \nabla_{\vec{x}}\!\cdot\!\Bigl[-p\,\mat{I} - K\,(\nabla_{\vec x}\widehat{\vec{\nu}}\odot\nabla_{\vec x}\widehat{\vec{\nu}}) + 2\zeta_{\mathrm{shear}}\,\mathbb{D}^d\nonumber\\
	&\hspace{11em}+ \zeta_{\mathrm{iso}}\,(\nabla_{\vec{x}}\!\cdot\!\vec{u})\,\mat{I} + \zeta_{\mathrm{aniso}}\,\mathcal{B}(A)\,(\nabla_{\vec x}\widehat{\vec{\nu}}\odot\nabla_{\vec x}\widehat{\vec{\nu}})\Bigr], \\
	&\rho\,C_{m,\ell}\,\ddot{\widehat{\vec{\nu}}} - \fix{m}\,\nabla_{\vec x}\!\cdot\!\Bigl[\bigl(K - \zeta_{\mathrm{aniso}}\,\mathcal{B}(A)\bigr)\nabla_{\vec x}\widehat{\vec{\nu}}\Bigr] \fix{-\, mC_{m,\ell}^{2}}\,\nabla_{\vec x}\!\cdot\!\bigl(\zeta_{\mathrm{ang}}\,\nabla_{\vec x}\dot{\widehat{\vec{\nu}}}\bigr) = \mathfrak{t}\,\widehat{\vec{\nu}}, \\
	&\rho\bigl[\partial_t \widetilde{e} + \vec{u}\cdot\nabla_{\vec{x}} \widetilde{e}\bigr] - \mathbb{T}\!:\!\nabla_{\vec{x}}\vec{u} - \fix{\tfrac{1}{mC_{m,\ell}}}\mathbb{M}\!:\!\nabla_{\vec{x}}\vec{\mu} - \nabla_{\vec{x}}\!\cdot\!\bigl(\kappa\,\nabla_{\vec{x}}\theta\bigr) = \fix{0}, \\
	&\fix{p = R_s\,\rho\,\theta}, \qquad \theta = \tfrac{2}{5R_s}\,\widetilde{e}, \qquad \abs{\widehat{\vec{\nu}}} = 1,
\end{align}
\end{subequations}
where $\mathbb{T}$ and $\mathbb{M}$ in the energy balance are the closed stresses of \eqref{eq:hybrid_T}--\eqref{eq:hybrid_M}, $\vec{\mu} = C_{m,\ell}\widehat{\vec{\nu}}\times\dot{\widehat{\vec{\nu}}}$, $\widetilde{e} = e_{\mathrm{tr}} + e_{\mathrm{rot}}$ the kinetic internal energy, \fix{$K = K_{\mathrm{kin}}$ the Frank modulus} of \Cref{rmk:Kkin_emerges}, and $\mathcal{B}(A)$ as in \eqref{eq:hybrid_B}. \fix{The energy balance is written for the kinetic internal energy $\widetilde{e}$ and is source-free by \eqref{eq:energy_balance_law}: under the micro--macro identification the entire Vlasov power is absorbed by the budget of the bulk intrinsic angular momentum, see \Cref{rmk:mu_source_in_energy}.} The system \eqref{eq:final_system} is a compressible, viscous, heat-conducting and spin-diffusive variant of the Leslie--Ericksen equations, reducing to the inviscid compressible system of \cite{farrell} when $\tau \to 0$.}

\uz{\subsection{The incompressible model}\label{sec:hybrid_incompressible}
The Rajagopal--Srinivasa procedure delivers the incompressible counterpart of \eqref{eq:final_system} with no additional effort: it suffices to append to the constrained maximisation of \Cref{sec:hybrid_RSP} a further Lagrange multiplier $\mathfrak{p}(\vec{x},t)$ enforcing the divergence-free constraint on the affinities,
\begin{equation}
	\mathrm{maximize}_{A}\ \theta\xi(A) + \Lambda\bigl(\theta\xi(A)-\theta\xi(J,A)\bigr) + \mathfrak{p}\,A_{\mathrm{tr}\mathbb{T}},
	\qquad
	A_{\mathrm{tr}\mathbb{T}} = \nabla_{\vec{x}}\!\cdot\!\vec{u} = 0.
\end{equation}
The stationarity conditions for the deviatoric, couple-stress and heat affinities are unchanged, so the closures for $\mathbb{J}_{\mathbb{T}^d}$, $\mathbb{J}_{\mathbb{M}}$ and $\vec{J}_{\vec{Q}}$ carry over verbatim, with $\mathcal{B}(A)$ evaluated on divergence-free fields. The stationarity condition in the trace direction instead reads
\begin{equation}
	\frac{\partial\,\theta\xi(A)}{\partial A_{\mathrm{tr}\mathbb{T}}} = \frac{\Lambda}{1+\Lambda}\,J_{\mathrm{tr}\mathbb{T}} - \frac{\mathfrak{p}}{1+\Lambda}.
\end{equation}
The spherical part of the stress is no longer determined constitutively but acts as the \emph{reaction} to the constraint, exactly as the pressure of the incompressible Navier--Stokes equations. Writing $\widetilde{\mathfrak{p}}$ for the resulting reaction pressure, which absorbs $p$, $\zeta_{\mathrm{iso}}\nabla_{\vec x}\!\cdot\!\vec{u}$ and the trace of the anisotropic component, and is determined by the incompressibility constraint rather than by the equation of state, the incompressible \emph{inhomogeneous} fluid model reads
\begin{subequations}
\label{eq:incompressible_system}
\begin{align}
	&\nabla_{\vec{x}}\!\cdot\!\vec{u} = 0, \qquad \partial_t\rho + \vec{u}\cdot\nabla_{\vec{x}}\rho = 0,\\
	&\rho\bigl[\partial_t\vec{u} + (\nabla_{\vec{x}}\vec{u})\vec{u}\bigr] = -\nabla_{\vec{x}}\widetilde{\mathfrak{p}} + \nabla_{\vec{x}}\!\cdot\!\Bigl[- K\,(\nabla_{\vec x}\widehat{\vec{\nu}}\odot\nabla_{\vec x}\widehat{\vec{\nu}}) + 2\zeta_{\mathrm{shear}}\,\mathbb{D} + \zeta_{\mathrm{aniso}}\,\mathcal{B}(A)\,(\nabla_{\vec x}\widehat{\vec{\nu}}\odot\nabla_{\vec x}\widehat{\vec{\nu}})^{d}\Bigr], \\
	&\rho\,C_{m,\ell}\,\ddot{\widehat{\vec{\nu}}} - \fix{m}\,\nabla_{\vec x}\!\cdot\!\Bigl[\bigl(K - \zeta_{\mathrm{aniso}}\,\mathcal{B}(A)\bigr)\nabla_{\vec x}\widehat{\vec{\nu}}\Bigr] \fix{-\, mC_{m,\ell}^{2}}\,\nabla_{\vec x}\!\cdot\!\bigl(\zeta_{\mathrm{ang}}\,\nabla_{\vec x}\dot{\widehat{\vec{\nu}}}\bigr) = \mathfrak{t}\,\widehat{\vec{\nu}}, \\
	&\rho\bigl[\partial_t \widetilde{e} + \vec{u}\cdot\nabla_{\vec{x}} \widetilde{e}\bigr] - \mathbb{T}\!:\!\nabla_{\vec{x}}\vec{u} - \fix{\tfrac{1}{mC_{m,\ell}}}\mathbb{M}\!:\!\nabla_{\vec{x}}\vec{\mu} - \nabla_{\vec{x}}\!\cdot\!\bigl(\kappa\,\nabla_{\vec{x}}\theta\bigr) = \fix{0}, \qquad \abs{\widehat{\vec{\nu}}} = 1,
\end{align}
\end{subequations}
in which $\mathcal{B}(A) = (\nabla_{\vec x}\widehat{\vec{\nu}}\odot\nabla_{\vec x}\widehat{\vec{\nu}})^{d}\!:\!\mathbb{D} \fix{-} C_{m,\ell}^{-1}(\widehat{\vec{\nu}}\times\nabla_{\vec x}\widehat{\vec{\nu}})\!:\!\nabla_{\vec x}\vec{\mu}$ on divergence-free fields. For homogeneous initial data $\rho \equiv \mathrm{const}$, the first line reduces to the usual incompressibility statement, and \eqref{eq:incompressible_system} is an incompressible, viscous, spin-diffusive Leslie--Ericksen-type model with one-constant Frank elasticity.}
\fix{Equation~\eqref{eq:incompressible_system} is the incompressible specialisation of the present viscous theory, which generalises the inviscid Leslie--Ericksen system of Farrell, Russo and Zerbinati \cite{farrell} to include viscous, heat-conducting and spin-diffusive effects. The transport coefficients $\zeta_{\mathrm{shear}}$, $\zeta_{\mathrm{aniso}}$, $\zeta_{\mathrm{ang}}$ and $\kappa$ are all proportional to the relaxation time $\tau$, so in the inviscid limit $\tau\to 0$ the dissipative terms vanish and the director dynamics reduces to the Ericksen elasticity of \cite{farrell} with the one-constant Frank modulus $K = K_{\mathrm{kin}} = C_{m,\ell}\,p/m$ of \Cref{rmk:Kkin_emerges}. The exact term-by-term dictionary between the conjugate-momentum variables used here and the angular-velocity variables of \cite{farrell} is recorded in \Cref{app:energy_balance_xi}, see \eqref{eq:app_frz_dictionary}.}

\uz{\begin{remark}[\fix{Recovery of the Clausius--Duhem identity}]\label{rmk:no_interaction_limit}
\fix{Because the constitutive free energy carries no interaction term, the heat flux is the bare kinetic flux $\vec{Q}$ and} \Cref{cor:first_order_entropy_production} \fix{is} the local Clausius--Duhem identity $\theta\,\xi = (\mathbb{T}+p\mat{I})\!:\!\nabla_{\vec{x}}\vec{u} + \mathbb{M}\!:\!\nabla_{\vec{x}}\vec{\mu} - \vec{Q}^{(1)}\!\cdot\!\nabla_{\vec{x}}\theta/\theta$, the natural extension of \cite[Theorem~1]{farrell} to the ordered fluid setting, and the Rajagopal--Srinivasa procedure of \uz{\Cref{sec:hybrid_RSP}} coincides pointwise with \cite[Principle~1]{farrell}.
\end{remark}}

\begin{remark}[Role of the alignment hypothesis]\label{rmk:alignment_in_CE}
	The closure of \Cref{thm:hybrid_closure} inherits two distinct uses of the aligned regime $s\to 1$ from~\Cref{sec:ce}. The first is structural: the explicit form of $f^{(1)}$ in~\eqref{eq:ce_f1_explicit} is computed under the assumption that the orientational distribution $\lambda$ is concentrated about the director $\widehat{\vec{\nu}}$, which makes $\mathcal{D}\log\lambda$ vanish and renders the streaming-derivative calculation tractable. The second is that the Vlasov power $\rho\cchevrons{\vec{\mathcal V}\cdot\dot{\vec{\nu}}}$ that appears on the right-hand side of \uz{the entropy-production identity \eqref{eq:entropy_production_constitutive}} could be cancelled \emph{pointwise} via the orthogonality identity $\vec{\mathcal V}\cdot\dot{\vec{\nu}} = s(\widehat{\vec{\nu}}\cdot\dot{\widehat{\vec{\nu}}}) = 0$ at $s\to 1$. \uz{The Helmholtz free-energy formulation of \Cref{sec:helmholtz}} resolves the second use \emph{globally} via \Cref{lemma:vlasov_power}, regardless of the value of $s$, by absorbing the integrated Vlasov power into $d\mathcal{U}_{\mathrm{int}}/dt$. The first use, by contrast, remains necessary for the Chapman--Enskog hypothesis about the structure of $f^{(1)}$ and is not lifted by the present reformulation. Relaxing this hypothesis would require a more general Chapman--Enskog expansion in which $\mathcal{D}\log\lambda$ is retained and a constitutive relation for $\lambda$-relaxation is derived from the Vlasov self-consistency equation. We leave this generalisation for future work.
\end{remark}

\fix{\begin{remark}[Frank constant]\label{rmk:frank_constant_augmented}
The Ericksen-stress identification produced by the closure \eqref{eq:hybrid_T}, when read against the zeroth-order rotational specific energy of \Cref{rmk:Kkin_emerges}, gives the one-constant Frank modulus $K = K_{\mathrm{kin}} = C_{m,\ell}\,p_{\mathrm{tr,rot}}/m$, the inviscid modulus of \cite{farrell}.
\end{remark}}

\ack{This work was funded by 
the Engineering and Physical Sciences Research Council [grant number EP/W026163/1],
the Science and Technology Facilities Council [grant number UKRI/ST/B000495/1],
the Donatio Universitatis Carolinae Chair ``Mathematical modelling of multicomponent systems'', 
the UKRI Digital Research Infrastructure Programme through the Science and Technology Facilities Council's Computational Science Centre for Research Communities (CoSeC),
and
the Swedish Research Council under grant no.~Z2021-06594 while in residence at Institut Mittag-Leffler in Djursholm, Sweden.
The authors gratefully acknowledge the hospitality of the Erwin Schrödinger International Institute for Mathematics and Physics, Vienna, where part of this work was carried out.
J.~M\'{a}lek acknowledges the support of the project No.~25-16592S financed by the Czech Science Foundation (GA\v{C}R). P.~E.~Farrell, J.~M\'{a}lek and O.~Sou\v{c}ek are members of the Nečas Center for Mathematical Modeling. For the purpose of open access, the authors have applied a CC BY public copyright licence to any author accepted manuscript arising from this submission.
No new data were generated or analysed during this study.}

\appendix
\makeatletter\gdef\theequation{\thesection\hskip2pt\@arabic\c@equation}\makeatother
\cla{
\section{Derivation of the Maxwellian}\label{app:maxwellian_derivation}
The Maxwellian distribution \eqref{eq:maxwellian} defining the BGK equilibrium of \Cref{sec:bgk} arises as the unique distribution annihilating the full collision integral \eqref{eq:collision} on its collision invariants. In this appendix we give a self-contained derivation, following the lines of \cite[Section~3.4 and Theorem~3.13]{carrillo}.

By the Boltzmann inequality \cite[Theorem~3.12]{carrillo}, which for the rigid-rod rule \eqref{eq:binary_rule} can be proved via the detailed balance condition \cite{curtissV} or, for more general molecular shapes, via the reciprocity balance of \cite{cercignaniLampis}, $\int \mathcal{Q}(f,f)\log f\,d\vec\Xi = 0$ if and only if $\log f$ is a collision invariant for the operator $\mathcal{Q}$ of \eqref{eq:collision}. The collision invariants identified in \eqref{eq:collinv} generate the affine span of
\begin{equation}
	\bigl\{ 1,\ \vec p,\ \tfrac{1}{2}\bigl(\abs{\vec p}^2/m + \vec\varsigma^T \mat{B}^{-1}\vec\varsigma\bigr),\ \vec{\nu}\times\vec\varsigma + \vec x\times\vec p \bigr\},
\end{equation}
together with arbitrary functions of $\vec{\nu}$ alone, which are annihilated by the collision integral because $\vec{\nu}$ is preserved by the binary rule \eqref{eq:binary_rule}. Equality in the Boltzmann inequality therefore forces $\log f$ into this span,
\begin{equation}
	\label{eq:maxwellian_logf}
	\log f \;=\; a(\vec x, t) \;+\; \vec b(\vec x, t)\cdot\vec p \;+\; \vec d(\vec x, t)\cdot(\vec{\nu}\times\vec\varsigma) \;-\; c(\vec x, t)\,\tfrac{1}{2}\bigl(\abs{\vec p}^2/m + \vec\varsigma^T \mat{B}^{-1}\vec\varsigma\bigr) \;+\; g\bigl(\vec{\nu}, t\bigr),
\end{equation}
for scalar fields $a, c$, vector fields $\vec b, \vec d$ depending on $(\vec x, t)$ alone, and an arbitrary function $g(\vec{\nu}, t)$. Completing the square in $\vec p$ produces a Gaussian in the peculiar velocity $\vec V = \vec v - \vec u$ with $\vec u$ fixed by $\vec b$. Completing the square in $\vec\varsigma$ couples the angular-momentum term $\vec d\cdot(\vec{\nu}\times\vec\varsigma) = (\vec d\times\vec{\nu})\cdot\vec\varsigma$ to the rotational quadratic, recentring the rotational Gaussian at the rigid co-rotation $C_{m,\ell}(\vec{\Omega}\times\vec{\nu})$ with $\vec{\Omega} = \vec d/c$, so the peculiar conjugate momentum is $\vec\Sigma = \vec\varsigma - C_{m,\ell}(\vec{\Omega}\times\vec{\nu})$ and the Gaussian width is fixed by $c$. The drift is the rigid co-rotation rather than a uniform shift of $\vec\varsigma$ precisely because the conserved rotational invariant is $\vec{\nu}\times\vec\varsigma$ and not $\vec\varsigma$ itself. Identifying the macroscopic density $\rho = m\int f\,d\vec p\,d\vec{\nu}\,d\vec\varsigma$, the bulk velocity $\vec u = \cchevrons{\vec v}$, the bulk intrinsic angular momentum $\vec\mu = \cchevrons{\vec{\nu}\times\vec\varsigma}$, which fixes $\vec{\Omega}$ through $\vec\mu = C_{m,\ell}(\mat{I} - \mat{S})\vec{\Omega}$ with $\mat{S} = \int_{\mathcal{M}}\lambda\,\vec{\nu}\otimes\vec{\nu}\,d\vec{\nu}$, the orientational distribution $\lambda(\vec{\nu}, t) \propto \exp(g(\vec{\nu}, t))$, and the temperature parameter $\theta = 1/(k_B c)$, the equality case \eqref{eq:maxwellian_logf} reproduces the Maxwellian \eqref{eq:maxwellian}. At this stage $\theta$ is merely the Lagrange multiplier dual to the kinetic energy; it acquires its thermodynamic meaning on evaluating the total internal energy density at the Maxwellian,
\begin{equation}
	\rho \widetilde{e}^{(0)} \;=\; \frac{1}{2}\rho\cchevrons{\abs{\vec V}^2} + \frac{1}{2 \fix{m} C_{m,\ell}}\rho\cchevrons{\abs{\vec\Sigma}^2} \;=\; \frac{3}{2}\rho k_B\theta/m \;+\; \rho k_B\theta/m \;=\; \frac{5}{2}\,\rho k_B\theta/m,
\end{equation}
whence the equipartition relation $\theta = (2/(5R_s))\,\widetilde{e}$ with $R_s = k_B/m$ identifies $\theta$ with the thermodynamic temperature $\theta = \partial \widetilde{e}/\partial\eta^{(0)}$ of \eqref{def:temperature}. The temperature appearing in \eqref{eq:maxwellian} is therefore, by construction, the unique solution of the caloric identity $\widetilde{e} = (5/2)R_s\theta$ with $\widetilde{e}$ the total internal energy of the distribution $f$.}

\fix{\section{The balance laws of angular momentum and energy in the conjugate-momentum variables}\label{app:energy_balance_xi}
In this appendix we derive the angular-momentum and energy balance laws of \Cref{sec:hyd} from the moment equation \eqref{eq:enskog_moment_equation}, without invoking the micro--macro identification, and we record the exact correspondence with the angular-velocity formulation of \cite{farrell}. The balance \eqref{eq:balance_law_angular_momentum} carries the mean-field torque, absent in \cite{farrell}, so the reduction of the rotational kinetic energy to its internal part, which multiplies \eqref{eq:balance_law_angular_momentum} by $\vec{\mu}/(mC_{m,\ell})$, generates the work exerted by the mean-field torque, $\rho\,\vec{\mu}\cdot\cchevrons{\vec{\nu}\times\vec{\mathcal{V}}}/(mC_{m,\ell})$, and the derivation below exhibits where this term ends up in the energy budget.

Throughout we write $\vec{\lambda} := \vec{\nu}\times\vec{\varsigma}$ for the specific intrinsic angular momentum, $\vec{\mu} = \cchevrons{\vec{\lambda}}$, $\vec{\Lambda} = \vec{\lambda} - \vec{\mu}$ for its peculiar part as in \eqref{eq:peculiar_angular_momentum}, and $\mathbb{M} = -\rho\cchevrons{\vec{V}\otimes\vec{\Lambda}}$ for the couple stress, which under the micro--macro identification coincides with \eqref{eq:couple_stress} by \Cref{rmk:peculiar_angular_momentum}, with the flux index first, so that $(\nabla_{\vec{x}}\cdot\mathbb{M})_j = \partial_i \mathbb{M}_{ij}$ and $\mathbb{M}:\nabla_{\vec{x}}\vec{\mu} = \mathbb{M}_{ij}\partial_i\mu_j$. Two pointwise identities, both consequences of the rigid-rod constraint $\vec{\varsigma}\cdot\vec{\nu} = 0$, are used repeatedly,
\begin{equation}
	\label{eq:app_pointwise}
	\abs{\vec{\lambda}}^2 = \abs{\vec{\varsigma}}^2,
	\qquad
	\vec{\mathcal{V}}\cdot\vec{\varsigma} = \vec{\lambda}\cdot(\vec{\nu}\times\vec{\mathcal{V}}).
\end{equation}

\subsection{The angular-momentum balance}
Substituting $\psi = \vec{\lambda}$ in the moment equation \eqref{eq:enskog_moment_equation} and using $\cchevrons{\dot{\vec{\nu}}\cdot\nabla_{\vec{\nu}}\vec{\lambda}} = \cchevrons{\dot{\vec{\nu}}\times\vec{\varsigma}} = \vec{0}$ together with $\cchevrons{\vec{\mathcal{V}}\cdot\nabla_{\vec{\varsigma}}\vec{\lambda}} = \cchevrons{\vec{\nu}\times\vec{\mathcal{V}}}$ gives
\begin{equation}
	\label{eq:app_mu_pre}
	\partial_t(\rho\vec{\mu}) + \nabla_{\vec{x}}\cdot\bigl(\rho\cchevrons{\vec{v}\otimes\vec{\lambda}}\bigr) = \rho\cchevrons{\vec{\nu}\times\vec{\mathcal{V}}}.
\end{equation}
The flux decomposes exactly, $\cchevrons{\vec{v}\otimes\vec{\lambda}} = \vec{u}\otimes\vec{\mu} + \cchevrons{\vec{V}\otimes\vec{\lambda}}$ and $\cchevrons{\vec{V}\otimes\vec{\lambda}} = \cchevrons{\vec{V}\otimes\vec{\Lambda}}$, because $\cchevrons{\vec{V}} = \vec{0}$ and $\vec{\mu}$ is independent of the phase variables. No decorrelation hypothesis between $\vec{V}$ and $\vec{\nu}$ is needed, which implies we are not allowed to move $\vec{V}$ out of the above averages. Notice that far from an aligned state such decorrelation would be unphysical hence why in this appendix we have dropped this hypothesis. With the continuity equation \eqref{eq:continuity_eq},
\begin{equation}
	\label{eq:app_mu_balance}
	\rho\bigl[\partial_t\vec{\mu} + (\nabla_{\vec{x}}\vec{\mu})\vec{u}\bigr] - \nabla_{\vec{x}}\cdot\mathbb{M} = \rho\cchevrons{\vec{\nu}\times\vec{\mathcal{V}}},
\end{equation}
which is \eqref{eq:balance_law_angular_momentum}.

\subsection{The rotational, kinetic and internal energy}
Substituting the rotational kinetic energy $\psi = \abs{\vec{\varsigma}}^2/(2C_{m,\ell})$ in \eqref{eq:enskog_moment_equation}, dividing by $m$ to work with specific energies, and using \eqref{eq:app_pointwise} together with $\cchevrons{\dot{\vec{\nu}}\cdot\nabla_{\vec{\nu}}\abs{\vec{\lambda}}^2} = 0$ yields the balance of the total specific rotational kinetic energy,
\begin{equation}
	\label{eq:app_rot_total}
	\partial_t\left(\rho\,\frac{\cchevrons{\abs{\vec{\lambda}}^2}}{2mC_{m,\ell}}\right) + \nabla_{\vec{x}}\cdot\left(\rho\,\frac{\cchevrons{\vec{v}\abs{\vec{\lambda}}^2}}{2mC_{m,\ell}}\right) = \frac{\rho}{mC_{m,\ell}}\cchevrons{\vec{\lambda}\cdot(\vec{\nu}\times\vec{\mathcal{V}})}.
\end{equation}
Both sides split along $\vec{\lambda} = \vec{\mu} + \vec{\Lambda}$. For the densities and the source,
\begin{equation}
	\cchevrons{\abs{\vec{\lambda}}^2} = \abs{\vec{\mu}}^2 + \cchevrons{\abs{\vec{\Lambda}}^2},
	\qquad
	\cchevrons{\vec{\lambda}\cdot(\vec{\nu}\times\vec{\mathcal{V}})} = \vec{\mu}\cdot\cchevrons{\vec{\nu}\times\vec{\mathcal{V}}} + \cchevrons{\vec{\Lambda}\cdot(\vec{\nu}\times\vec{\mathcal{V}})},
\end{equation}
while the flux carries a cross term proportional to the couple stress,
\begin{equation}
	\label{eq:app_flux_split}
	\rho\,\frac{\cchevrons{\vec{v}\abs{\vec{\lambda}}^2}}{2mC_{m,\ell}}
	= \rho\vec{u}\,\frac{\abs{\vec{\mu}}^2 + \cchevrons{\abs{\vec{\Lambda}}^2}}{2mC_{m,\ell}}
	- \frac{\mathbb{M}\vec{\mu}}{mC_{m,\ell}}
	+ \rho\,\frac{\cchevrons{\vec{V}\abs{\vec{\Lambda}}^2}}{2mC_{m,\ell}},
\end{equation}
since $\rho\cchevrons{\vec{V}(\vec{\mu}\cdot\vec{\Lambda})} = -\mathbb{M}\vec{\mu}$.

Dotting \eqref{eq:app_mu_balance} with $\vec{\mu}/(mC_{m,\ell})$ and using the continuity equation to pass to conservative form gives the budget of the kinetic energy of the bulk intrinsic angular momentum,
\begin{equation}
	\label{eq:app_bulk_spin}
	\partial_t\left(\rho\,\frac{\abs{\vec{\mu}}^2}{2mC_{m,\ell}}\right) + \nabla_{\vec{x}}\cdot\left(\rho\vec{u}\,\frac{\abs{\vec{\mu}}^2}{2mC_{m,\ell}} - \frac{\mathbb{M}\vec{\mu}}{mC_{m,\ell}}\right) + \frac{\mathbb{M}:\nabla_{\vec{x}}\vec{\mu}}{mC_{m,\ell}} = \frac{\rho}{mC_{m,\ell}}\,\vec{\mu}\cdot\cchevrons{\vec{\nu}\times\vec{\mathcal{V}}}.
\end{equation}
This is where the product of $\vec{\mu}$ with the source of the angular-momentum balance lives: the work exerted by the mean-field torque contributes to the kinetic energy $\abs{\vec{\mu}}^2/(2mC_{m,\ell})$ of the bulk intrinsic angular momentum, not the internal energy. Subtracting \eqref{eq:app_bulk_spin} from \eqref{eq:app_rot_total}, the cross-flux $\mathbb{M}\vec{\mu}/(mC_{m,\ell})$ cancels and the rotational internal energy $e_{\mathrm{rot}} = \cchevrons{\abs{\vec{\Lambda}}^2}/(2mC_{m,\ell})$ obeys
\begin{equation}
	\label{eq:app_erot_balance}
	\rho\bigl[\partial_t e_{\mathrm{rot}} + \vec{u}\cdot\nabla_{\vec{x}} e_{\mathrm{rot}}\bigr] - \frac{\mathbb{M}:\nabla_{\vec{x}}\vec{\mu}}{mC_{m,\ell}} + \nabla_{\vec{x}}\cdot\left(\rho\,\frac{\cchevrons{\vec{V}\abs{\vec{\Lambda}}^2}}{2mC_{m,\ell}}\right) = \frac{\rho}{mC_{m,\ell}}\cchevrons{\vec{\Lambda}\cdot(\vec{\nu}\times\vec{\mathcal{V}})}.
\end{equation}
Notice that the peculiar Vlasov power survives on the right-hand side. Under the micro--macro identification the torque factors out of the average, $\cchevrons{\vec{\Lambda}\cdot(\vec{\nu}\times\vec{\mathcal{V}})} = \cchevrons{\vec{\Lambda}}\cdot(\widehat{\vec{\nu}}\times\vec{\mathcal{V}}) = 0$, while $e_{\mathrm{rot}} = \cchevrons{\abs{\vec{\Sigma}}^2}/(2mC_{m,\ell})$ by \Cref{rmk:peculiar_angular_momentum}, and \eqref{eq:app_erot_balance} reduces to the source-free balance \eqref{eq:energy_balance_law_final}.

The translational component is handled exactly as in \cite[Section 3]{malek}, subtracting $\vec{u}\cdot\eqref{eq:linear_momentum_balance_law}$ from the moment equation for $\abs{\vec{p}}^2/(2m)$. Adding the result to \eqref{eq:app_erot_balance} produces the internal energy balance
\begin{equation}
	\label{eq:app_etilde_general}
	\rho\bigl[\partial_t \widetilde{e} + \vec{u}\cdot\nabla_{\vec{x}} \widetilde{e}\bigr] - \mathbb{T}:\nabla_{\vec{x}}\vec{u} - \frac{\mathbb{M}:\nabla_{\vec{x}}\vec{\mu}}{mC_{m,\ell}} + \nabla_{\vec{x}}\cdot\vec{Q} = \frac{\rho}{m}\cchevrons{\vec{\mathcal{V}}\cdot\dot{\vec{\nu}}} - \frac{\rho}{mC_{m,\ell}}\,\vec{\mu}\cdot\cchevrons{\vec{\nu}\times\vec{\mathcal{V}}},
\end{equation}
with $\widetilde{e}$ and $\vec{Q}$ as in \Cref{sec:hyd}, which reduces to the source-free \eqref{eq:energy_balance_law} under the micro--macro identification by \eqref{eq:Vlasov_power}.

\subsection{Correspondence with the angular-velocity formulation}
Under the micro--macro identification the dictionary between the conjugate-momentum variables used here and the angular-velocity variables of \cite{farrell} is exact. With $\vec{\omega} = \widehat{\vec{\nu}}\times\dot{\vec{\nu}}$ the angular velocity of a vanishing-girth rod, whose axial spin is suppressed, and $\mathbb{I} = C_{m,\ell}$ the transverse moment of inertia of \Cref{sec:vanishing_girth},
\begin{equation}
	\label{eq:app_frz_dictionary}
	\vec{\varsigma} = C_{m,\ell}\dot{\vec{\nu}}, \qquad
	\vec{\lambda} = \mathbb{I}\vec{\omega}, \qquad
	\vec{\mu} = \cchevrons{\mathbb{I}\vec{\omega}}, \qquad
	\vec{\Lambda} = \mathbb{I}\vec{\Omega}, \qquad
	\vec{\Omega} := \vec{\omega} - \cchevrons{\vec{\omega}},
\end{equation}
so the intrinsic angular momentum $\vec{\eta} = \cchevrons{\mathbb{I}\vec{\omega}}$ of \cite[eq.~2.19]{farrell} coincides with $\vec{\mu}$, their angular-momentum flux $\cchevrons{\vec{V}\otimes\mathbb{I}\vec{\omega}} = \cchevrons{\vec{V}\otimes\mathbb{I}\vec{\Omega}}$ coincides with $-\mathbb{M}/\rho$, and their balance \cite[eq.~2.22c]{farrell},
\begin{equation}
	\rho\bigl[\partial_t\vec{\eta} + (\nabla_{\vec{x}}\vec{\eta})\vec{u}\bigr] + \nabla_{\vec{x}}\cdot\bigl(\rho\cchevrons{\vec{V}\otimes\mathbb{I}\vec{\omega}}\bigr) = \vec{0},
\end{equation}
is \eqref{eq:app_mu_balance} with vanishing right-hand side: the mean-field torque $\rho\cchevrons{\vec{\nu}\times\vec{\mathcal{V}}} = \rho\,\widehat{\vec{\nu}}\times\vec{\mathcal{V}}$ is the only structural difference between the two angular-momentum balances. The energy reduction of \cite[eqs.~3.7--3.9]{farrell} converts term by term as well: their rotational internal-energy density $n\cchevrons{\tfrac{1}{2}\vec{\Omega}\cdot\mathbb{I}\vec{\Omega}}$ equals $\rho\,e_{\mathrm{rot}}$, their work term $n\cchevrons{\vec{V}\otimes\mathbb{I}\vec{\Omega}}:\nabla_{\vec{x}}\cchevrons{\vec{\omega}}$ equals $-\mathbb{M}:\nabla_{\vec{x}}\vec{\mu}/(mC_{m,\ell})$, and their rotational heat flux $n\cchevrons{\vec{V}\tfrac{1}{2}\vec{\Omega}\cdot\mathbb{I}\vec{\Omega}}$ equals $\rho\cchevrons{\vec{V}\abs{\vec{\Sigma}}^2/(2mC_{m,\ell})}$. Their reduction step, the scalar product of \cite[eq.~2.22c]{farrell} with $\cchevrons{\vec{\omega}}$, here produces in addition the work exerted by the torque on the right-hand side of \eqref{eq:app_bulk_spin}, which by \eqref{eq:Vlasov_power} absorbs the entire Vlasov power. At every aligned Gibbs equilibrium the Vlasov torque is parallel to the director, see \Cref{corollary:kuramoto,corollary:onsager}, and every torque term vanishes, so the closure of \Cref{sec:hybrid} is common to the two formulations.}


\bibliographystyle{RS}

\bibliography{sample}

\end{document}